\documentclass[11pt]{amsart}
\usepackage{fullpage,graphicx,mathpazo,color}
\usepackage{amsmath,amscd,amssymb}
\usepackage{mathrsfs}
\usepackage{subfigure}
\usepackage{epstopdf}
\usepackage{amsmath}
\usepackage{tikz}
\allowdisplaybreaks[4]
\usepackage[colorlinks, citecolor=blue,pagebackref,hypertexnames=false]{hyperref}

\begin{document}

\newtheorem{theorem}{Theorem}[section]
\newtheorem{lemma}{Lemma}[section]
\newtheorem{proposition}{Proposition}[section]
\newtheorem{corollary}{Corollary}[section]
\newtheorem{remark}{Remark}[section]
\numberwithin{equation}{section}

\newcommand{\bfR}{{\Bbb R}}
\newcommand{\bfC}{{\Bbb C}}
\newcommand{\bfZ}{{\Bbb Z}}
\newcommand{\ii}{\text{i}}
\newcommand{\e}{\text{e}}
\newcommand{\dd}{\text{d}}
\newcommand{\Om}{\omega}
\newcommand{\nn}{\nonumber}
\newcommand\be{\begin{equation}}
\newcommand\ee{\end{equation}}
\newcommand{\bea}{\begin{eqnarray}}
\newcommand{\eea}{\end{eqnarray}}
\newcommand\berr{\begin{eqnarray*}}
\newcommand\eerr{\end{eqnarray*}}

\newcommand{\change}[1]{{\color{blue}#1}}
\newcommand{\note}[1]{{\color{red}#1}}

\title{A New Two-component Sasa--Satsuma Equation: Large-time Asymptotics on the Line}

\author{Xiaodan Zhao}
\address{ School of Control and Computer Engineering, North China Electric Power University, Beijing 102206, P.R. China}
 \email{xiaodan\_zhao0224@163.com (X. Zhao)}

\author{Lei Wang$^*$}
\address{School of Mathematics and Physics, North China Electric Power University, Beijing 102206, P.R. China}
\email{50901924@ncepu.edu.cn (L. Wang)}

\keywords{Two-component Sasa--Satsuma equation; Cauchy problem; Long-time asymptotic behavior; Riemann--Hilbert problem; Nonlinear steepest descent method}

\begin{abstract}
We consider the initial value problem for a new two-component Sasa--Satsuma equation associated with $4\times4$ Lax pair with decaying initial data on the line. By utilizing the spectral analysis, the solution of the new two-component Sasa--Satsuma system is transformed into the solution of a $4\times4$ matrix Riemann--Hilbert problem. Then the long-time asymptotics of the solution is obtained by means of the nonlinear steepest descent method of Deift and Zhou for oscillatory Riemann--Hilbert problems. We show that there are three main regions in the half-plane $-\infty<x<\infty$, $t>0$, where the asymptotics has qualitatively different forms: a left fast decaying sector, a central Painlev\'e sector where the asymptotics is described in terms of the solution of a new coupled Painlev\'e II equation which is related to a $4\times4$ matrix Riemann--Hilbert problem, and a right slowly decaying oscillatory sector.

\end{abstract}
\date{}

\thanks{$^*$Corresponding author.}
\maketitle

\section{Introduction}
The one-dimensional cubic NLS (nonlinear Schr\"odinger) equation
\be
\ii q_T+\frac{1}{2}q_{XX}+|q|^2q=0,
\ee
is a universal model for the evolution of quasi-monochromatic waves in weakly nonlinear dispersive media \cite{BN-JMP}. It has important applications in many different physical contexts, such as deep water waves, nonlinear fiber optics, acoustics, plasma physics, Bose--Einstein condensation (e.g., see \cite{APT,CGT,P1983,S-Sulem} and references therein). One of the most successful related applications is the description of optical solitons in fibres. However, in order to illustrate the propagation of the ultrashort (femtosecond) optical pulses, the NLS equation becomes less accurate \cite{R-OL}, and thus some additional effects such as the third-order dispersion, self-steepening and stimulated Raman scattering should be added to meet this requirement. In this setting, Kodama and Hasegawa \cite{KH} proposed the higher-order NLS equation
\be\label{HONLS}
\ii q_T+\frac{1}{2}q_{XX}+|q|^2q
+\ii\varepsilon\left\{\beta_1q_{XXX}+\beta_2|q|^2q_X+\beta_3q(|q|^2)_X\right\}=0,
\ee
where $\varepsilon$ is a small parameter and represents the integrable perturbation of the NLS equation, $\beta_1$, $\beta_2$ and $\beta_3$ are real parameters.

In general, Equation \eqref{HONLS} is not completely integrable, unless certain restrictions are imposed
on $\beta_1$, $\beta_2$ and $\beta_3$. In particular, when the ratio of their coefficients satisfies $\beta_1:\beta_2:\beta_3=1:6:3$, the Equation \eqref{HONLS} can be immediately reduced to the well-known integrable Sasa--Satsuma equation \cite{SS} in the form
\be\label{SS-Eq}
\ii q_T+\frac{1}{2}q_{XX}+|q|^2q
+\ii\varepsilon\left\{q_{XXX}+6|q|^2q_X+3q(|q|^2)_X\right\}=0.
\ee
For the convenience of analyzing the Sasa--Satsuma equation \eqref{SS-Eq}, according to \cite{SS}, one can introduce variable transformations
\begin{align}
u(x,t)=q(X,T)\exp\left\{-\frac{\ii}{6\varepsilon}\left(X-\frac{T}{18\varepsilon}\right)\right\},
\ t=T,\ x=X-\frac{T}{12\varepsilon},
\end{align}
then Equation \eqref{SS-Eq} reduces to a complex modified KdV (Korteweg--de Vries)-type equation
\begin{equation}\label{1.1}
u_t+\varepsilon\{u_{xxx}+6|u|^2u_x+3u(|u|^2)_x\}=0.
\end{equation}

On account of its integrability and physical implications, the Sasa--Satsuma equation has attracted much attention and various works have been presented since it was discovered. For instance, the double hump soliton solutions of the Sasa--Satsuma equation have been obtained in \cite{MTMPT,SS} by means of inverse scattering approach. While, its multi-soliton solutions have been constructed in \cite{GHNO} by the Kadomtsev--Petviashvili hierarchy reduction method. Besides, the Darboux transformation \cite{LLM} and the RH (Riemann--Hilbert) problem approach \cite{YC} were also imposed separately on this equation to obtain the high-order soliton solutions. Moreover, breather and rogue wave solutions for Sasa--Satsuma equation were also derived \cite{ASDH,CSH,MQ,WWSF}. In addition to the initial-boundary value problem for Sasa--Satsuma equation on the half-line and a finite interval were also investigated via the unified transform method in \cite{JX-PA} and \cite{XZF}, respectively. Beyond that, the long-time asymptotic behaviour of the solution to Sasa--Satsuma equation \eqref{1.1} with decaying initial data were analyzed in \cite{LGX,LG-SS} and \cite{HL-JDE} respectively in the sectors $0<c_1<x/t<c_2$ and $|x|\leq c_3t^{1/3}$ by using the nonlinear steepest descent method for oscillatory Riemann--Hilbert problems. Very recently, data-driven solutions and parameter discovery of the Sasa--Satsuma equation was studied via the physics-informed neural networks method in \cite{LWZ-PhysD}.

Since various complex systems such as multimode or wavelength-division multiplexing fibers usually involve more than one component, the studies should be extended to multi-component Sasa--Satsuma equation cases \cite{G-MMAS,LTYD,NPSM,W-ND,ZWM}. Based on this fact, in present paper, we will consider a new integrable two-component Sasa--Satsuma equation \cite{G-MMAS,H-AMP,W-ND},
\be\label{SS}
\begin{aligned}
u_t&=u_{xxx}+6|u|^2u_x+3(|u|^2)_xu+3w(uw)_x,\\
w_t&=w_{xxx}+6|u|^2w_x+3(|u|^2)_xw+6w^2w_x,
\end{aligned}
\ee
where $u$ is a complex-valued function, $w$ is a real-valued function. It is readily to see that when $w=0$, the new two-component Sasa--Satsuma equation \eqref{SS} can be reduced to the Sasa--Satsuma equation \eqref{1.1} with $\varepsilon=-1$. On the basis of spectral analysis of the $4\times4$ matrix Lax pair for the two-component Sasa--Satsuma equation, the $N$-soliton formulas expressed by the ratios of determinants were discussed in \cite{W-ND} via the RH approach, moreover, the traveling soliton, breather soliton and rogue wave solutions have been constructed by Darboux transformation method in \cite{G-MMAS}. Recently, the initial-boundary value problem of Equation \eqref{SS} on the half-line has been solved with the aid of the unified transformation method \cite{H-AMP}.

The inverse scattering transform based on the RH problem is a very powerful tool in the study of the nonlinear integrable equations. It can obtain explicit soliton solutions for the integrable systems under reflectionless potentials condition. However, as is well-known, one can not solve the RH problems in a closed form unless in the case of reflectionless potentials. As a consequence, the study on large-time asymptotic behavior of solutions becomes an attractive topic in integrable systems. There were a number of progresses in this formidable subject \cite{AS-SAM,Its,ZM}, nevertheless, the nonlinear steepest descent method for oscillatory Riemann--Hilbert problems proposed by Deift and Zhou \cite{PD} turned out to be a great achievement in the further development of analyzing the long-time asymptotics for the initial value problems of integrable nonlinear evolution equations. Up to now, with this method, numerous new significant long-time asymptotic results for various nonlinear completely integrable models associated with $2\times2$ matrix spectral problems were obtained in a rigorous and transparent form (see \cite{AL-Non,BJM,BIK-CMP,RB,CL,LTYF,LG-JDE,WGC-PhysD,XJ-JDE}). Recently, the long-time asymptotics
for some integrable nonlinear evolution equations associated with higher-order matrix Lax pairs were studied in accordance with the procedures of the Deift--Zhou nonlinear steepest descent method, such as
Degasperis--Procesi equation \cite{BS-DP}, coupled NLS equation \cite{GL-JNS}, Sasa--Satsuma equation \cite{HL-JDE,LGX,LG-SS}, Spin-1 Gross--Pitaevskii equation \cite{GWC-CMP}, matrix modified KdV equation \cite{LZG}, three-component coupled NLS system \cite{MWX} and so on.

The main goal of the present paper is to extend the nonlinear steepest descent method to study the long-time asymptotic behavior for the Cauchy problem of the new two-component Sasa--Satsuma equation \eqref{SS} associated with a $4\times4$ Lax pair on the line with the initial data
\be\label{1.7}
u(x,0)=u_0(x),\quad w(x,0)=w_0(x),
\ee
where $u_0(x)$ and $w_0(x)$ belong to the Schwartz space $\mathcal{S}(\bfR)$. Our first step is to formulate the main matrix RH problem corresponding to Cauchy problem \eqref{SS}-\eqref{1.7}. The most outstanding structure of this system is that it admits a $4\times4$ matrix spectral problem, however, all the $4\times4$ matrices in this paper can be rewritten as $2\times2$ block ones. Thus we can directly formulate the $4\times4$ matrix RH problem by the combinations of the entries in matrix-valued eigenfunctions instead of using the Fredholm integral equation to construct another set of eigenfunctions \cite{BS-DP,CJ-PD}. As a consequence, a RH representation of the solution of the Cauchy problem \eqref{SS}-\eqref{1.7} is given (Theorem \ref{th2.1}). Then, this representation obtained allows us to apply the nonlinear steepest descent method for the associated $4\times4$ matrix RH problem and to obtain a detailed description for the leading-order term of the asymptotics of the solution.

We will first consider the asymptotic behavior of the solution in oscillatory sector characterized by \eqref{3.1} (Theorem \ref{th3.2}). It is noted that the phase function $\Phi(x,t;\xi)$ of $\e^{\pm t\Phi(x,t;\xi)}$ involved in the jump matrix has two stationary points in this region. This immediately leads us to introduce a $3\times3$ matrix-valued function $\delta(\xi)$ function to remove the middle matrix term when we split the jump matrix into an appropriate upper/lower triangular form. However, the function $\delta$ cannot be solved explicitly since it satisfies a $3\times3$ matrix RH problem. Recalling that the topic of our paper is studying the asymptotic behavior of solution, we can replace function $\delta(\xi)$ with $\det\delta(\xi)$ by adding an error term by following the idea first introduced in \cite{GL-JNS}. Then the exact solution of a class of model RH problem that is relevant near the critical points is derived (Theorem \ref{th3.1}), which generalizes the model problem considered in \cite{GL-JNS,LGX,LG-SS,MWX} and also can be used to analyze the long-time asymptotics of other integrable models. Next, we study the asymptotics of solution in Painlev\'e sector given in \eqref{4.1} (Theorem \ref{th4.2}). Our main result shows that the leading-order asymptotics for Equation \eqref{SS} is depicted in terms of the solution of a new coupled Painlev\'e II equation \eqref{B.5}. Interestingly, we noticed that the functions $u_p(y)$ and $w_p(y)$ of solution of \eqref{B.5} are complex-valued and real-valued functions, respectively, moreover, $u_p(y)$ has constant phase, which is very different from the result obtained for the matrix modified KdV equation in same region \cite{LZG}. This innovative work enriches the Painlev\'e asymptotic theory in the field of long-time dynamic analysis for integrable systems. Finally, the asymptotic behavior of the solution in the fast decay sector \eqref{5.1} is derived by performing a trivial contour deformation (Theorem \ref{th5.1}).

The organization of this paper is as follows. In Section \ref{sec2}, a basic $4\times4$ matrix RH problem with the aid of the inverse scattering method is constructed, whose solution gives the solution of the initial value problem \eqref{SS}-\eqref{1.7}, where the Lax pair of the new two-component Sasa--Satsuma equation and the relevant matrices are written as block forms. Sections \ref{sec3}-\ref{sec5} perform the asymptotic analysis of this RH problem leading to asymptotic formulas for the solution. We mainly analyze three regions in the $(x,t)$-half-plane where the asymptotic behavior of the solution is qualitatively different: (i) A slowly decaying oscillatory sector (Theorem \ref{th3.2}), (ii) A Painlev\'e sector (Theorem \ref{th4.2}), (iii) A fast decay sector (Theorem \ref{th5.1}). Appendix \ref{secA} is devoted to give the proof of Theorem \ref{th3.1}. The RH problem associated with the new coupled Painlev\'e II equation is discussed in Appendix \ref{secB}.

\section{Basic Riemann--Hilbert problem}\label{sec2}

An essential ingredient in the following analysis is the $4\times4$ matrix Lax pair \cite{W-ND} of the new two-component Sasa--Satsuma equation \eqref{SS}, which reads
\begin{align}
&\psi_x=U\psi=(-\ii\xi\sigma+U_1)\psi,\label{2.1}\\
&\psi_t=V\psi=(4\ii\xi^3\sigma+V_1)\psi,\label{2.2}
\end{align}
where $\psi(x,t;\xi)$ is a $4\times4$ matrix-valued function, $\xi$ is the spectral parameter,  $\sigma=\text{diag}(-1,1,1,1)$. The $4\times4$ matrix-valued functions
\begin{align}
U_1(x,t)=&\begin{pmatrix}
0 & q \\
-q^\dag & \mathbf{0}_{3\times3}
\end{pmatrix},\quad q(x,t)=\begin{pmatrix}-u&-u^*&-w\end{pmatrix},\label{2.3}\\
V_1(x,t;\xi)=&-4\xi^2U_1+2\ii\xi(U_1^2+U_{1x})\sigma+[U_1,U_{1x}]+U_{1xx}-2U_1^3.\label{2.4}
\end{align}
where $``*"$ and $``\dag"$ denote complex conjugation of a complex number and Hermitian conjugation of a complex matrix or vector, respectively. A direct calculation shows that the zero-curvature equation $U_t-V_x+[U,V]=0$ is equivalent to the new two-component Sasa--Satsuma equation \eqref{SS}. On the other hand, it is noted the matrices $U$ and $V$ obey the symmetry conditions:
\begin{align}
U(x,t;\xi)&=-U^\dag(x,t;\xi^*),\quad V(x,t;\xi)=-V^\dag(x,t;\xi^*),\label{2.5}\\
U(x,t;\xi)&=\mathcal{A}U^*(x,t;-\xi^*)\mathcal{A}, \quad V(x,t;\xi)=\mathcal{A}V^*(x,t;-\xi^*)\mathcal{A},\label{2.6}
\end{align}
where
\be
\mathcal{A}=\begin{pmatrix}1 & \mathbf{0}_{1\times3} \\ \mathbf{0}_{3\times1} & \sigma_1
\end{pmatrix},\quad \sigma_1=\begin{pmatrix} 0&1&0\\1&0&0\\0&0&1\end{pmatrix}.
\ee

Introducing a new eigenfunction $\mu(x,t;\xi)$ by
\begin{equation}
\mu(x,t;\xi)=\psi(x,t;\xi) \e^{\i(\xi x-4\xi^3t)\sigma},
\end{equation}
we obtain
\begin{align}
&\mu_x=-\ii\xi[\sigma,\mu]+U_1\mu,\label{2.9}\\
&\mu_t=4\ii\xi^3[\sigma,\mu]+V_1\mu,\label{2.10}
\end{align}
We now define two Jost solutions $\mu_\pm(x,t;\xi)$ of \eqref{2.9} for $\xi\in\bfR$ with $\mu_\pm(x,t;\xi)\to\mathbb{I}_{4\times4}$ as $x\to\pm\infty$ by the following Volterra integral equations
\be
\mu_\pm(x,t;\xi)=\mathbb{I}_{4\times4}+\int_{\pm\infty}^x\e^{\ii\xi(y-x)\sigma}[U_1(y,t)\mu_\pm(y,t;\xi)]
\e^{-\ii\xi(y-x)\sigma}\dd y.\label{2.11}
\ee
Denote $\mu_{\pm L}(x,t;\xi)$ and $\mu_{\pm R}(x,t;\xi))$ be the first column and last three columns of the $4\times4$ matrices $\mu_\pm(x,t;\xi)$. Then, the following are consequences of standard analysis of the iterates that:

$(i)$ $\mu_{+L}$, $\mu_{-R}$ are analytic in $\bfC_+$ and can be continuously extended to $\bfC_+\cup\bfR$, $(\mu_{+L},\mu_{-R})\rightarrow \mathbb{I}_{4\times4}$ as $\xi\rightarrow\infty$;

$(ii)$ $\mu_{-L}$, $\mu_{+R}$ are analytic in $\bfC_-$ and can be continuously extended to $\bfC_-\cup\bfR$, $(\mu_{-L},\mu_{+R})\rightarrow \mathbb{I}_{4\times4}$ as $\xi\rightarrow\infty$.

Since $\mu_\pm(x,t;\xi)$ are both fundamental matrices of solutions of \eqref{2.9}, thus, they satisfy the scattering relation
\be\label{2.12}
\mu_+(x,t;\xi)=\mu_-(x,t;\xi)\e^{-\ii(\xi x-4\xi^3t)\sigma}s(\xi)\e^{\ii(\xi x-4\xi^3t)\sigma},\quad \xi\in\bfR.
\ee
Evaluation at $x\rightarrow-\infty,\ t=0$ gives
\be\label{2.13}
s(\xi)=\lim_{x\rightarrow-\infty}\e^{\ii\xi x\sigma}\mu_+(x,0;\xi)\e^{-\ii\xi x\sigma},
\ee
that is,
\be\label{2.14}
s(\xi)=\mathbb{I}_{4\times4}-\int_{-\infty}^{+\infty}\e^{\ii\xi x\sigma}[U(x,0)\mu_+(x,0;\xi)]\e^{-\ii\xi x\sigma}\dd x.
\ee
This implies that the scattering matrix $s(\xi)$ can be determined in terms of the initial values $u_0(x)$ and $w_0(x)$.

The symmetries in \eqref{2.5} and \eqref{2.6} implies that
\be\label{2.15}
\mu_\pm^{-1}(x,t;\xi)=\mu_\pm^\dag(x,t;\xi^*),\quad \mu_\pm(x,t;\xi)=\mathcal{A}\mu_\pm^*(x,t;-\xi^*)\mathcal{A}.
\ee
Moreover, the tracelessness of $U(x,t)$ shows that $\det[\mu_\pm(x,t;\xi)]=1.$ Then \eqref{2.12} yields $\det[s(\xi)]=1.$
By \eqref{2.15}, the $4\times4$ matrix-valued spectral function $s(k)$ obeys the symmetries
\be\label{2.16}
s^{-1}(\xi)=s^\dag(\xi^*),\quad s(\xi)=\mathcal{A}s^*(-\xi^*)\mathcal{A},\quad \xi\in\bfR.
\ee
It follows from the first symmetry in \eqref{2.16} that we can write $s(\xi)$ as
\be\label{2.17}
s(\xi)=\begin{pmatrix}
\det[a^\dag(\xi^*)] & b(\xi)\\
-\text{adj}[a^\dag(\xi^*)]b^\dag(\xi^*) & a(\xi)
\end{pmatrix},\quad\xi\in\bfR,
\ee
where adj$(A)$ denotes the adjoint matrix of $A$ in the context of linear algebra, $a(\xi)$ is a $3\times3$ matrix-valued function and $b(\xi)$ is a $1\times3$ row vector-valued function. On the other hand, it is easy to see from \eqref{2.14} that $a(\xi)$ is analytic for $\xi\in\bfC_-$, however, $b(\xi)$ is only defined in $\bfR$. Furthermore, we also from the second symmetry in \eqref{2.16} have
\be
a(\xi)=\sigma_1a^*(-\xi^*)\sigma_1,\quad \xi\in\bfC_-;\quad b(\xi)=b^*(-\xi^*)\sigma_1,\quad \xi\in\bfR.
\ee

To exclude soliton-type phenomena, for convenience, we assume that $\det[a(\xi)]$ has no zeros in $\bfC_-$. Then, we have the following main result in this section, which shows how solutions of \eqref{SS} can be constructed by a basic $4\times4$ matrix Riemann--Hilbert problem.
\begin{theorem}\label{th2.1}
Define a piecewise meromorphic $4\times4$ matrix-valued function
\be
M(x,t;\xi)=\left\{
\begin{aligned}
&\left(\mu_{+L}(x,t;\xi)\det[a^\dag(\xi^*)]^{-1},\mu_{-R}(x,t;\xi)\right),\quad \xi\in\bfC_+,\\
&\left(\mu_{-L}(x,t;\xi),\mu_{+R}(x,t;\xi)a^{-1}(\xi)\right),\qquad\qquad \xi\in\bfC_-,
\end{aligned}
\right.
\ee
and the $4\times4$ matrix-valued jump matrix $J(x,t;\xi)$ by
\be\label{2.20}
J(x,t;\xi)=\begin{pmatrix}
1+\gamma(\xi)\gamma^\dag(\xi) & \gamma(\xi)\e^{2\ii(\xi x-4\xi^3t)}\\
\gamma^\dag(\xi)\e^{-2\ii(\xi x-4\xi^3t)} & \mathbb{I}_{3\times3}
\end{pmatrix},
\ee
where
\be\label{2.21}
\gamma(\xi)=-b(\xi)a^{-1}(\xi),\quad \gamma(\xi)=\gamma^*(-\xi^*)\sigma_1, \quad \xi\in\bfR.
\ee
Then the following $4\times4$ matrix RH problem:\\
$\bullet$ $M(x,t;\xi)$ is a sectionally meromorphic function with respect to $\bfR$;\\
$\bullet$ The limiting values $M_\pm(x,t;\xi)=\lim_{\varepsilon\to0}M(x,t;\xi\pm\ii\varepsilon)$ satisfy the jump condition $M_+(x,t;\xi)=M_-(x,t;\xi)J(x,t;\xi)$ for $\xi\in\bfR$;\\
$\bullet$ As $\xi\rightarrow\infty$, $M(x,t;\xi)$ has the asymptotics: $M(x,t;\xi)=\mathbb{I}_{4\times4}+O\left(\xi^{-1}\right)$;\\
has a unique solution for each $(x,t)\in\bfR^2$.

Moreover, define $\{u(x,t),w(x,t)\}$ in terms of $M(x,t;\xi)$ by
\begin{align}
\begin{pmatrix}u(x,t) & u^*(x,t) & w(x,t)\end{pmatrix}=2\ii\lim_{\xi\rightarrow\infty}(\xi M(x,t;\xi))_{12},\label{2.22}
\end{align}
which solves the Cauchy problem of new two-component Sasa--Satsuma equation \eqref{SS}.
\end{theorem}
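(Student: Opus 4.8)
My plan is to establish Theorem~\ref{th2.1} in three stages: well-posedness of the RH problem (uniqueness, then existence), a dressing argument that turns the RH solution into a solution of \eqref{SS}, and verification of the initial data. For \emph{uniqueness}, I would first record that, since $\theta(\xi):=\xi x-4\xi^3t$ is real for $\xi\in\bfR$, the jump matrix \eqref{2.20} is Hermitian, $J=J^\dag$, and a Schur-complement computation against the $\mathbb{I}_{3\times3}$ block gives simultaneously $\det J\equiv1$ and positive definiteness (the Schur complement equals $1+\gamma\gamma^\dag-\gamma\gamma^\dag=1>0$). From $\det J\equiv1$ the scalar $\det M$ has no jump, so it extends to an entire function tending to $1$, whence $\det M\equiv1$ and $M$ is everywhere invertible. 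If $M,\widetilde M$ are two solutions, then $M\widetilde M^{-1}$ has trivial jump, is entire, and tends to $\mathbb{I}_{4\times4}$, so Liouville's theorem forces $M=\widetilde M$. For \emph{existence}, I would recast the problem as a singular integral equation on $\bfR$ and apply the Fredholm alternative, the decisive input being Zhou's vanishing lemma: given a solution $M^\circ$ of the homogeneous problem ($M^\circ\to0$ at infinity), the map $\xi\mapsto M^\circ(\xi)\,[M^\circ(\xi^*)]^\dag$ is analytic and decaying in $\bfC_+$, so integrating its boundary value over $\bfR$ and adding the conjugate yields $\int_\bfR M^\circ_-(\xi)\,[J+J^\dag](\xi)\,[M^\circ_-(\xi)]^\dag\,\dd\xi=0$. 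As $J=J^\dag>0$, the integrand is positive semidefinite, forcing $M^\circ_-\equiv0$ and hence $M^\circ\equiv0$; the vanishing lemma removes the kernel of the index-zero operator, giving solvability for every $(x,t)$.

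For the \emph{dressing}, the key algebraic observation is that the $(x,t)$-dependence of the jump is a pure conjugation, $J(x,t;\xi)=\e^{-\ii\theta\sigma}J_0(\xi)\e^{\ii\theta\sigma}$ with $J_0$ independent of $(x,t)$. Hence $\Psi:=M\,\e^{-\ii\theta\sigma}$ has the constant jump $J_0$, so $(\partial_x\Psi)\Psi^{-1}$ and $(\partial_t\Psi)\Psi^{-1}$ carry no jump across $\bfR$ and are entire. Feeding in the expansion $M=\mathbb{I}_{4\times4}+M_1\xi^{-1}+M_2\xi^{-2}+\cdots$ and applying Liouville, the first is the linear polynomial $-\ii\xi\sigma+U_1$ with $U_1=-\ii[M_1,\sigma]$, and the second is the cubic $4\ii\xi^3\sigma+V_1$. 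Thus $\Psi$ satisfies both halves of the Lax pair \eqref{2.1}--\eqref{2.2}, and the compatibility $\Psi_{xt}=\Psi_{tx}$ is precisely the zero-curvature form of \eqref{SS}. A block computation of $[M_1,\sigma]$ shows $U_1$ has vanishing diagonal blocks and off-diagonal block $q=-2\ii(M_1)_{12}$, which matches the reconstruction \eqref{2.22}.

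The step I expect to be the main obstacle is proving that $U_1$ and $V_1$ recovered this way carry \emph{exactly} the structure \eqref{2.3}--\eqref{2.4}; in particular that the reconstructed vector is self-conjugate, $q=(-u,-u^*,-w)$ with $u$ complex and $w$ real. This is not automatic from the analytic argument and must be transported from the symmetries \eqref{2.15}--\eqref{2.16}. By the uniqueness just established, the matrices obtained from $M$ via the transformations in \eqref{2.15}--\eqref{2.16} solve the same RH problem, so $M$ inherits both symmetries $M^{-1}(\xi)=M^\dag(\xi^*)$ and $M(\xi)=\mathcal A M^*(-\xi^*)\mathcal A$. Expanding these at $\xi=\infty$ gives $M_1^\dag=-M_1$, which produces the lower off-diagonal block $-q^\dag$ of $U_1$, and $M_1=-\mathcal A M_1^*\mathcal A$, which forces the middle entry of $q$ to equal $u^*$ and the third entry to be real. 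The precise identity for $V_1$ in \eqref{2.4} then follows by substituting the recovered $x$-equation into the coefficients $M_1,M_2,M_3$ so as to express $V_1$ through $U_1$ and its $x$-derivatives; this bookkeeping is where the $4\times4$ block structure must be handled most carefully.

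Finally, for the \emph{initial condition} I would note that at $t=0$ the reflection data $\gamma(\xi)$ in \eqref{2.21} is exactly the one generated by $u_0,w_0$ through \eqref{2.13}--\eqref{2.14}. Hence the matrix assembled from the Jost functions $\mu_\pm(x,0;\xi)$ solves the $t=0$ RH problem, and the reconstruction \eqref{2.22} applied to it returns $u_0,w_0$. Uniqueness of the RH solution then guarantees $u(x,0)=u_0(x)$ and $w(x,0)=w_0(x)$, completing the identification of the constructed pair as the solution of the Cauchy problem \eqref{SS}--\eqref{1.7}.
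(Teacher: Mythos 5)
Your proposal is correct and takes essentially the same route as the paper's (very terse) proof: uniqueness by Liouville's theorem, existence via Zhou's vanishing lemma keyed to the Hermitian positive-definite jump symmetry \eqref{2.24}, and recovery of the potentials by inserting the large-$\xi$ expansion \eqref{2.25} into the Lax equation \eqref{2.9}, which yields $U_1=\ii[\sigma,M_1]$ and hence \eqref{2.22}. The additional details you supply — the Schur-complement computation giving $\det J\equiv1$ and $J>0$, the dressing argument for both Lax equations, the symmetry inheritance $M^{-1}(\xi)=M^\dag(\xi^*)$ and $M(\xi)=\mathcal{A}M^*(-\xi^*)\mathcal{A}$ pinning down the structure $q=(-u,-u^*,-w)$ with $w$ real, and the $t=0$ verification via the Jost functions — are exactly the steps the paper leaves implicit, and they check out.
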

\begin{proof}
It is a simple consequence of Liouville's theorem that if a solution exists, it is unique. The existence of solution of RH problem follows by means of Zhou's vanishing lemma argument \cite{ZX} since
\be\label{2.24}
J(x,t;\xi)=J^\dag(x,t;\xi^*)=\mathcal{A}J^*(x,t;-\xi^*)\mathcal{A},\quad \xi\in\bfR.
\ee
Expanding this solution as $\xi\to\infty$,
\be\label{2.25}
M(x,t;\xi)=\mathbb{I}_{4\times4}+\frac{M_1(x,t)}{\xi}+\frac{M_2(x,t)}{\xi^2}+O(\xi^{-3})
\ee
and inserting this into equation \eqref{2.9} one finds that the solution of \eqref{SS} is given by \eqref{2.22}.

\end{proof}

\section{Asymptotic analysis in oscillating sector $\mathcal{O}$}\label{sec3}
The representation of the solution of the Cauchy problem for a nonlinear integrable equation in terms of the solution of an associated RH problem makes it possible to analyze the long-time asymptotics via the Deift--Zhou steepest descent method \cite{PD}. The goal of this section is devoted to deriving the long-time asymptotic behavior of solution for the new two-component Sasa--Satsuma equation \eqref{SS} in oscillating sector $\mathcal{O}$ defined by
\be\label{3.1}
\mathcal O=\{(x,t)\in\bfR^2|t>1,\ 0<x\leq Nt,\ x^3/t\to\infty\},\, N \,\text{constant}.
\ee

We first prove an important result (Theorem \ref{th3.1}), which expresses the large $z$ behavior of solution of a model RH problem in terms of the solution of parabolic cylinder functions and will be very useful in the study of long-time asymptotics in the oscillating sector.
\subsection{A model RH problem}\label{sec3.1}
\begin{figure}[htbp]
  \centering
  \includegraphics[width=3.5in]{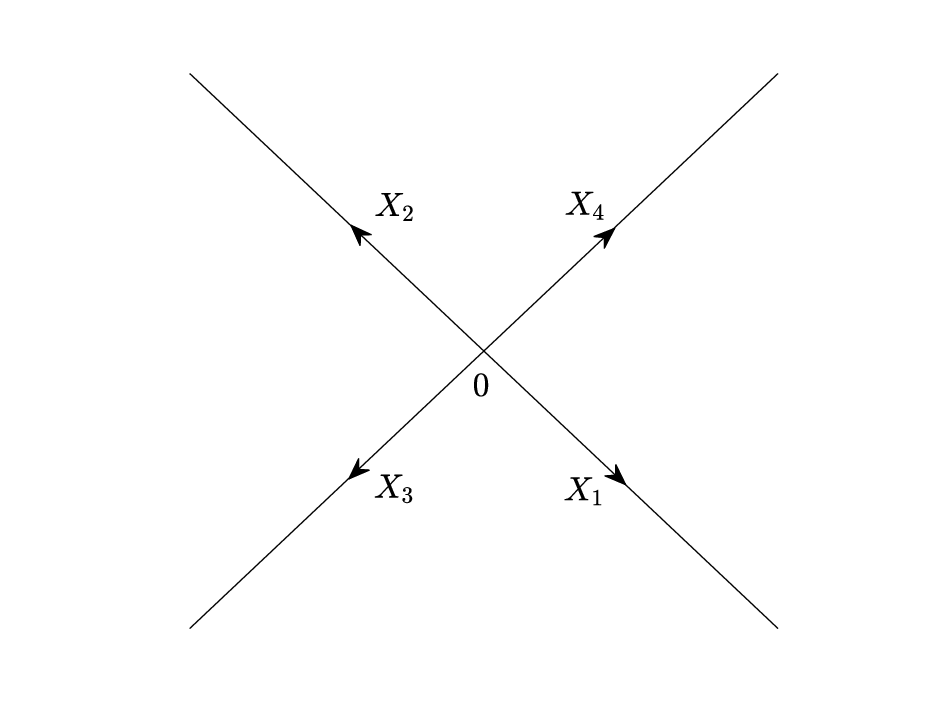}
  \caption{The oriented contour $X=\cup_{j=1}^4X_j$.}\label{fig1}
\end{figure}
Define the oriented contour $X=\cup_{j=1}^4X_j$  by
\be\label{3.2}
\begin{aligned}
X_1&=\{\epsilon\e^{-\frac{\ii\pi}{4}}|0\leq\epsilon<\infty\},\,\,\,\,\ X_2=\{\epsilon\e^{\frac{3\ii\pi}{4}}|0\leq\epsilon<\infty\},\\
X_3&=\{\epsilon\e^{-\frac{3\ii\pi}{4}}|0\leq \epsilon<\infty\},\,\,\
 X_4=\{\epsilon\e^{\frac{\ii\pi}{4}}|0\leq\epsilon<\infty\},
\end{aligned}
\ee
see Figure \ref{fig1}. For a $1\times3$ complex-valued row vector $\rho$, define the function $\nu$ by $\nu(\rho)=\frac{1}{2\pi}\ln(1+\rho\rho^\dag)>0$ and the $4\times 4$ jump matrix by
\be\label{3.3}
J^X(\rho;z)=\left\{
\begin{aligned}
&\begin{pmatrix}
1 & \rho\e^{-\frac{\ii z^2}{2}}z^{-2\ii\nu(\rho)}\\
\textbf{0}_{3\times1} & \mathbb{I}_{3\times3}
\end{pmatrix},\qquad\quad\,\ z\in X_1,\\
&\begin{pmatrix}
1 & -\frac{\rho}{1+\rho\rho^\dag}\e^{-\frac{\ii z^2}{2}}z^{-2\ii\nu(\rho)}\\
\textbf{0}_{3\times1} & \mathbb{I}_{3\times3}
\end{pmatrix}, \,\,\,\ z\in X_2,\\
&\begin{pmatrix}
1 & \textbf{0}_{1\times3}\\
-\frac{\rho^\dag}{1+\rho\rho^\dag}\e^{\frac{\ii z^2}{2}}z^{2\ii\nu(\rho)} & \mathbb{I}_{3\times3}
\end{pmatrix},\qquad z\in X_3,\\
&\begin{pmatrix}
1 & \textbf{0}_{1\times3}\\
\rho^\dag\e^{\frac{\ii z^2}{2}}z^{2\ii\nu(\rho)} & \mathbb{I}_{3\times3}
\end{pmatrix},\qquad\qquad z\in X_4.
\end{aligned}
\right.
\ee
We consider the following RH problem:\\
$\bullet$ $M^X(\rho;z)$ is analytic for $z\in\bfC\setminus X$ and extends continuously to $X$;\\
$\bullet$ Across $X$, the boundary values $M^X_\pm$ satisfy the jump relation $M^X_+(\rho;z)=M^X_-(\rho;z)J^X(\rho;z)$;\\
$\bullet$ $M^X(\rho;z)\rightarrow \mathbb{I}_{4\times4}$, as $z\rightarrow\infty$.

\begin{theorem}\label{th3.1}
The RH problem has a unique solution $M^X(\rho;z)$ for each $1\times3$ row vector $\rho$, and this solution satisfies
\be\label{3.4}
M^X(\rho;z)=\mathbb{I}_{4\times4}+\frac{M^X_1(\rho)}{z}+O\left(z^{-2}\right),\quad z\rightarrow\infty,
\ee
where
\be\label{3.5}
\left(M^X_1\right)_{12}(\rho)=-\ii\beta^X,\, \left(M^X_1\right)_{21}(\rho)=-\ii\left(\beta^X\right)^\dag,\,
\beta^X=\frac{\nu\Gamma(-\ii\nu)\e^{\frac{\ii\pi}{4}-\frac{\pi\nu}{2}}}{\sqrt{2\pi}}\rho,
\ee
where $\Gamma(\cdot)$ denotes the standard Gamma function. Moreover, for each compact subset $\mathcal{D}$ of $\bfC$,
\be\label{3.6}
\sup_{\rho\in\mathcal{D}}\sup_{z\in\bfC\setminus X}|M^X(\rho;z)|<\infty.
\ee
\end{theorem}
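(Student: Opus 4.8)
The plan is to reduce this $4\times4$ model problem to the classical parabolic-cylinder model problem by exploiting the block structure of the jump matrix. The key observation is that all the off-diagonal data in $J^X$ lives in the first row / first column blocks, with $\rho$ a $1\times3$ vector. I would first perform a constant (in $z$) rotation that collapses the three-dimensional data $\rho$ onto a single direction. Concretely, choose a unitary $3\times3$ matrix $Q$ with $\rho Q = (|\rho|, 0, 0)$ where $|\rho|=\sqrt{\rho\rho^\dag}$, and conjugate the whole problem by $\mathrm{diag}(1,Q)$. Since $\nu(\rho)$ depends on $\rho$ only through $\rho\rho^\dag=|\rho|^2$, it is invariant, and the transformed jump matrix acts nontrivially only in the $2\times2$ block spanned by rows/columns $1$ and $2$, while the remaining $2\times2$ block (rows/columns $3,4$) carries the identity jump. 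This decouples the problem.

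Having decoupled, the nontrivial $2\times2$ piece is exactly the standard parabolic-cylinder model RH problem with scalar reflection coefficient $r:=|\rho|$ and $\nu=\frac{1}{2\pi}\ln(1+|\rho|^2)$, whose solution is classical (see the treatment in \cite{PD}, and the versions used in \cite{GL-JNS,LGX,LG-SS,MWX}). The next step is to recall/verify that this scalar problem is solved in closed form via parabolic cylinder functions $D_a(\cdot)$: one sets $M^X(z)=\Psi(z)\,\mathcal{P}(z)$ where $\Psi$ satisfies the constant-coefficient ODE $\frac{d\Psi}{dz}=\bigl[-\tfrac{\ii z}{2}\sigma_3^{(2)}+B/z\bigr]\Psi$ (here $\sigma_3^{(2)}$ the $2\times2$ Pauli matrix and $B$ an off-diagonal nilpotent determined by the jumps), and the explicit solution in terms of $D_a(\e^{\pm 3\ii\pi/4}z)$ yields the large-$z$ expansion. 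Reading off the $1/z$ coefficient gives the $(1,2)$ entry $-\ii\beta$ with $\beta=\tfrac{\nu\,\Gamma(-\ii\nu)\,\e^{\ii\pi/4-\pi\nu/2}}{\sqrt{2\pi}}\,r$ in the scalar case; rotating back by $\mathrm{diag}(1,Q^{-1})$ restores the vector $\rho$ and produces exactly \eqref{3.5}, with the $(2,1)$ entry following by the Hermitian symmetry of the jump.

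For uniqueness and existence I would argue as in Theorem \ref{th2.1}: uniqueness follows from Liouville's theorem once one checks $\det M^X\equiv 1$ (the jumps have unit determinant), and existence either transfers from the classical scalar result through the decoupling above or follows directly from a vanishing-lemma argument, since $J^X$ satisfies the Hermitian-type symmetry $J^X_\pm$ making the associated singular integral operator Fredholm of index zero with trivial kernel. The uniform bound \eqref{3.6} over compact $\rho$-sets is obtained by noting that the solution depends analytically on $\rho$ through the entire functions $D_a$ and the continuous data $\nu(\rho)$, $|\rho|$, so $M^X(\rho;z)$ is jointly continuous in $(\rho,z)$ and tends to $\mathbb{I}_{4\times4}$ as $z\to\infty$ uniformly on compacta; a standard normal-families / maximum-principle estimate then bounds it on $\bfC\setminus X$.

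The main obstacle I anticipate is handling the transformation at the contour carefully: the factors $z^{\pm 2\ii\nu}$ and $\e^{\mp \ii z^2/2}$ have branch cuts and oscillatory growth, so the reduction to the constant-coefficient ODE requires the standard trick of absorbing $z^{-2\ii\nu}\e^{-\ii z^2/4}$ (and its conjugate) into a diagonal conjugation to obtain a piecewise-constant jump, and then verifying that the resulting $\Psi$ is single-valued and has the correct Stokes structure across all four rays $X_1,\dots,X_4$. Tracking these branch choices consistently—so that the large-$z$ asymptotics of the parabolic cylinder functions match the prescribed jumps on each ray—is the delicate computational heart of the proof; everything else is bookkeeping enabled by the block/decoupling structure.
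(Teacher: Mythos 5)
Your proposal is correct but takes a genuinely different route from the paper's proof in Appendix \ref{secA}. You exploit the fact that the jump data is rank one in the vector part: conjugating by the constant unitary $\mathrm{diag}(1,Q)$ with $\rho Q=(|\rho|,0,0)$ leaves $\nu$ unchanged (since $\rho\rho^\dag$ is unitarily invariant) and decouples the $4\times4$ problem into the classical $2\times2$ parabolic-cylinder model with scalar reflection coefficient $r=|\rho|$ plus a trivial identity block; rotating back reproduces \eqref{3.5} exactly because $(|\rho|,0,0)Q^\dag=\rho$, and since the norm $|\cdot|$ is invariant under unitary conjugation, the uniform bound \eqref{3.6} transfers directly from the classical scalar bound over compact sets of $|\rho|$ --- arguably cleaner than the paper's bare appeal to the explicit solution. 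The paper instead works with the full $4\times4$ problem throughout: existence via Zhou's vanishing lemma using the symmetry $J^X(\rho;z)=\left(J^X\right)^\dag(\rho;z^*)$ in \eqref{A.1}, then the substitution $\Psi=M^Xz^{\ii\nu\sigma}\e^{\frac{\ii z^2}{4}\sigma}$ and a Liouville argument yield the ODE \eqref{A.4} with constant block off-diagonal coefficient $\beta$, from which the block-structured parabolic-cylinder equations \eqref{A.9}--\eqref{A.12} are solved in terms of $D_a$ and $\beta_{12}$ is pinned down by matching the jump across $\arg z=-\frac{\pi}{4}$ via the connection formula for $D_a$. The paper's direct computation is self-contained and would survive jump data that cannot be scalarized, whereas your reduction buys brevity by importing the known Deift--Zhou result; both establish uniqueness the same way in substance (Liouville after noting $\det J^X\equiv1$, respectively the vanishing lemma). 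Two small slips you should fix: after removing $z^{\ii\nu\sigma}\e^{\frac{\ii z^2}{4}\sigma}$ the residual ODE coefficient is linear in $z$ with a \emph{constant} off-diagonal term $\beta$, as in \eqref{A.4}, not a $B/z$ term, and this $\beta$ is off-diagonal but not nilpotent (both $\beta_{12}$ and $\beta_{21}$ are generically nonzero, with $\beta_{21}=-\beta_{12}^\dag$); also, the Liouville step requires checking that the apparent singularity of $\Psi_z\Psi^{-1}$ at the self-intersection point $z=0$ is removable, a point the classical treatment handles and which your sketch elides.
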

\begin{proof}
See Appendix \ref{secA}.
\end{proof}

\subsection{Transformations}
The Deift--Zhou nonlinear steepest descent method for RH problems consists of making a series of invertible transformations in order to arrive at a problem that can be approximated in the large-$t$ limit. For this purpose, we first note that the jump matrix $J(x,t;\xi)$ defined in \eqref{2.20} involves the exponentials $\e^{\pm t\Phi(x,t;\xi)}$, where $\Phi(x,t;\xi)$ is given by
\be
\Phi(x,t;\xi)=8\ii\xi^3-2\ii\xi\frac{x}{t}.
\ee
Suppose $(x,t)\in\mathcal{O}$. By solving the equation $\partial\Phi(x,t;\xi)/\partial \xi=0,$ we see that there are two real critical points located at
\be\label{3.8}
\pm \xi_0=\pm\sqrt{\frac{x}{12t}},
\ee
moreover, the signature table for Re$\Phi(x,t;\xi)$ is shown in Figure \ref{fig2}.
\begin{figure}[htbp]
  \centering
  \includegraphics[width=3.5in]{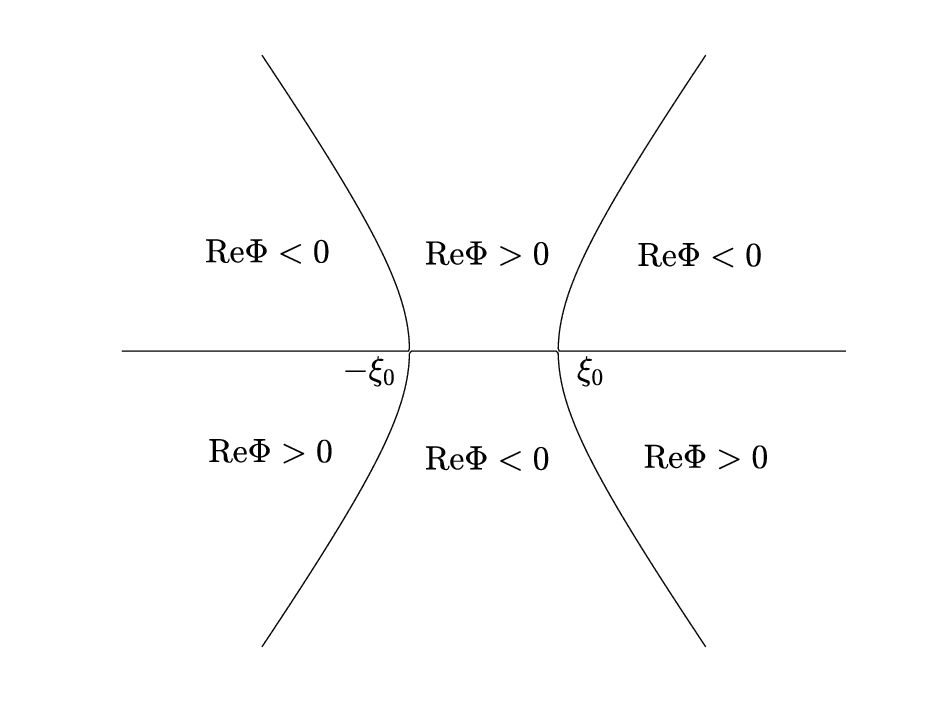}
  \caption{The signature table for Re$\Phi(x,t;\xi)$ and $\pm\xi_0$.}\label{fig2}
\end{figure}

The jump matrix $J(x,t;\xi)$ enjoys two distinct factorizations:
\be\label{3.9}
J(x,t;\xi)=\left\{
\begin{aligned}
&\begin{pmatrix}
1& \gamma\e^{-t\Phi}\\
\mathbf{0}_{3\times1} & \mathbb{I}_{3\times3}
\end{pmatrix}
 \begin{pmatrix}
1 & \mathbf{0}_{1\times3} \\
\gamma^\dag\e^{t\Phi} & \mathbb{I}_{3\times3}
\end{pmatrix},\\
&\begin{pmatrix}
1 & \mathbf{0}_{1\times3}\\
\frac{\gamma^\dag\e^{t\Phi}}{1+\gamma\gamma^\dag} & \mathbb{I}_{3\times3}
\end{pmatrix}
\begin{pmatrix}
1+\gamma\gamma^\dag & \mathbf{0}_{1\times3}\\
\mathbf{0}_{3\times1} & (\mathbb{I}_{3\times3}+\gamma^\dag\gamma)^{-1}
\end{pmatrix}
\begin{pmatrix}
1 & \frac{\gamma\e^{-t\Phi}}{1+\gamma\gamma^\dag} \\
\mathbf{0}_{3\times1} & \mathbb{I}_{3\times3}
\end{pmatrix}.
\end{aligned}
\right.
\ee
Thus, by signature table in Figure \ref{fig2}, we can see that the jump matrix has the wrong factorization for $\xi\in(-\xi_0,\xi_0)$. Hence the first transformation is to introduce $M^{(1)}$ by
\be\label{3.10}
M^{(1)}(x,t;\xi)=M(x,t;\xi)\begin{pmatrix}[\det\delta(\xi)]^{-1} & \mathbf{0}_{1\times3}\\
\mathbf{0}_{3\times1} & \delta(\xi)\end{pmatrix},
\ee
where the $3\times3$ matrix-valued function satisfies the following RH problem:\\
$\bullet$ $\delta(\xi)$ is analytic for $\xi\in\bfC\setminus[-\xi_0,\xi_0]$, and it takes continuous boundary values on $(-\xi_0,\xi_0)$ from the upper and lower half-planes;\\
$\bullet$ The boundary values on the jump contour $(-\xi_0,\xi_0)$ (oriented to right) are related as
\be
\delta_+(\xi)=(\mathbb{I}_{3\times3}+\gamma^\dag(\xi)\gamma(\xi))\delta_-(\xi), \quad \xi\in(-\xi_0,\xi_0);
\ee
$\bullet$ $\delta(\xi)=\mathbb{I}_{3\times3}+O(\xi^{-1})$ as $\xi\to\infty$.\\
Since the jump matrix $\mathbb{I}_{3\times3}+\gamma^\dag(\xi)\gamma(\xi)$ is positive definite, the vanishing lemma of Zhou \cite{ZX} yields the existence and uniqueness of the function $\delta(\xi)$. Furthermore, it is easy to see that $\det\delta(\xi)$ obeys a scalar RH problem:\\
$\bullet$ $\det\delta(\xi)$ is analytic for $\xi\in\bfC\setminus[-\xi_0,\xi_0]$;\\
$\bullet$ $\det\delta(\xi)$ takes continuous boundary values $\det\delta_\pm(\xi)$ on $(-\xi_0,\xi_0)$, and they are related by the jump condition
\be
\det\delta_+(\xi)=\det\delta_-(\xi)(1+\gamma(\xi)\gamma^\dag(\xi)), \quad \xi\in(-\xi_0,\xi_0);
\ee
$\bullet$ $\lim_{\xi\to\infty}\det\delta(\xi)=1$.
\begin{proposition}
The functions $\delta(\xi)$ and $\det\delta(\xi)$ have the following properties:
\be
\delta(\xi)=[\delta^\dag(\xi^*)]^{-1}=\sigma_1\delta^*(-\xi^*)\sigma_1,\quad
\det\delta(\xi)=[(\det\delta(\xi^*))^*]^{-1}=[\det\delta(-\xi^*)]^*,
\ee
and
\be
|\delta(\xi)|,|\det\delta(\xi)|\leq\text{const}<\infty,
\ee
where $|X|^2=\text{tr}(X^\dag X)$ for any matrix $X$.
\end{proposition}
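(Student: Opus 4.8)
The plan is to obtain both symmetry identities for $\delta$ by the standard device of constructing, from $\delta$ itself, an auxiliary matrix that solves the \emph{same} $3\times3$ RH problem, and then invoking uniqueness. Since the jump matrix $\mathbb{I}_{3\times3}+\gamma^\dag\gamma$ is Hermitian and positive definite, Zhou's vanishing lemma \cite{ZX} already guarantees that the problem defining $\delta$ has a unique solution, so it suffices to check that each candidate has the correct analyticity domain $\bfC\setminus[-\xi_0,\xi_0]$, the correct jump across $(-\xi_0,\xi_0)$, and the normalization $\to\mathbb{I}_{3\times3}$ at infinity. For the first identity I would set $\tilde\delta(\xi)=[\delta^\dag(\xi^*)]^{-1}$. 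Schwarz reflection shows $\tilde\delta$ is analytic off the (real, conjugation-invariant) cut and tends to $\mathbb{I}_{3\times3}$. Taking the Hermitian conjugate and then the inverse of $\delta_+=(\mathbb{I}_{3\times3}+\gamma^\dag\gamma)\delta_-$, and using that $\mathbb{I}_{3\times3}+\gamma^\dag\gamma$ is Hermitian, yields $\tilde\delta_+=(\mathbb{I}_{3\times3}+\gamma^\dag\gamma)\tilde\delta_-$; uniqueness then gives $\delta=\tilde\delta$.

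For the second identity I would set $\hat\delta(\xi)=\sigma_1\delta^*(-\xi^*)\sigma_1$, which is again analytic off the cut (the cut is invariant under $\xi\mapsto-\xi^*$) and normalized at infinity because $\sigma_1^2=\mathbb{I}_{3\times3}$. The only genuine computation is that the jump is preserved: starting from the symmetry $\gamma(\xi)=\gamma^*(-\xi^*)\sigma_1$ of \eqref{2.21}, i.e. $\gamma(-\xi)=\gamma^*(\xi)\sigma_1$ on $\bfR$, I would verify $\sigma_1\bigl(\mathbb{I}_{3\times3}+\gamma^\dag(-\xi)\gamma(-\xi)\bigr)^*\sigma_1=\mathbb{I}_{3\times3}+\gamma^\dag(\xi)\gamma(\xi)$, using $\sigma_1^\dag=\sigma_1$ and $\sigma_1^2=\mathbb{I}_{3\times3}$; conjugating the jump relation at $-\xi$ and then conjugating by $\sigma_1$ produces exactly $\hat\delta_+=(\mathbb{I}_{3\times3}+\gamma^\dag\gamma)\hat\delta_-$, and uniqueness finishes the argument. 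Both identities for $\det\delta$ then follow at once by taking determinants of the two relations for $\delta$ and using $\det(A^\dag)=\overline{\det A}$, $\det(A^{-1})=(\det A)^{-1}$, and $(\det\sigma_1)^2=1$.

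The boundedness I would handle in two stages, first on the real axis and then in the interior. Off the segment, where $\delta$ is analytic, the first symmetry at real $\xi$ gives $\delta(\xi)\delta^\dag(\xi)=\mathbb{I}_{3\times3}$, so $\delta$ is unitary and $|\delta(\xi)|^2=\mathrm{tr}\,\mathbb{I}_{3\times3}=3$. On $(-\xi_0,\xi_0)$ the boundary values of the first symmetry read $\delta_+=[\delta_-^\dag]^{-1}$; combining this with $\delta_+=(\mathbb{I}_{3\times3}+\gamma^\dag\gamma)\delta_-$ gives $\delta_-\delta_-^\dag=(\mathbb{I}_{3\times3}+\gamma^\dag\gamma)^{-1}$ and $\delta_+\delta_+^\dag=\mathbb{I}_{3\times3}+\gamma^\dag\gamma$, whence $|\delta_+|^2=\mathrm{tr}\,(\mathbb{I}_{3\times3}+\gamma^\dag\gamma)$ and $|\delta_-|^2=\mathrm{tr}\,(\mathbb{I}_{3\times3}+\gamma^\dag\gamma)^{-1}$ are both bounded because $\mathbb{I}_{3\times3}+\gamma^\dag\gamma\succeq\mathbb{I}_{3\times3}$ and $\gamma$ is a bounded function on the interval. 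Thus $|\delta|$ is bounded on all of $\bfR$. Each entry of $\delta$ is holomorphic in $\bfC_+$ (and in $\bfC_-$), extends continuously to $\bfR$ with bounded boundary data, and tends to a constant at infinity, so the maximum-modulus principle (in its Phragm\'en--Lindel\"of form for the half-plane) bounds it in the interior by its supremum on $\bfR$; hence $|\delta(\xi)|\le\text{const}$ throughout. Finally $|\det\delta(\xi)|\le\text{const}$ because $\det\delta$ is a fixed polynomial in the (now bounded) entries of $\delta$.

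The step I expect to require the most care is the endpoint behaviour in the boundedness argument: a priori the solution of a matrix RH problem can be singular at the endpoints $\pm\xi_0$ of the cut, and it is precisely the symmetry-forced relations $\delta_+\delta_+^\dag=\mathbb{I}_{3\times3}+\gamma^\dag\gamma$ and $\delta_-\delta_-^\dag=(\mathbb{I}_{3\times3}+\gamma^\dag\gamma)^{-1}$ that rule out any growth of the \emph{magnitude} there and make the maximum-modulus argument applicable despite a possible mild discontinuity of the boundary data at $\pm\xi_0$. The other delicate point is the bookkeeping of the boundary-value limits, namely which of $\delta_\pm$ appears after the reflections $\xi\mapsto\xi^*$ and $\xi\mapsto-\xi^*$, together with the $\sigma_1$-conjugation identity for the jump in the second symmetry; both are routine but must be carried out with the orientation of $(-\xi_0,\xi_0)$ fixed as in the statement.
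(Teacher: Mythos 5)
Your proof is correct and follows essentially the same route as the standard argument this paper relies on (the proposition is stated here without proof, deferring to the scheme of \cite{GL-JNS,LZG}): uniqueness from Zhou's vanishing lemma applied to the candidates $[\delta^\dag(\xi^*)]^{-1}$ and $\sigma_1\delta^*(-\xi^*)\sigma_1$ for the symmetries, and the relations $\delta_\pm\delta_\pm^\dag=(\mathbb{I}_{3\times3}+\gamma^\dag(\xi)\gamma(\xi))^{\pm1}$ combined with the maximum principle for the bound. The only point worth making explicit is that $\delta(\xi)$ is invertible off $[-\xi_0,\xi_0]$ (so that $[\delta^\dag(\xi^*)]^{-1}$ is a well-defined analytic candidate), which follows because $\det\delta$ coincides, by a Liouville ratio argument, with the explicit nonvanishing function \eqref{3.30}; this formula also gives the bound on $|\det\delta|$ directly, since $\nu(\xi_0)>0$ is real and the factor $\left(\frac{\xi-\xi_0}{\xi+\xi_0}\right)^{-\ii\nu}$ has modulus controlled by the bounded argument of $\frac{\xi-\xi_0}{\xi+\xi_0}$.
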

Then $M^{(1)}$ satisfies the following matrix RH problem:\\
$\bullet$ $M^{(1)}(x,t;\xi)$ is analytic for $\xi\in\bfC\setminus\bfR$, and it takes continuous boundary values on $\bfR$ from the upper and lower half-planes;\\
$\bullet$ The boundary values $M^{(1)}_\pm(x,t;\xi)$ on the jump contour $\bfR$ are related as
\be
M^{(1)}_+(x,t;\xi)=M^{(1)}_-(x,t;\xi)J^{(1)}(x,t;\xi), \quad \xi\in\bfR,
\ee
where the jump matrix is given by
\begin{align}
J^{(1)}(x,t;\xi)=\left\{
\begin{aligned}
&\begin{pmatrix}
1 & \gamma\delta\det\delta\e^{-t\Phi}\\
\mathbf{0}_{3\times1} & \mathbb{I}_{3\times3}
\end{pmatrix}
\begin{pmatrix}
1 &  \mathbf{0}_{1\times3}\\
\delta^{-1}\gamma^\dag(\det\delta)^{-1}\e^{t\Phi} & \mathbb{I}_{3\times3}
\end{pmatrix},\qquad\qquad\quad\,\ |\xi|>\xi_0,\\
&\begin{pmatrix}
1 & \mathbf{0}_{1\times3}\\
\delta_-^{-1}\frac{\gamma^\dag}{1+\gamma\gamma^\dag}(\det\delta_-)^{-1}\e^{t\Phi} & \mathbb{I}_{3\times3} \\
\end{pmatrix}
\begin{pmatrix}
1 & \delta_+\frac{\gamma}{1+\gamma\gamma^\dag}\det\delta_+\e^{-t\Phi} \\
\mathbf{0}_{3\times1} & \mathbb{I}_{3\times3}
\end{pmatrix},\ |\xi|<\xi_0;
\end{aligned}
\right.
\end{align}
$\bullet$ $M^{(1)}(x,t;\xi)=\mathbb{I}_{4\times4}+O(\xi^{-1})$ as $\xi\to\infty$.

\begin{figure}[htbp]
  \centering
  \includegraphics[width=3.5in]{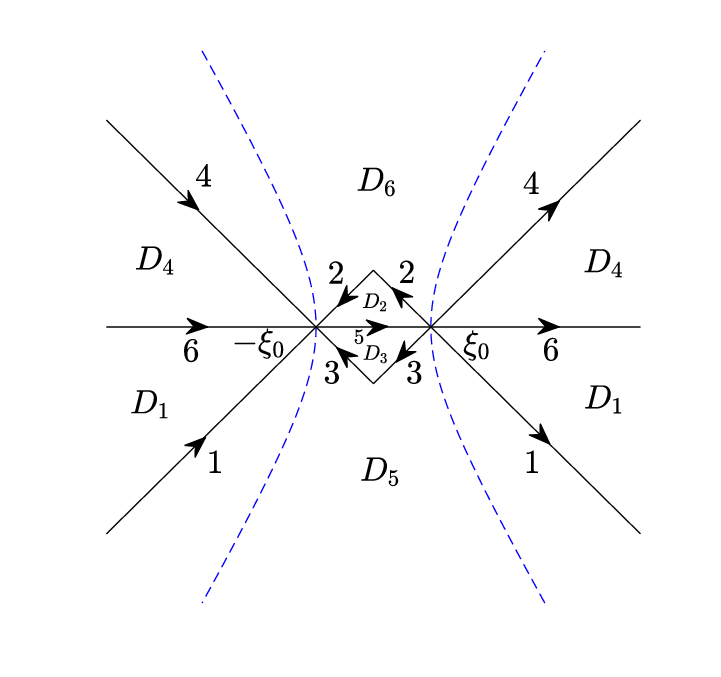}
  \caption{The jump contour $\Sigma$ (solid line) and the open sets $\{D_j\}_1^6$.}\label{fig3}
\end{figure}
The next step is to deform the contour such that the jump matrix involves the exponential factor $\e^{t\Phi}$ on the parts of the contour where Re$\Phi$ is negative, the factor $\e^{-t\Phi}$ on the parts where Re$\Phi$ is positive and the jumps on the original contour $\bfR$ are small remainders with respect to $t$. To achieve this goal, we should introduce the analytic approximations of $\gamma(\xi)$ and $\frac{\gamma^\dag(\xi)}{1+\gamma(\xi)\gamma^\dag(\xi)}$. The symmetry of $\gamma(\xi)$ in \eqref{2.21} implies that we can rewrite
\begin{align}
\gamma(\xi)=&\begin{pmatrix}\gamma_1(\xi) & \gamma_1^*(-\xi^*) & \gamma_2(\xi)\end{pmatrix},\ \xi\in\bfR,\label{3.17}\\
\tilde{\gamma}^\dag(\xi)\doteq&\frac{\gamma^\dag(\xi)}{1+\gamma(\xi)\gamma^\dag(\xi)}
=\begin{pmatrix}\gamma_3(\xi) & \gamma_3^*(-\xi^*) & \gamma_4(\xi)\end{pmatrix},\ \xi\in\bfR.
\end{align}
Let $D_j\doteq D_j(\xi)$, $j=1,\cdots,6$ denote the open subsets of $\bfC$ displayed in Figure \ref{fig3}, then we have the following lemma.
\begin{lemma}\label{lem3.1}
There exist decompositions
\begin{align}
\gamma_1(\xi)=&\gamma_{1,a}(x,t;\xi)+\gamma_{1,r}(x,t;\xi),\quad |\xi|>\xi_0,~\xi\in\bfR,\label{3.19}\\
\gamma_3(\xi)=&\gamma_{3,a}(x,t;\xi)+\gamma_{3,r}(x,t;\xi),\quad |\xi|<\xi_0,~\xi\in\bfR,\label{3.20}
\end{align}
where the functions $\gamma_{j,a}$ and $\gamma_{j,r}$, $j=1,3$ have the following properties:

(1) For each $(x,t)\in\mathcal{O}$ and $j=1,3$, $\gamma_{j,a}(x,t;\xi)$is defined and continuous for $\xi\in\bar{D}_j$ and
analytic in $D_j$.

(2) The function $\gamma_{j,a}$ satisfies the following estimates
\be\label{3.21}
|\gamma_{j,a}(x,t;\xi)-\gamma_j(\xi_0)|\leq C|\xi-\xi_0|\e^{\frac{t}{4}|\text{Re}\Phi(x,t;\xi)|},~\xi\in\bar{D}_j,\ j=1,3,
\ee
and
\be\label{3.22}
|\gamma_{1,a}(x,t;\xi)|\leq \frac{C}{1+|\xi|^2}\e^{\frac{t}{4}|\text{Re}\Phi(x,t;\xi)|},
~\xi\in\bar{D}_1.
\ee

(3) The $L^1, L^2$ and $L^\infty$ norms of the function $\gamma_{1,r}(x,t;\cdot)$ on $(-\infty,-\xi_0)\cup(\xi_0,\infty)$ are $O(t^{-3/2})$ as $t\rightarrow\infty$ uniformly with respect to $(x,t)\in\mathcal{O}$.

(4) The $L^1, L^2$ and $L^\infty$ norms of the function $\gamma_{3,r}(x,t;\cdot)$ on $(-\xi_0,\xi_0)$ are $O(t^{-3/2})$ as $t\rightarrow\infty$ uniformly with respect to $(x,t)\in\mathcal{O}$.

\end{lemma}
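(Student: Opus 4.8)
The plan is to reduce Lemma 3.1 to a single scalar ``analytic approximation'' statement and then prove that statement by the standard Deift--Zhou construction, treating the two critical points $\pm\xi_0$ symmetrically.

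First I would record the regularity of the scattering data. Since $u_0,w_0\in\mathcal S(\bfR)$, the eigenfunction $\mu_+(x,0;\cdot)$ and hence $s(\xi)$, $a(\xi)$, $b(\xi)$ depend smoothly on $\xi$ and decay rapidly; because $\det a(\xi)\to1$ as $\xi\to\infty$ and has no real zeros under the no-soliton assumption, $\det a$ is bounded away from zero on $\bfR$, so $\gamma=-ba^{-1}$ belongs to $\mathcal S(\bfR)$. Each scalar entry $\gamma_1,\gamma_2$ of $\gamma$ is therefore Schwartz, and since $1+\gamma\gamma^\dag\ge1$ is smooth and bounded below, the entries $\gamma_3,\gamma_4$ of $\tilde\gamma^\dag=\gamma^\dag/(1+\gamma\gamma^\dag)$ are Schwartz as well. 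The decompositions \eqref{3.19} and \eqref{3.20} have identical structure --- $\gamma_1$ on $|\xi|>\xi_0$ continued into $D_1$, and $\gamma_3$ on $|\xi|<\xi_0$ continued into $D_3$ --- and the symmetry $\gamma(\xi)=\gamma^*(-\xi^*)\sigma_1$ ties the behaviour at $-\xi_0$ to that at $\xi_0$. Thus it suffices to produce, for a fixed $f\in\mathcal S(\bfR)$ and a fixed vertex $\xi_0$, a splitting $f=f_a+f_r$ with $f_a$ analytic in the relevant sector and $f_r$ small.

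For the analytic part I would use the Taylor-plus-rational scheme. Writing the Taylor polynomial of $f$ at $\xi_0$ and damping it by a rational factor,
\[ f_a(x,t;\xi)=\Big(\sum_{k=0}^{m}\frac{f^{(k)}(\xi_0)}{k!}(\xi-\xi_0)^k\Big)\Big(\frac{\xi_0+\ii}{\xi+\ii}\Big)^{m+2}, \]
with the pole of the factor placed in the half-plane opposite to $D_1$, makes $f_a$ analytic in $D_1$ with decay $O(\xi^{-2})$ at infinity. By construction $f_a(\xi_0)=f(\xi_0)=\gamma_j(\xi_0)$, so $f_a-\gamma_j(\xi_0)$ vanishes at $\xi_0$ and is Lipschitz there, giving \eqref{3.21}; the $\xi^{-2}$ decay gives \eqref{3.22}, where the factor $\e^{\frac t4|\mathrm{Re}\,\Phi|}$ enters only as a harmless over-estimate when $f_a$ is continued off the real axis --- there it is dominated by the full decay $\e^{-t|\mathrm{Re}\,\Phi|}$ of the exponential multiplying $\gamma$ in the deformed jump, which is the mechanism that makes the deformed-contour contribution controllable. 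Uniformity in $(x,t)\in\mathcal O$ is clear because $\xi_0=\sqrt{x/12t}$ stays in the bounded range $t^{-1/3}\ll\xi_0\le\sqrt{N/12}$ and the coefficients $f^{(k)}(\xi_0)$ are uniformly bounded there.

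The heart of the proof --- and the step I expect to be the main obstacle --- is the remainder estimate, items (3)--(4), namely that the plain $L^1,L^2,L^\infty$ norms of $f_r=f-f_a$ on the ray are $O(t^{-3/2})$ \emph{uniformly} over $\mathcal O$. By construction $f_r$ vanishes to order $m+1$ at $\xi_0$, so on the stationary-phase window $|\xi-\xi_0|\lesssim t^{-1/2}$ the Taylor bound $|f_r(\xi)|\le C|\xi-\xi_0|^{m+1}$ gives $L^\infty$-size $t^{-(m+1)/2}$ and, integrated over a window of length $t^{-1/2}$, $L^1$-size $t^{-(m+2)/2}$; choosing $m\ge2$ makes both $O(t^{-3/2})$, and the $L^2$ bound follows by interpolation. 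The delicate point is the tail $|\xi-\xi_0|\gtrsim t^{-1/2}$, where neither the local Taylor bound nor --- on the real axis, where $\mathrm{Re}\,\Phi=0$ --- any exponential decay is available; here I would exploit the rapid decay of $f$ and of the rational $f_a$ together with the $t$-dependence of the split, tuning $m$ and the rational factor so that $f-f_a$ is uniformly $O(t^{-3/2})$ across the whole ray, and integrating by parts against the smoothness of $f$ to convert powers of $(\xi-\xi_0)^{-1}$ into powers of $t^{-1/2}$. Finally I would assemble the pieces at $\xi_0$ and $-\xi_0$ via the symmetry relation to recover \eqref{3.19}--\eqref{3.20}, and repeat the argument verbatim for $\gamma_3$ on $|\xi|<\xi_0$.
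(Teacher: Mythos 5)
Your reduction to a single scalar splitting is reasonable, and the preliminary regularity claims (Schwartz decay of $b$, $\det a$ bounded away from zero on $\bfR$, hence $\gamma_1,\gamma_2,\gamma_3,\gamma_4$ smooth and rapidly decaying) are correct; the Taylor-plus-rational function you build is also the standard first step. But there is a genuine gap exactly where you yourself flag ``the main obstacle'': with $f_a$ defined as a Taylor polynomial at $\xi_0$ damped by $\bigl((\xi_0+\ii)/(\xi+\ii)\bigr)^{m+2}$, the decomposition $f=f_a+f_r$ contains no $t$-dependence at all (beyond the slow motion of $\xi_0$), so $f_r$ is a fixed function of $\xi$ which is $O(1)$ on the real ray away from $\xi_0$; its $L^1$, $L^2$ and $L^\infty$ norms cannot be $O(t^{-3/2})$. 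Your proposed rescue --- ``integrating by parts against the smoothness of $f$ to convert powers of $(\xi-\xi_0)^{-1}$ into powers of $t^{-1/2}$'' --- has nothing to act on: properties (3)--(4) concern the plain norms of $\gamma_{j,r}$ itself, with no oscillatory integral present (the factor $\e^{-t\Phi}$ multiplies $\gamma_r$ only later, in the jumps $J^{(2)}_5$, $J^{(2)}_6$, and on $\bfR$ it has modulus one anyway). The $t$-smallness must be built into the split itself, which your construction does not do.

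The missing idea is the Fourier--Laplace splitting along the phase; this is how the estimate is actually obtained, and indeed the paper's own ``proof'' is just the citation of Lemma 4.8 in \cite{Lenells-IUMJ}. There one first subtracts a rational approximation $R$ so that $h=f-R$ vanishes to high order at the critical point (essentially your step), then changes variables so that $\Phi$ becomes the linear variable and writes $h$, schematically, as $h=\int_{\bfR}\hat h(s)\e^{\frac{s}{4}\Phi}\,\dd s$, cutting the integral at $s=t$: the part over $s\leq t$ is $h_a$, analytic in the relevant sector with precisely the bound $|h_a|\leq C\e^{\frac{t}{4}|\text{Re}\,\Phi|}$ --- so the exponential factor in \eqref{3.21}--\eqref{3.22} is not a ``harmless over-estimate'' but the genuine growth of the analytic continuation --- while the tail over $s>t$ is $h_r$, whose norms are $O(t^{-3/2})$ because the high-order vanishing of $h$ at the critical point gives $\hat h$ enough decay for $\int_{s>t}|\hat h(s)|\,\dd s\lesssim t^{-3/2}$. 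Note also that uniformity over $\mathcal{O}$ is more delicate than your closing remark suggests: the sector only guarantees $t\xi_0^3\to\infty$, so $\xi_0$ may shrink like a power of $t$ (roughly $\xi_0\gg t^{-1/3}$), and tracking the constants in this regime is exactly what the low-regularity formulation of \cite{Lenells-IUMJ} is designed for. With your $t$-independent split, items (1)--(2) can be arranged, but (3)--(4) fail.
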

\begin{proof}
See Lemma 4.8 in \cite{Lenells-IUMJ}.
\end{proof}
For $l=2,4$, the decomposition of $\gamma_l(\xi)=\gamma_{l,a}(x,t;\xi)+\gamma_{l,r}(x,t;\xi)$ can be similarly found. Thus, we establish the decompositions $\gamma(\xi)=\gamma_a(x,t;\xi)+\gamma_r(x,t;\xi)$ of $\gamma(\xi)$, and $\tilde{\gamma}^\dag(\xi)=\tilde{\gamma}^\dag_a(x,t;\xi)+\tilde{\gamma}^\dag_r(x,t;\xi)$ of $\tilde{\gamma}^\dag(\xi)$ by setting
\begin{align*}
\gamma_a(x,t;\xi)=&\begin{pmatrix} \gamma_{1,a}(x,t;\xi) & \gamma_{1,a}^*(x,t;-\xi^*) & \gamma_{2,a}(x,t;\xi)\end{pmatrix},\\
\gamma_r(x,t;\xi)=&\begin{pmatrix} \gamma_{1,r}(x,t;\xi) & \gamma_{1,r}^*(x,t;-\xi^*) & \gamma_{2,r}(x,t;\xi)\end{pmatrix},\\
\tilde{\gamma}_a^\dag(x,t;\xi)=&\begin{pmatrix} \tilde{\gamma}_{3,a}(x,t;\xi) & \tilde{\gamma}_{3,a}^*(x,t;-\xi^*) & \tilde{\gamma}_{4,a}(x,t;\xi)\end{pmatrix},\\
\tilde{\gamma}^\dag_r(x,t;\xi)=&\begin{pmatrix} \tilde{\gamma}_{3,r}(x,t;\xi) & \tilde{\gamma}_{3,r}^*(x,t;-\xi^*) & \tilde{\gamma}_{4,r}(x,t;\xi)\end{pmatrix}.
\end{align*}

Now, we can introduce $M^{(2)}(x,t;\xi)$ by
\be\label{3.23}
M^{(2)}(x,t;\xi)=M^{(1)}(x,t;\xi)G(\xi),
\ee
where the sectionally analytic function $G$ is defined by
\be\label{3.24}
G(\xi)=\left\{
\begin{aligned}
&\begin{pmatrix}
1 & \gamma_{a}\delta\det\delta\e^{-t\Phi}\\
\mathbf{0}_{3\times1} & \mathbb{I}_{3\times3}
\end{pmatrix},\qquad\quad\,\,\ \xi\in D_1,\\
&\begin{pmatrix}
1 & -\delta_+\tilde{\gamma}_{a}\det\delta_+\e^{-t\Phi}\\
\mathbf{0}_{3\times1} & \mathbb{I}_{3\times3}
\end{pmatrix},\quad\,\,\,\ \xi\in D_2,\\
&\begin{pmatrix}
1 & \mathbf{0}_{1\times3}\\
\delta_-^{-1}\tilde{\gamma}_{a}^\dag(\det\delta_-)^{-1}\e^{t\Phi} & \mathbb{I}_{3\times3}
\end{pmatrix},\quad\ \xi\in D_3,\\
&\begin{pmatrix}
1 &  \mathbf{0}_{1\times3}\\
-\delta^{-1}\gamma^\dag_{a}(\det\delta)^{-1}\e^{t\Phi} & \mathbb{I}_{3\times3}
\end{pmatrix},\quad \xi\in D_4,\\
&\mathbb{I}_{4\times4}, \qquad\qquad\qquad\qquad\qquad\qquad\,\,\ \xi\in D_5\cup D_6.
\end{aligned}
\right.
\ee
Let $\Sigma\subset\bfC$ denote the contour displayed in Figure \ref{fig3}. It follows that $M^{(2)}(x,t;\xi)$ satisfies the following RH problem:\\
$\bullet$ $M^{(2)}(x,t;\xi)$ is analytic off $\Sigma$, and it takes continuous boundary values on $\Sigma$;\\
$\bullet$ Across the oriented contour $\Sigma$, the boundary values $M^{(2)}_\pm(x,t;\xi)$ are connected by the following formula:
\be
M^{(2)}_+(x,t;\xi)=M^{(2)}_-(x,t;\xi)J^{(2)}(x,t;\xi), \quad \xi\in\Sigma;
\ee
$\bullet$ $M^{(2)}(x,t;\xi)\to\mathbb{I}_{4\times4}$, as $\xi\to\infty$;\\
where the jump matrix $J^{(2)}(x,t;\xi)$ is given by
\be
\begin{aligned}
J^{(2)}_1=&\begin{pmatrix}
1 & \gamma_{a}\delta\det\delta\e^{-t\Phi}\\
\mathbf{0}_{3\times1} & \mathbb{I}_{3\times3}
\end{pmatrix},\
J^{(2)}_2=\begin{pmatrix}
1 & -\delta_+\tilde{\gamma}_{a}\det\delta_+\e^{-t\Phi}\\
\mathbf{0}_{3\times1} & \mathbb{I}_{3\times3}
\end{pmatrix},\\
J^{(2)}_3=&\begin{pmatrix}
1 & \mathbf{0}_{1\times3}\\
-\delta_-^{-1}\tilde{\gamma}_{a}^\dag(\det\delta_-)^{-1}\e^{t\Phi} & \mathbb{I}_{3\times3}
\end{pmatrix},\
J^{(2)}_4=\begin{pmatrix}
1 &  \mathbf{0}_{1\times3}\\
\delta^{-1}\gamma^\dag_{a}(\det\delta)^{-1}\e^{t\Phi} & \mathbb{I}_{3\times3}
\end{pmatrix},\\
J^{(2)}_5=&\begin{pmatrix}
1 & \mathbf{0}_{1\times3}\\
\delta_-^{-1}\tilde{\gamma}_r^\dag(\det\delta_-)^{-1}\e^{t\Phi} & \mathbb{I}_{3\times3} \\
\end{pmatrix}
\begin{pmatrix}
1 & \delta_+\tilde{\gamma}_r\det\delta_+\e^{-t\Phi} \\
\mathbf{0}_{3\times1} & \mathbb{I}_{3\times3}
\end{pmatrix},\\
J^{(2)}_6=&\begin{pmatrix}
1 & \gamma_r\delta\det\delta\e^{-t\Phi}\\
\mathbf{0}_{3\times1} & \mathbb{I}_{3\times3}
\end{pmatrix}
\begin{pmatrix}
1 &  \mathbf{0}_{1\times3}\\
\delta^{-1}\gamma^\dag_r(\det\delta)^{-1}\e^{t\Phi} & \mathbb{I}_{3\times3}
\end{pmatrix},
\end{aligned}
\ee
with $J^{(2)}_j$ denoting the restriction of $J^{(2)}$ to the contour labeled by $j$ in Figure \ref{fig3}.

\subsection{Local models}
Obviously, as $t\rightarrow\infty$, the matrix $J^{(2)}-\mathbb{I}_{4\times4}$ decays to zero everywhere except near the critical points $\pm \xi_0$. This implies that the main contribution to the long-time asymptotics of $M^{(2)}$ should come from the neighborhoods of $\pm\xi_0$. 

In order to relate $M^{(2)}$ to the solution $M^X$ of the model RH problem in Theorem \ref{th3.1}, we introduce the following scaling transforms for $\xi$ near $\pm\xi_0$
\begin{align}
&T_{-\xi_0}:~\xi\mapsto\frac{z}{\sqrt{48t\xi_0}}-\xi_0,\label{3.27}\\
&T_{\xi_0}:~\xi\mapsto\frac{z}{\sqrt{48t\xi_0}}+\xi_0.\label{3.28}
\end{align}
Observe that the $3\times3$ matrix-valued function $\delta(\xi)$ can not be expressed explicitly, in order to proceed to the next step, we write, for example,
\be
\left(\gamma_{a}\delta\det\delta\e^{-t\Phi}\right)(\xi)=\left(\gamma_{a}(\delta-\det\delta \mathbb{I}_{3\times3})\det\delta\e^{-t\Phi}\right)(\xi)
+\left(\gamma_{a}(\det\delta)^2\e^{-t\Phi}\right)(\xi).\label{3.29}
\ee
For the second term in the right-hand side of \eqref{3.29}, by Plemelj formula, we know that
\be\label{3.30}
\det\delta(\xi)=\exp\left\{\frac{1}{2\pi\ii}\int_{-\xi_0}^{\xi_0}\frac{\ln(1+\gamma(s)\gamma^\dag(s))}{s-\xi}\dd s\right\}
=\left(\frac{\xi-\xi_0}{\xi+\xi_0}\right)^{-\ii\nu(\xi_0)}\e^{\chi(\xi)},
\ee
where
\begin{align}
\nu(\xi_0)=&\frac{1}{2\pi}\ln(1+\gamma(\xi_0)\gamma^\dag(\xi_0))>0,\label{3.31}\\
\chi(\xi)=&\frac{1}{2\pi\ii}\int_{-\xi_0}^{\xi_0}\ln\left(\frac{1+\gamma(s)\gamma^\dag(s)}
{1+\gamma(\xi_0)\gamma^\dag(\xi_0)}\right)\frac{\dd s}{s-\xi},\label{3.32}
\end{align}
thus, a direct computation yields that
\bea
T_{\xi_0}\left(\gamma_{a}(\det\delta)^2\e^{-t\Phi}\right)(\xi)
=\vartheta^2\upsilon^2\gamma_{a}\left(\frac{z}{\sqrt{48t\xi_0}}+\xi_0\right),\label{3.33}
\eea
where
\begin{align}
\vartheta&=(192t\xi_0^3)^\frac{\ii\nu}{2}\e^{8\ii t\xi_0^3+\chi(\xi_0)},\label{3.34}\\
\upsilon&=z^{-\ii\nu}\e^{-\frac{\ii z^2}{4}(1+ z/\sqrt{432t\xi_0^3})}
\left(\frac{2\xi_0}{z/\sqrt{48t\xi_0}+2\xi_0}\right)^{-\ii\nu}\e^{\chi([z/\sqrt{48t\xi_0}]+\xi_0)-\chi(\xi_0)}.\label{3.35}
\end{align}
However, for the first term in the right-hand side of \eqref{3.29}, if we denote
\be
\tilde{\delta}(\xi)=\left(\gamma_{a}(\delta-\det\delta\mathbb{I}_{3\times3})\e^{-t\Phi}\right)(\xi),
\ee
one can find that $\tilde{\delta}$ satisfies the following RH problem:\\
$\bullet$ $\tilde{\delta}(\xi)$ is analytic in $\xi\in\bfC\setminus[-\xi_0,\xi_0]$ with continuous boundary values on $(-\xi_0,\xi_0)$;\\
$\bullet$ On the jump contour $(-\xi_0,\xi_0)$, $\tilde{\delta}(\xi)$ satisfies the jump condition
\be
\tilde{\delta}_+(\xi)=(1+\gamma(\xi)\gamma^\dag(\xi))\tilde{\delta}_-(\xi)+f(\xi)\e^{-t\Phi(x,t;\xi)}, \quad \xi\in(-\xi_0,\xi_0),
\ee
where
\be
f(\xi)=\gamma_{a}(\xi)[\gamma^\dag(\xi)\gamma(\xi)-\gamma(\xi)\gamma^\dag(\xi)\mathbb{I}_{3\times3}]\delta_-(\xi);
\ee
$\bullet$ $\tilde{\delta}(\xi)\to\mathbf{0}_{3\times3},$ as $\xi\to\infty$.\\
By \cite{MJA-ASF}, the function $\tilde{\delta}(\xi)$ can be expressed by
\be\label{3.39}
\begin{aligned}
\tilde{\delta}(\xi)&=X(\xi)\int_{-\xi_0}^{\xi_0}\frac{\e^{-t\Phi(x,t;s)}f(s)}{X_+(s)(s-\xi)}\dd s,\\
X(\xi)&=\exp\left\{\frac{1}{2\pi\ii}\int_{-\xi_0}^{\xi_0}\frac{\ln(1+\gamma(s)\gamma^\dag(s))}{s-\xi}\dd s\right\}.
\end{aligned}
\ee
Define $\hat{L}\doteq\{z=\kappa\e^{-\frac{\ii\pi}{4}}:0<\kappa<+\infty\}$, then we have the following lemma.
\begin{lemma}\label{lem3.2}
As $t\rightarrow\infty$, for $z\in\hat{L}$, the estimate for $\tilde{\delta}(\xi)$ holds:
\be
\left|(T_{\xi_0}\tilde{\delta})(z)\right|\leq Ct^{-\frac{1}{2}}.\label{3.40}
\ee
\end{lemma}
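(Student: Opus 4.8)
The plan is to start from the closed-form representation \eqref{3.39} of $\tilde\delta$, substitute the scaling $\xi=\frac{z}{\sqrt{48t\xi_0}}+\xi_0$ with $z\in\hat L$, and estimate the resulting integral. Since the scalar factor $X(\xi)$ in \eqref{3.39} coincides with $\det\delta(\xi)$ of \eqref{3.30}, the uniform bound $|\det\delta(\xi)|\le\mathrm{const}$ established above gives $|X(\xi)|\le C$, so it suffices to prove that
\[
I(\xi)=\int_{-\xi_0}^{\xi_0}\frac{\e^{-t\Phi(x,t;s)}f(s)}{X_+(s)(s-\xi)}\,\dd s=O(t^{-1/2}).
\]
Since $\mathrm{Re}\,\Phi\equiv0$ on $\bfR$, this is a purely oscillatory integral whose phase $-t\Phi(x,t;s)=-8\ii t(s^{3}-3s\xi_0^{2})$ has nondegenerate stationary points exactly at the endpoints $\pm\xi_0$, while the pole $s=\xi$ sits at distance $|\mathrm{Im}\,\xi|\sim t^{-1/2}$ below the contour, just to the right of $\xi_0$ (here $\arg z=-\pi/4$ because $z\in\hat L$).

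The decisive structural fact is that $f$ vanishes at $\xi_0$. Indeed, setting $B(\xi)=\gamma^{\dag}\gamma-(\gamma\gamma^{\dag})\mathbb{I}_{3\times3}$ and using that $\gamma\gamma^{\dag}$ is a scalar, one has the identity $\gamma B=(\gamma\gamma^{\dag})\gamma-(\gamma\gamma^{\dag})\gamma=0$, so that
\[
f=\gamma_{a}B\delta_-=(\gamma_{a}-\gamma)B\delta_-.
\]
By the construction of the analytic approximant in Lemma \ref{lem3.1} (cf.\ \eqref{3.21}), $\gamma_{a}$ agrees with $\gamma$ at the stationary point, $\gamma_{a}(\xi_0)=\gamma(\xi_0)$; together with the smoothness of $\gamma,\gamma_{a},\delta_-$ (a consequence of $u_0,w_0\in\mathcal S(\bfR)$) this yields $f(\xi_0)=0$ and, more precisely, $|f(s)|\le C|s-\xi_0|$ near $\xi_0$. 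This linear vanishing is exactly what the estimate needs: a crude rescaling shows that without it the endpoint $\xi_0$ would produce a term of order $O(1)$, in contradiction with \eqref{3.40}.

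I would then split $I=I_{\mathrm{bulk}}+I_{-}+I_{+}$, where $I_{\pm}$ collect the contributions of small neighbourhoods of $\pm\xi_0$ and $I_{\mathrm{bulk}}$ is the remainder. On the bulk $\Phi'$ and $s-\xi$ are both bounded away from $0$, so a single integration by parts gives $I_{\mathrm{bulk}}=O(t^{-1})$. Near $-\xi_0$ the pole stays bounded below by $\sim2\xi_0$ and $X_+^{-1}$ carries only a bounded oscillatory factor of the type $(s+\xi_0)^{-\ii\nu(\xi_0)}$; rescaling $s+\xi_0=w/\sqrt{48t\xi_0}$ turns this piece into $t^{-1/2}$ times a convergent Fresnel-type integral, so $I_{-}=O(t^{-1/2})$.

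The main obstacle is $I_{+}$, where the stationary point and the pole collide at the scale $t^{-1/2}$. Here I would exploit the linear vanishing of $f$: writing $f(s)=(s-\xi_0)\tilde f(s)$ with $\tilde f$ bounded and using $\frac{s-\xi_0}{s-\xi}=1+\frac{\xi-\xi_0}{s-\xi}$, I split
\[
\frac{f(s)}{X_+(s)(s-\xi)}=\frac{\tilde f(s)}{X_+(s)}+(\xi-\xi_0)\frac{\tilde f(s)}{X_+(s)(s-\xi)}.
\]
For the first term the rescaling $s=\xi_0+w/\sqrt{48t\xi_0}$ turns $-t\Phi$ into $-\tfrac{\ii}{2}w^{2}+O(t^{-1/2})$ (up to the unimodular factor $\e^{-t\Phi(x,t;\xi_0)}$) and produces an explicit prefactor $t^{-1/2}$ times an integral convergent uniformly in $t$, so it is $O(t^{-1/2})$. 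In the second term the prefactor is $\xi-\xi_0=z/\sqrt{48t\xi_0}=O(t^{-1/2})$, while the same rescaling reduces the remaining integral to one of the form $\int \e^{-\frac{\ii}{2}w^{2}(1+\cdots)}\,\tilde f\,X_+^{-1}(w-z)^{-1}\,\dd w$, which is bounded uniformly in $t$ because the pole lies off the contour at distance $|\mathrm{Im}\,z|=\kappa/\sqrt2>0$ and the Gaussian controls the tail; hence this term is also $O(t^{-1/2})$. Collecting $I_{\mathrm{bulk}},I_{-},I_{+}$ and multiplying by the bounded factor $X(\xi)$ gives \eqref{3.40}. The step I expect to require the most care is precisely the vanishing of $f$ at $\xi_0$ — the identity $\gamma B\equiv0$ together with the endpoint matching $\gamma_{a}(\xi_0)=\gamma(\xi_0)$ — since this is the sole mechanism upgrading the generic $O(1)$ endpoint contribution to the required $O(t^{-1/2})$.
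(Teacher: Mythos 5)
Your proposal is correct and coincides in substance with the proof the paper relies on: the paper's own proof of Lemma \ref{lem3.2} is nothing but the citation of Lemma 3.2 in \cite{LZG}, and that argument runs exactly along your lines, namely the Cauchy-integral representation \eqref{3.39} with $|X|=|\det\delta|$ bounded, the null identity $\gamma\left[\gamma^\dag\gamma-\gamma\gamma^\dag\mathbb{I}_{3\times3}\right]=0$ combined with the endpoint matching $\gamma_a(\xi_0)=\gamma(\xi_0)$ from \eqref{3.21} to obtain the linear vanishing $|f(s)|\le C|s-\xi_0|$ at the colliding stationary point, followed by the split into bulk, $-\xi_0$ and $\xi_0$ pieces handled by integration by parts and the rescaling $s=\xi_0+w/\sqrt{48t\xi_0}$. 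The one step you leave implicit (and which the cited proof also treats only briefly) is uniformity for large $|z|$ on $\hat{L}$, where the prefactor $\xi-\xi_0=z/\sqrt{48t\xi_0}$ is no longer $O(t^{-1/2})$ and must instead be offset by the $O(|z|^{-1})$ decay of the pole integral coming from $\mathrm{dist}\left(z,(-\infty,0]\right)\geq |z|/\sqrt{2}$; with that observation your argument closes.
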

\begin{proof}
See Lemma 3.2 in \cite{LZG}.
\end{proof}
\begin{remark}
In a similar way, we have for $z\in\hat{L}^*$,
\be\label{3.41}
\left|(T_{\xi_0}\hat{\delta})(z)\right|\leq Ct^{-\frac{1}{2}},\quad t\rightarrow\infty,
\ee
where $\hat{\delta}(\xi)=\left((\delta^{-1}-(\det\delta)^{-1}\mathbb{I}_{3\times3})\gamma^\dag_{a}\e^{t\Phi}\right)(\xi)$. Analogous estimates also follow for $T_{-\xi_0}$.
\end{remark}
Let $\mathcal{X}_{\pm\xi_0}\doteq X\pm\xi_0$ be the cross $X$ defined by \eqref{3.2} centered at $\pm\xi_0$ and $D_\epsilon(\pm\xi_0)$ denote the open disk of radius $\epsilon$ centered at $\pm\xi_0$ for a small $\epsilon>0$, $\mathcal{X}_{\pm\xi_0}^\epsilon=\mathcal{X}_{\pm\xi_0}\cap D_\epsilon(\pm\xi_0)$. Then, combining above analysis, we find that as $t\to\infty$, the jump matrix $J^{(2)}(x,t;\xi)$ in $D_\epsilon(\xi_0)$ tends to $J^{(\xi_0)}(\gamma(\xi_0);z)$ for $z\in X$, where
\be
J^{(\xi_0)}(\gamma(\xi_0);z)=\left\{
\begin{aligned}
&\begin{pmatrix}
1 & \vartheta^{2}z^{-2\ii\nu}\e^{-\frac{\ii z^2}{2}}\gamma(\xi_0)\\
\textbf{0}_{3\times1} & \mathbb{I}_{3\times3}
\end{pmatrix},\quad \xi\in\left(\mathcal{X}_{\xi_0}^\epsilon\right)_1,\\
&\begin{pmatrix}
1 & -\vartheta^2z^{-2\ii\nu}\e^{-\frac{\ii z^2}{2}}\frac{\gamma(\xi_0)}{1+\gamma(\xi_0)\gamma^\dag(\xi_0)}\\
\textbf{0}_{3\times1} & \mathbb{I}_{3\times3}
\end{pmatrix},\quad\,\ \xi\in\left(\mathcal{X}_{\xi_0}^\epsilon\right)_2,\\
&\begin{pmatrix}
1 & \textbf{0}_{1\times3}\\
-\vartheta^{-2}z^{2\ii\nu}\e^{\frac{\ii z^2}{2}}\frac{\gamma^\dag(\xi_0)}{1+\gamma(\xi_0)\gamma^\dag(\xi_0)} & \mathbb{I}_{3\times3}
\end{pmatrix},\qquad \xi\in\left(\mathcal{X}_{\xi_0}^\epsilon\right)_3,\\
&\begin{pmatrix}
1 & \textbf{0}_{1\times3}\\
\vartheta^{-2}z^{2\ii\nu}\e^{\frac{\ii z^2}{2}}\gamma^\dag(\xi_0) & \mathbb{I}_{3\times3}
\end{pmatrix},\quad \xi\in \left(\mathcal{X}_{\xi_0}^\epsilon\right)_4.
\end{aligned}
\right.
\ee
This suggests that in the neighborhood $D_\epsilon(\xi_0)$ of $\xi_0$, we can approximate $M^{(2)}(x,t;\xi)$ by a $4\times4$ matrix-valued function of the form
\be\label{3.43}
M^{(\xi_0)}(x,t;\xi)=\vartheta^{-\sigma}M^X(\gamma(\xi_0);z)\vartheta^{\sigma},
\ee
where $M^X(\gamma(\xi_0);z)$ is the solution of the model RH problem considered in Subsection \ref{sec3.1} by setting $\rho=\gamma(\xi_0)$.
\begin{lemma}\label{lem3.3}
The function $M^{(\xi_0)}(x,t;\xi)$ defined in \eqref{3.43} is analytic and bounded for $\xi\in D_\epsilon(\xi_0)\setminus\mathcal{X}_{\xi_0}^\epsilon$. On the contour $\mathcal{X}_{\xi_0}^\epsilon$, $M^{(\xi_0)}$ satisfies the jump relation $M_+^{(\xi_0)}=M_-^{(\xi_0)}J^{(\xi_0)}$, and $J^{(\xi_0)}$ obeys the estimate for $1\leq n\leq\infty$:
\be\label{3.44}
\left\|J^{(2)}-J^{(\xi_0)}\right\|_{L^n\left(\mathcal{X}_{\xi_0}^\epsilon\right)}\leq Ct^{-\frac{1}{2}-\frac{1}{2n}}\ln t.
\ee
As $t\rightarrow\infty$, we have
\be\label{3.45}
\left\|\left[M^{(\xi_0)}(x,t;\xi)\right]^{-1}-\mathbb{I}_{4\times4}\right\|_{L^\infty(\partial D_\epsilon(\xi_0))}=O(t^{-\frac{1}{2}}),
\ee
and
\be\label{3.46}
-\frac{1}{2\pi\ii}\int_{\partial D_\epsilon(\xi_0)}\left(\left[M^{(\xi_0)}(x,t;\xi)\right]^{-1}-\mathbb{I}_{4\times4}\right)\dd \xi=\frac{\vartheta^{-\sigma}M^X_1(\gamma(\xi_0))\vartheta^{\sigma}}
{\sqrt{48t\xi_0}}+O(t^{-1}).
\ee
\end{lemma}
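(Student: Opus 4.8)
The plan is to verify in turn the four assertions of Lemma \ref{lem3.3} --- analyticity and boundedness of $M^{(\xi_0)}$, the jump relation on $\mathcal{X}_{\xi_0}^\epsilon$, the jump estimate \eqref{3.44}, and the two boundary estimates \eqref{3.45}--\eqref{3.46} --- treating \eqref{3.44} as the technical core and reading the boundary estimates off the large-$z$ expansion of $M^X$ furnished by Theorem \ref{th3.1}. First I would dispose of the three easy claims. By definition \eqref{3.43}, $M^{(\xi_0)}$ is obtained from $M^X(\gamma(\xi_0);z)$ through the affine change of variable $z=\sqrt{48t\xi_0}\,(\xi-\xi_0)$ inverse to \eqref{3.28} together with conjugation by the $\xi$-independent matrix $\vartheta^{\sigma}$. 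Since this map carries $D_\epsilon(\xi_0)\setminus\mathcal{X}_{\xi_0}^\epsilon$ into $\bfC\setminus X$, analyticity off $\mathcal{X}_{\xi_0}^\epsilon$ follows from that of $M^X$; boundedness follows from the uniform bound \eqref{3.6}, once one notes that $\gamma(\xi_0)$ ranges over a fixed compact set (as $\gamma$ is Schwartz by \eqref{2.21} and the decay of $u_0,w_0$) and that $\vartheta^{\pm\sigma}$ is bounded above and below, because $|\vartheta|=\e^{\mathrm{Re}\,\chi(\xi_0)}$ is controlled by the boundedness of $\det\delta$ from the Proposition while the remaining factors in \eqref{3.34} are unimodular. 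The jump relation is then immediate: conjugating $M^X_+=M^X_-J^X$ gives $M^{(\xi_0)}_+=M^{(\xi_0)}_-\,(\vartheta^{-\sigma}J^X\vartheta^{\sigma})$, and a one-line block computation shows $\vartheta^{-\sigma}J^X(\gamma(\xi_0);z)\vartheta^{\sigma}=J^{(\xi_0)}$, the $\vartheta^{\pm2}$ produced by the conjugation matching the $\vartheta^{\pm2}$ already present in $J^{(\xi_0)}$.

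The heart of the argument is \eqref{3.44}. On each ray of $\mathcal{X}_{\xi_0}^\epsilon$ I would compare $J^{(2)}$, built from $\gamma_a$, $\delta$, $\det\delta$ and $\e^{\mp t\Phi}$, with the model $J^{(\xi_0)}$, built from $\gamma(\xi_0)$ and $\vartheta$. The difference splits into three controllable pieces. Replacing $\gamma_a(\xi)$ by $\gamma(\xi_0)$ is governed by the Lipschitz bound \eqref{3.21}, which under $|\xi-\xi_0|=|z|/\sqrt{48t\xi_0}$ contributes $O(t^{-1/2})$. The scalar part of $\delta$ is handled via the decomposition \eqref{3.29}: for the term $\gamma_a(\det\delta)^2\e^{-t\Phi}$ one inserts the explicit Plemelj formula \eqref{3.30} and uses \eqref{3.33}--\eqref{3.35} to extract exactly $\vartheta^2\upsilon^2$, then checks that $\upsilon^2$ agrees with $z^{-2\ii\nu}\e^{-\ii z^2/2}$ up to the cubic-phase and $\chi$-corrections in \eqref{3.35}, which are small in $L^\infty$ thanks to the Gaussian decay of $\e^{\mp\ii z^2/2}$ along the cross. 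The genuinely non-scalar remainder $\gamma_a(\delta-\det\delta\,\mathbb{I}_{3\times3})\e^{-t\Phi}$ is precisely $\tilde\delta$, whose scaled size is $O(t^{-1/2})$ by Lemma \ref{lem3.2}, the lower-triangular analogue being covered by the Remark containing \eqref{3.41}. Collecting these yields $\|J^{(2)}-J^{(\xi_0)}\|_{L^\infty(\mathcal{X}_{\xi_0}^\epsilon)}\le Ct^{-1/2}\ln t$, the logarithm originating from the subleading power- and $\chi$-corrections of $\upsilon$ in \eqref{3.35}. The $L^1$ bound then gains an extra $t^{-1/2}$ from $\dd\xi=\dd z/\sqrt{48t\xi_0}$ together with the Gaussian integrability in $z$, and the general $L^n$ estimate follows by interpolation.

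Finally, on $\partial D_\epsilon(\xi_0)$ one has $|z|=\epsilon\sqrt{48t\xi_0}\to\infty$, so the expansion \eqref{3.4}--\eqref{3.5} applies. Inverting and conjugating gives $[M^{(\xi_0)}]^{-1}-\mathbb{I}_{4\times4}=-\vartheta^{-\sigma}(M^X_1/z)\vartheta^{\sigma}+O(z^{-2})$; since $|z|\sim\sqrt{t}$ and $\vartheta^{\pm\sigma}$ is bounded, this gives \eqref{3.45}. For \eqref{3.46} I would substitute $z=\sqrt{48t\xi_0}\,(\xi-\xi_0)$, so that on the circle $[M^{(\xi_0)}]^{-1}-\mathbb{I}_{4\times4}=-\frac{\vartheta^{-\sigma}M^X_1\vartheta^{\sigma}}{\sqrt{48t\xi_0}\,(\xi-\xi_0)}+O(t^{-1})$, and evaluate the contour integral by residues: the simple pole at $\xi=\xi_0$ produces the stated leading term, while the $O(t^{-1})$ remainder integrated over a contour of fixed length $2\pi\epsilon$ remains $O(t^{-1})$.

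I expect the main obstacle to be the $L^\infty$ half of the jump estimate \eqref{3.44}, namely the bookkeeping that reduces $J^{(2)}-J^{(\xi_0)}$ to the three pieces above. The difficulty is structural rather than computational: because the $3\times3$ matrix $\delta(\xi)$ admits no closed form, one cannot simply Taylor-expand the jump, so the whole comparison hinges on isolating the scalar $\det\delta$ contribution --- for which \eqref{3.30}--\eqref{3.35} provide an exact handle --- from the matrix remainder $\tilde\delta$, for which only the a priori $O(t^{-1/2})$ bound of Lemma \ref{lem3.2} is available, while at the same time preventing the cubic-phase correction inside $\upsilon$ from destroying the Gaussian decay. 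Everything else reduces to a direct verification or to the standard residue computation above.
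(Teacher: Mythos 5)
Your proposal is correct and takes essentially the same route as the paper: the paper proves \eqref{3.44} by combining the scalar comparison (your pieces concerning $\gamma_a$ and $\upsilon^2$, which it quotes wholesale as Lemma 3.35 of \cite{PD}) with the $\tilde{\delta}$, $\hat{\delta}$ bounds \eqref{3.40}--\eqref{3.41} for the matrix remainder, then passes from the $L^\infty$ bound carrying the Gaussian factor to an $L^1$ bound of order $t^{-1}\ln t$ and interpolates via $\|f\|_{L^n}\leq\|f\|_{L^\infty}^{1-1/n}\|f\|_{L^1}^{1/n}$, exactly as you describe. The boundary estimates \eqref{3.45}--\eqref{3.46} are likewise obtained in the paper from the large-$z$ expansion \eqref{3.4} under the substitution $z=\sqrt{48t\xi_0}(\xi-\xi_0)$ followed by Cauchy's formula, matching your residue computation.
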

\begin{proof}
The analyticity and bound of $M^{(\xi_0)}$ are a consequence of Theorem \ref{th3.1}. Moreover, for $\xi\in(\mathcal{X}_{\xi_0}^\epsilon)_1$, that is, $z=\sqrt{48t\xi_0}\varsigma\e^{-\frac{\ii\pi}{4}}$, $0\leq \varsigma\leq\epsilon$, it follows from the Lemma 3.35 in \cite{PD} that
\be\label{3.47}
\left\|\upsilon^2\gamma_a\left(\frac{z}{\sqrt{48t\xi_0}}+\xi_0\right)-z^{-2\ii\nu}\e^{-\frac{\ii z^2}{2}}\gamma(\xi_0)\right\|_{L^\infty\left((\mathcal{X}_{\xi_0}^\epsilon)_1\right)}\leq C\left|\e^{-\frac{\ii\kappa}{2}z^2}\right|t^{-\frac{1}{2}}\ln t,\quad 0<\kappa<\frac{1}{2}.
\ee
Thus, together with \eqref{3.40} and \eqref{3.41}, we have
\be\label{3.48}
\left\|J^{(2)}-J^{(\xi_0)}\right\|_{L^\infty\left((\mathcal{X}_{\xi_0}^\epsilon)_1\right)}\leq C\left|\e^{-\frac{\ii\kappa}{2}z^2}\right|t^{-\frac{1}{2}}\ln t.
\ee
Hence, we arrive at
\be\label{3.49}
\left\|J^{(2)}-J^{(\xi_0)}\right\|_{L^1\left((\mathcal{X}_{\xi_0}^\epsilon)_1\right)}\leq Ct^{-1}\ln t.
\ee
By the general inequality $\|f\|_{L^n}\leq\|f\|^{1-1/n}_{L^\infty}\|f\|_{L^1}^{1/n}$, we find
\be\label{3.50}
\left\|J^{(2)}-J^{(\xi_0)}\right\|_{L^n\left((\mathcal{X}_{\xi_0}^\epsilon)_1\right)}\leq Ct^{-\frac{1}{2}-\frac{1}{2n}}\ln t.
\ee
The norms on $(\mathcal{X}_{\xi_0}^\epsilon)_j$, $j=2,3,4$ can be estimated in a similar way. Therefore, \eqref{3.44} follows.

If $\xi\in\partial D_\epsilon(\xi_0)$, the variable $z=\sqrt{48t\xi_0}(\xi-\xi_0)$ tends to infinity as $t\rightarrow\infty$. It follows from \eqref{3.4} that
\be
M^X(\gamma(\xi_0);z)=\mathbb{I}_{4\times4}+\frac{M^X_1(\gamma(\xi_0))}{\sqrt{48t\xi_0}(\xi-\xi_0)}+O(t^{-1}),\quad t\rightarrow\infty,~\xi\in \partial D_\epsilon(\xi_0).
\ee
Then the equation \eqref{3.43} yields
\be\label{3.52}
\left[M^{(\xi_0)}(x,t;\xi)\right]^{-1}-\mathbb{I}_{4\times4}=-\frac{\vartheta^{-\sigma}M^X_1(\gamma(\xi_0))\vartheta^{\sigma}}
{\sqrt{48t\xi_0}(\xi-\xi_0)}+O(t^{-1}),\quad t\rightarrow\infty,~\xi\in \partial D_\epsilon(\xi_0).
\ee
The estimate \eqref{3.45} then immediately follows. By Cauchy's formula and \eqref{3.52}, we find \eqref{3.46}.
\end{proof}
On the other hand, it is easy to check that as $t\to\infty$, the jump matrix $J^{(2)}(x,t;\xi)$ in $D_\epsilon(-\xi_0)$ tends to $\mathcal{A}\left[J^{(\xi_0)}(\gamma(\xi_0);-z^*)\right]^*\mathcal{A}$ for $z\in X$. Thus, we can approximate $M^{(2)}(x,t;\xi)$ in the neighborhood $D_\epsilon(-\xi_0)$ of $-\xi_0$ by $\mathcal{A}\left[M^{(\xi_0)}(x,t;-\xi^*)\right]^*\mathcal{A}$.

\subsection{Find asymptotic formula}\label{sec3.4}
Define the approximate solution $M^{(app)}(x,t;\xi)$ by
\be\label{3.53}
M^{(app)}(x,t;\xi)=\left\{\begin{aligned}
&M^{(\xi_0)}(x,t;\xi),\qquad\qquad\quad \xi\in D_\epsilon(\xi_0),\\
&\mathcal{A}\left[M^{(\xi_0)}(x,t;-\xi^*)\right]^*\mathcal{A},\ \xi\in D_\epsilon(-\xi_0),\\
&\mathbb{I}_{4\times4},\qquad \qquad\qquad\,\,\ \qquad{\text elsewhere}.
\end{aligned}
\right.
\ee
Let $\hat{M}(x,t;\xi)$ be
\be\label{3.54}
\hat{M}(x,t;\xi)=M^{(2)}(x,t;\xi)\left[M^{(app)}(x,t;\xi)\right]^{-1}.
\ee
Denote $\hat{\Sigma}=\Sigma\cup\partial D_\epsilon(-\xi_0)\cup\partial D_\epsilon(\xi_0)$ and suppose that the boundaries of $D_\epsilon(\pm\xi_0)$ are oriented counterclockwise, see Figure \ref{fig4}. Then $\hat{M}(x,t;\xi)$ satisfies the following $4\times4$ matrix RH problem:\\
$\bullet$ $\hat{M}(x,t;\xi)$ is analytic for $\xi\in\bfC\setminus\hat{\Sigma}$ with continuous boundary values on $\hat{\Sigma}$;\\
$\bullet$ Across the contour $\hat{\Sigma}$, the limiting values $\hat{M}_\pm(x,t;\xi)$ obey the jump relation
\be
\hat{M}_+(x,t;\xi)=\hat{M}_-(x,t;\xi)\hat{J}(x,t;\xi);
\ee
$\bullet$ As $\xi\to\infty$, $\hat{M}(x,t;\xi)\to\mathbb{I}_{4\times4}$;\\
where the jump matrix $\hat{J}(x,t;\xi)$ is given by
\be\label{3.55}
\hat{J}=\left\{
\begin{aligned}
&M^{(app)}_-J^{(2)}\left[M^{(app)}_+\right]^{-1},\,\,\ \xi\in\hat{\Sigma}\cap (D_\epsilon(\xi_0)\cup D_\epsilon(-\xi_0)),\\
&\left[M^{(app)}\right]^{-1},\qquad\qquad\quad \xi\in\partial D_\epsilon(\xi_0)\cup\partial D_\epsilon(-\xi_0),\\
&J^{(2)},\qquad\qquad\qquad\qquad\quad \xi\in\hat{\Sigma}\setminus (\overline{D_\epsilon(\xi_0)}\cup\overline{D_\epsilon(-\xi_0)}).
\end{aligned}
\right.
\ee
\begin{figure}[htbp]
  \centering
 \includegraphics[width=3.5in]{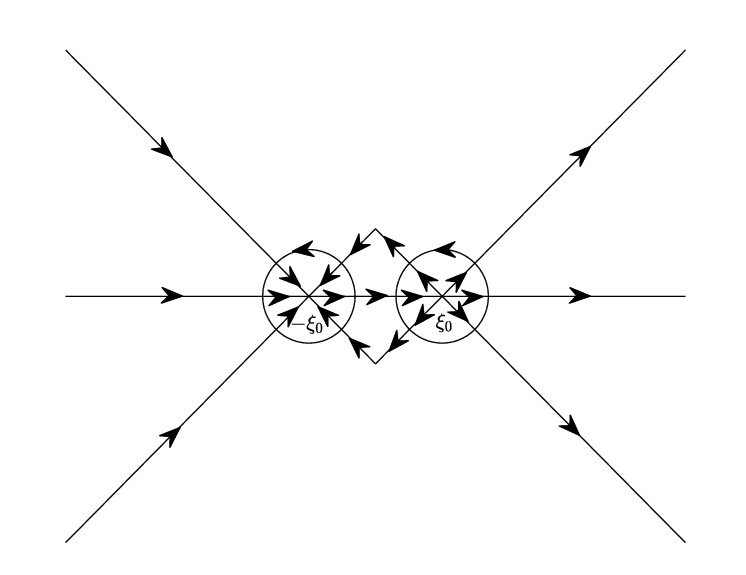}
   \caption{The oriented contour $\hat{\Sigma}$.}\label{fig4}
\end{figure}

Now we rewrite $\hat{\Sigma}$ as:
$$\hat{\Sigma}=(\partial D_\epsilon(-\xi_0)\cup\partial D_\epsilon(\xi_0))\cup(\mathcal{X}_{-\xi_0}^\epsilon\cup\mathcal{X}_{\xi_0}^\epsilon)
\cup\tilde{\Sigma}\cup\bfR,$$
where
$$\tilde{\Sigma}=\bigcup_1^4\Sigma_j\setminus(D_\epsilon(-\xi_0)\cup D_\epsilon(\xi_0)),$$
and denote $\hat{w}=\hat{J}-\mathbb{I}_{4\times4}$, then we have the following lemma.
\begin{lemma}\label{lem3.4}
For $1\leq n\leq\infty$, the following estimates hold:
\begin{align}
\|\hat{w}\|_{L^n(\partial D_\epsilon(-\xi_0)\cup\partial D_\epsilon(\xi_0))}&\leq Ct^{-\frac{1}{2}},\label{3.57}\\
\|\hat{w}\|_{L^n(\mathcal{X}_{-\xi_0}^\epsilon\cup\mathcal{X}_{\xi_0}^\epsilon)}&\leq Ct^{-\frac{1}{2}-\frac{1}{2n}}\ln t,\label{3.58}\\
\|\hat{w}\|_{L^n(\tilde{\Sigma})}&\leq C\e^{-ct},\label{3.59}\\
\|\hat{w}\|_{L^n(\bfR)}&\leq Ct^{-\frac{3}{2}}.\label{3.60}
\end{align}
\end{lemma}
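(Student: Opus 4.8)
The plan is to exploit the piecewise definition \eqref{3.55} of $\hat J$, splitting $\hat\Sigma$ into exactly the four pieces that appear in the four estimates, and on each piece to reduce the bound on $\hat w=\hat J-\mathbb{I}_{4\times4}$ to an estimate already supplied by Theorem \ref{th3.1}, Lemma \ref{lem3.1}, Lemma \ref{lem3.3}, or the Remark following Lemma \ref{lem3.2}. The common mechanism is that $M^{(app)}$ and $[M^{(app)}]^{-1}$ are uniformly bounded on $\hat\Sigma$, which follows from \eqref{3.6} together with the definitions \eqref{3.43} and \eqref{3.53}. Hence on each arc one factors $\hat w$ through a conjugation by $M^{(app)}_-$ (and $[M^{(app)}_\pm]^{-1}$), and the boundedness transfers the $L^n$ norm onto the inner factor.

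For \eqref{3.57}, on $\partial D_\epsilon(\pm\xi_0)$ the second branch of \eqref{3.55} gives $\hat J=[M^{(app)}]^{-1}$, so $\hat w=[M^{(\xi_0)}]^{-1}-\mathbb{I}_{4\times4}$ near $\xi_0$ and its $\mathcal A$-conjugate near $-\xi_0$; the $L^\infty$ bound \eqref{3.45} together with the fact that $\partial D_\epsilon(\pm\xi_0)$ has fixed finite length controls the $L^n$ norm by the $L^\infty$ norm for every $1\le n\le\infty$, yielding $O(t^{-1/2})$. For \eqref{3.58}, on the crosses $\mathcal X^\epsilon_{\pm\xi_0}$ the first branch of \eqref{3.55} applies, and here $M^{(app)}=M^{(\xi_0)}$ solves the model jump \emph{exactly}, i.e.\ $M^{(app)}_+=M^{(app)}_-J^{(\xi_0)}$. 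Substituting $[M^{(app)}_+]^{-1}=[J^{(\xi_0)}]^{-1}[M^{(app)}_-]^{-1}$ collapses the difference to
\[
\hat w=M^{(app)}_-\big(J^{(2)}-J^{(\xi_0)}\big)[J^{(\xi_0)}]^{-1}[M^{(app)}_-]^{-1}.
\]
Since $[J^{(\xi_0)}]^{-1}$ is bounded (it is triangular with exponentially small off-diagonal on the cross) and $M^{(app)}_-,[M^{(app)}_-]^{-1}$ are bounded, the bound reduces to the sharp jump comparison \eqref{3.44}, giving the stated $Ct^{-1/2-1/(2n)}\ln t$; the $-\xi_0$ contribution is identical by the $\mathcal A$-symmetry.

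For \eqref{3.59} and \eqref{3.60} one has $\hat J=J^{(2)}$. On $\tilde\Sigma$ the relevant entries are the analytic pieces $J^{(2)}_1,\dots,J^{(2)}_4$, which carry $\e^{\mp t\Phi}$ paired with $\gamma_a$ (resp.\ $\tilde\gamma_a$). On these steepest-descent rays $\mathrm{Re}\,\Phi$ has a fixed sign and, because $\tilde\Sigma$ stays a distance $\ge\epsilon$ from $\pm\xi_0$, obeys $|\mathrm{Re}\,\Phi|\ge c>0$; the analytic estimate \eqref{3.21} contributes only $\e^{\frac t4|\mathrm{Re}\,\Phi|}$, so the net exponent is $\le-\tfrac34 t|\mathrm{Re}\,\Phi|\le -ct$, giving \eqref{3.59}. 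On $\bfR$ one has $\hat J=J^{(2)}_5$ on $(-\xi_0,\xi_0)$ and $J^{(2)}_6$ outside, built from the remainders $\tilde\gamma_r,\gamma_r$; since $\Phi=8\ii\xi^3-2\ii\xi x/t$ is purely imaginary on $\bfR$ we have $|\e^{\pm t\Phi}|=1$, and with $\delta,\det\delta$ bounded the norm is controlled by $\|\gamma_r\|_{L^n}+\|\tilde\gamma_r\|_{L^n}$, which is $O(t^{-3/2})$ by parts (3)--(4) of Lemma \ref{lem3.1}, establishing \eqref{3.60}.

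Most of the analytic difficulty has been front-loaded into Lemmas \ref{lem3.1} and \ref{lem3.3}, so the proof is chiefly bookkeeping over the branches of \eqref{3.55}. The step demanding the most care is \eqref{3.58}: one must use that $M^{(app)}$ satisfies the model jump \emph{exactly}, so that $\hat w$ telescopes to $J^{(2)}-J^{(\xi_0)}$ before invoking the sharp $t^{-1/2-1/(2n)}\ln t$ bound of Lemma \ref{lem3.3}. A secondary point is verifying on $\tilde\Sigma$ that the factor-$\tfrac14$ loss built into the analytic estimate \eqref{3.21} is strictly dominated by the full $\e^{-t|\mathrm{Re}\,\Phi|}$ decay of the jump, uniformly in $(x,t)\in\mathcal O$.
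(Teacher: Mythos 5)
Your proposal is correct and takes essentially the same approach as the paper, which simply defers to Lemma 3.4 of \cite{LZG}: that argument is exactly your branch-by-branch bookkeeping over \eqref{3.55} --- telescoping through the exact model jump $M^{(\xi_0)}_+=M^{(\xi_0)}_-J^{(\xi_0)}$ on the crosses to invoke \eqref{3.44}, using \eqref{3.45} on the finite-length circles, the net $\e^{-\frac{3t}{4}|\text{Re}\,\Phi|}$ decay on $\tilde{\Sigma}$, and the $O(t^{-3/2})$ remainder estimates of Lemma \ref{lem3.1} together with $|\e^{\pm t\Phi}|=1$ on $\bfR$. The only trivial slip is that on $\tilde{\Sigma}$, which is unbounded, the estimate guaranteeing integrability at infinity is \eqref{3.22} (with the $(1+|\xi|^2)^{-1}$ prefactor) rather than \eqref{3.21}.
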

\begin{proof}
It follows the same line as Lemma 3.4 in \cite{LZG}.
\end{proof}

The uniformly vanishing bound on $\hat{w}$ in Lemma \ref{lem3.4} establishes RH problem for $\hat{M}(x,t;\xi)$ as a small-norm Riemann--Hilbert problem, for which there is a well known existence and uniqueness theorem \cite{PD}. In fact, we may write
\be\label{3.61}
\hat{M}(x,t;\xi)=\mathbb{I}_{4\times4}+\frac{1}{2\pi\ii}\int_{\hat{\Sigma}}
\frac{(\hat{\mu}\hat{w})(x,t;s)}{s-\xi}\dd s,
\ee
where the $4\times4$ matrix-valued function $\hat{\mu}(x,t;\xi)$ is the unique solution of
\be\label{3.62}
\hat{\mu}=\mathbb{I}_{4\times4}+\hat{C}_{\hat{w}}\hat{\mu}.
\ee
The singular integral operator $\hat{C}_{\hat{w}}$: $L^2(\hat{\Sigma})\to L^2(\hat{\Sigma})$ is defined for $f\in L^2(\hat{\Sigma})$ by
\begin{align}
\hat{C}_{\hat{w}}f&\doteq\hat{C}_-(f\hat{w}),\label{3.63}\\
(\hat{C}_-f)(\xi)&\doteq\lim_{\xi\to\hat{\Sigma}_-}\int_{\hat{\Sigma}}\frac{f(s)}{s-\xi}\frac{\dd s}{2\pi\ii},
\end{align}
where $\hat{C}_-$ is the well known Cauchy operator. Then, by Lemma \ref{lem3.4} and \eqref{3.63}, we find
\be\label{3.65}
\|\hat{C}_{\hat{w}}\|_{B(L^2(\hat{\Sigma}))}\leq C\|\hat{w}\|_{L^\infty(\hat{\Sigma})}\leq Ct^{-\frac{1}{2}}\ln t,
\ee
where $B(L^2(\hat{\Sigma}))$ denotes the Banach space of bounded linear operators $L^2(\hat{\Sigma})\rightarrow L^2(\hat{\Sigma})$. Therefore, there exists a $T>0$ such that $1-\hat{C}_{\hat{w}}\in B(L^2(\hat{\Sigma}))$ is invertible for all $(x,t)\in\mathcal{O},$ $t>T$. And hence the existence of both $\hat{\mu}$ and $\hat{M}$ immediately follow.

Moreover, standard estimates using the Neumann series shows that $\hat{\mu}(x,t;\xi)$ satisfies
\be\label{3.66}
\|\hat{\mu}(x,t;\cdot)-\mathbb{I}_{4\times4}\|_{L^2(\hat{\Sigma})}=O(t^{-\frac{1}{2}}),\quad t\rightarrow\infty.
\ee
In fact, Equation \eqref{3.62} is equivalent to $\hat{\mu}=\mathbb{I}_{4\times4}+(1-\hat{C}_{\hat{w}})^{-1}\hat{C}_{\hat{w}}\mathbb{I}_{4\times4}$. Using the Neumann series, one can obtain
$$\|(1-\hat{C}_{\hat{w}})^{-1}\|_{B(L^2(\hat{\Sigma}))}\leq\frac{1}{1-\|\hat{C}_{\hat{w}}\|_{B(L^2(\hat{\Sigma}))}}$$
whenever $\|\hat{C}_{\hat{w}}\|_{B(L^2(\hat{\Sigma}))}<1$. Then, we find
\bea
\begin{aligned}
\|\hat{\mu}(x,t;\cdot)-\mathbb{I}_{4\times4}\|_{L^2(\hat{\Sigma})}
=&\|(1-\hat{C}_{\hat{w}})^{-1}\hat{C}_{\hat{w}}\mathbb{I}_{4\times4}\|_{L^2(\hat{\Sigma})}\\
\leq&\|(1-\hat{C}_{\hat{w}})^{-1}\|_{B(L^2(\hat{\Sigma}))}\|\hat{C}_-(\hat{w})\|_{L^2(\hat{\Sigma})}\\
\leq&\frac{C\|\hat{w}\|_{L^2(\hat{\Sigma})}}{1-\|\hat{C}_{\hat{w}}\|_{B(L^2(\hat{\Sigma}))}}\leq C\|\hat{w}\|_{L^2(\hat{\Sigma})}
\end{aligned}
\eea
for all $t$ large enough. In view of Lemma \ref{lem3.4}, this gives \eqref{3.66}.

It then follows from \eqref{3.10}, \eqref{3.23}, \eqref{3.54}, \eqref{3.53} and \eqref{3.61} that
\be\label{3.68}
\lim_{\xi\rightarrow\infty}\xi(M(x,t;\xi)-\mathbb{I}_{4\times4})=
\lim_{\xi\rightarrow\infty}\xi(\hat{M}(x,t;\xi)-\mathbb{I}_{4\times4})
=-\frac{1}{2\pi\ii}\int_{\hat{\Sigma}}(\hat{\mu}\hat{w})(x,t;\xi)\dd\xi.
\ee
By \eqref{3.46}, \eqref{3.53}, \eqref{3.57} and \eqref{3.66}, we can get
\be
\begin{aligned}
&-\frac{1}{2\pi\ii}\int_{\partial D_\epsilon(-\xi_0)\cup\partial D_\epsilon(\xi_0)}(\hat{\mu}\hat{w})(x,t;\xi)\dd\xi\\
=&-\frac{1}{2\pi\ii}\int_{\partial D_\epsilon(-\xi_0)\cup\partial D_\epsilon(\xi_0)}\hat{w}\dd \xi
-\frac{1}{2\pi\ii}\int_{\partial D_\epsilon(-\xi_0)\cup\partial D_\epsilon(\xi_0)}(\hat{\mu}-\mathbb{I}_{4\times4})\hat{w}\dd\xi\\
=&-\frac{1}{2\pi\ii}\int_{\partial D_\epsilon(-\xi_0)}\left(\left(\mathcal{A}\left[M^{(\xi_0)}(x,t;-\xi^*)\right]^*\mathcal{A}\right)^{-1}-\mathbb{I}_{4\times4}\right)\dd \xi\\
&-\frac{1}{2\pi\ii}\int_{\partial D_\epsilon(\xi_0)}\left(\left(M^{(\xi_0)}(x,t;\xi)\right)^{-1}-\mathbb{I}_{4\times4}\right)\dd\xi \\
&+O(\|\hat{\mu}-\mathbb{I}_{4\times4}\|_{L^2(\partial D_\epsilon(-\xi_0)\cup\partial D_\epsilon(\xi_0))}
\|\hat{w}\|_{L^2(\partial D_\epsilon(-k_0)\cup\partial D_\epsilon(\xi_0))})\\
=&-\frac{\mathcal{A}\left[\vartheta^{-\sigma}M^X_1(\gamma(\xi_0))\vartheta^{\sigma}\right]^*\mathcal{A}}
{\sqrt{48t\xi_0}}+\frac{\vartheta^{-\sigma}M^X_1(\gamma(\xi_0))\vartheta^{\sigma}}
{\sqrt{48t\xi_0}}+O(t^{-1}).
\end{aligned}
\ee
Similarly, the contributions from $\mathcal{X}_{-k_0}^\epsilon\cup\mathcal{X}_{k_0}^\epsilon$, $\tilde{\Sigma}$ and $\bfR$ to the right-hand side of \eqref{3.68} are $O(t^{-1}\ln t)$, $O(\e^{-ct})$ and $O(t^{-3/2})$, respectively.

Thus, taking into account that the reconstructional formula \eqref{2.22} and \eqref{3.5}, we get
\begin{align} \label{3.71}
\begin{pmatrix}
u(x,t) & u^*(x,t) & w(x,t)
\end{pmatrix}=&2\ii\lim_{\xi\rightarrow\infty}(\xi M(x,t;\xi))_{12}\nn\\
=&2\ii\lim_{\xi\rightarrow\infty}\xi(\hat{M}(x,t;\xi)-\mathbb{I}_{4\times4})_{12}\\
=&2\ii\left(\frac{-\ii\vartheta^2\beta^X-\ii\left(\vartheta^2\beta^X\right)^*\sigma_1}
{\sqrt{48t\xi_0}}\right)+O\left(\frac{\ln t}{t}\right).\nn
\end{align}
Collecting above computations, we obtain the long-time asymptotic result of the solution in oscillating sector $\mathcal{O}$.
\begin{theorem}\label{th3.2}
Let $u_0(x)$ and $w_0(x)$ lie in the Schwartz space $\mathcal{S}(\bfR)$, and generate the scattering data in sense that: the determinant of the $3\times3$ matrix-valued spectral function $a(\xi)$ defined in \eqref{2.17} has no zeros in $\bfC_-$. Then, in oscillating sector $\mathcal{O}$, as $t\rightarrow\infty$, the solution of the initial problem for the new two-component Sasa-Satsuma equation \eqref{SS} on the line satisfies the following asymptotic formula
\be\label{3.71}
\begin{pmatrix}u(x,t) & w(x,t)\end{pmatrix}=\frac{\begin{pmatrix}u_{as}(x,t) & w_{as}(x,t)\end{pmatrix}}{\sqrt{t}}+O\left(\frac{\ln t}{t}\right),
\ee
where the leading-order coefficient $u_{as}(x,t)$ and $w_{as}(x,t)$ are given by
\begin{align}
u_{as}(x,t)=&\frac{\nu\e^{-\frac{\pi\nu}{2}}}{\sqrt{24\pi \xi_0}}\left((192t\xi_0^3)^{\ii\nu}\e^{16\ii t\xi_0^3+2\chi(\xi_0)+\frac{\pi\ii}{4}}\Gamma(-\ii\nu)\gamma_1(\xi_0)\right.\\
&\left.+(192t\xi_0^3)^{-\ii\nu}\e^{-16\ii t\xi_0^3-2\chi(\xi_0)-\frac{\pi\ii}{4}}\Gamma(\ii\nu)\gamma_1(-\xi_0)\right),\nn\\
w_{as}(x,t)=&\frac{\nu\e^{-\frac{\pi\nu}{2}}}{\sqrt{6\pi \xi_0}}\left|\Gamma(-\ii\nu)\gamma_2(\xi_0)\right|
\cos\bigg(\nu\ln(192t\xi_0^3)+16t\xi_0^3+\frac{\pi}{4}+\arg \gamma_2(\xi_0)\\
&+\arg\Gamma(-\ii\nu)-
\frac{1}{\pi}\int_{-\xi_0}^{\xi_0}\ln\left(\frac{1+\gamma(s)\gamma^\dag(s)}
{1+\gamma(\xi_0)\gamma^\dag(\xi_0)}\right)\frac{\dd s}{s-\xi_0}\bigg),\nn
\end{align}
where $\xi_0$, $\nu$, $\chi(\xi)$ and $\gamma_1(\xi)$, $\gamma_2(\xi)$ are given by \eqref{3.8}, \eqref{3.31}, \eqref{3.32} and \eqref{3.17}, respectively.
\end{theorem}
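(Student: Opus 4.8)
The plan is to read off the asymptotics directly from the reconstruction identity \eqref{2.22} by pushing it through the chain $M\mapsto M^{(1)}\mapsto M^{(2)}\mapsto\hat M$ already built in \eqref{3.10}, \eqref{3.23} and \eqref{3.54}. First I would note that every dressing factor becomes trivial at $\xi=\infty$: the matrix $\text{diag}\bigl([\det\delta(\xi)]^{-1},\delta(\xi)\bigr)$ of \eqref{3.10} tends to $\mathbb{I}_{4\times4}$ because $\det\delta\to1$ and $\delta\to\mathbb{I}_{3\times3}$, the factor $G$ of \eqref{3.24} is identically $\mathbb{I}_{4\times4}$ on $D_5\cup D_6$, and $M^{(app)}$ of \eqref{3.53} equals $\mathbb{I}_{4\times4}$ outside the two disks. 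Consequently $\lim_{\xi\to\infty}\xi(M-\mathbb{I}_{4\times4})=\lim_{\xi\to\infty}\xi(\hat M-\mathbb{I}_{4\times4})$, and feeding in the integral representation \eqref{3.61}, whose solvability is guaranteed by the small-norm bound \eqref{3.65}, turns the right-hand side into $-\frac{1}{2\pi\ii}\int_{\hat\Sigma}(\hat\mu\hat w)\,\dd\xi$; this is precisely \eqref{3.68}.

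Next I would decompose $\hat\Sigma$ into the circles $\partial D_\epsilon(\pm\xi_0)$, the crosses $\mathcal{X}_{\pm\xi_0}^\epsilon$, the truncated contour $\tilde\Sigma$, and $\bfR$, and bound each block of the integral using Lemma \ref{lem3.4} together with $\|\hat\mu-\mathbb{I}_{4\times4}\|_{L^2(\hat\Sigma)}=O(t^{-1/2})$ from \eqref{3.66}. The decisive observation is that only the circles contribute at order $t^{-1/2}$, while the crosses give $O(t^{-1}\ln t)$, $\tilde\Sigma$ gives $O(\e^{-ct})$ and $\bfR$ gives $O(t^{-3/2})$, all of which are absorbed into the stated error $O(t^{-1}\ln t)$. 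On $\partial D_\epsilon(\xi_0)$ the leading term is evaluated by residue through \eqref{3.46}, and on $\partial D_\epsilon(-\xi_0)$ it is produced by the reflection symmetry that replaces the parametrix by $\mathcal{A}[M^{(\xi_0)}(x,t;-\xi^*)]^*\mathcal{A}$; the two combine into $\frac{\vartheta^{-\sigma}M^X_1(\gamma(\xi_0))\vartheta^{\sigma}}{\sqrt{48t\xi_0}}$ minus its $\mathcal{A}(\cdot)^*\mathcal{A}$ companion.

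The last step is the algebraic extraction. I would substitute $(M^X_1)_{12}=-\ii\beta^X$ with $\beta^X=\frac{\nu\Gamma(-\ii\nu)\e^{\ii\pi/4-\pi\nu/2}}{\sqrt{2\pi}}\gamma(\xi_0)$ from Theorem \ref{th3.1}, and $\vartheta=(192t\xi_0^3)^{\ii\nu/2}\e^{8\ii t\xi_0^3+\chi(\xi_0)}$ from \eqref{3.34}, into the $(12)$-block of \eqref{3.68}; the conjugation $\vartheta^{-\sigma}(\cdot)\vartheta^{\sigma}$ simply multiplies this upper-right entry by $\vartheta^2$. Reading the resulting $1\times3$ row vector through \eqref{2.22} and the symmetry decomposition $\gamma=\bigl(\gamma_1(\xi),\gamma_1^*(-\xi^*),\gamma_2(\xi)\bigr)$ of \eqref{3.17}, I would identify the first entry with $u_{as}/\sqrt{t}$ and the third with $w_{as}/\sqrt{t}$; collecting the numerical constants $\frac{2}{\sqrt{48}\sqrt{2\pi}}=\frac{1}{\sqrt{24\pi}}$ and $\frac{4}{\sqrt{48}\sqrt{2\pi}}=\frac{1}{\sqrt{6\pi}}$ then gives the two displayed coefficients.

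I expect the only real subtlety to lie in the phase bookkeeping of this last reduction. One must check, using the symmetry $\det\delta(\xi)=[(\det\delta(\xi^*))^*]^{-1}$ recorded in the Proposition, that $\chi(\xi_0)$ from \eqref{3.32} is purely imaginary, with $2\,\mathrm{Im}\,\chi(\xi_0)=-\frac{1}{\pi}\int_{-\xi_0}^{\xi_0}\ln\!\left(\frac{1+\gamma(s)\gamma^\dag(s)}{1+\gamma(\xi_0)\gamma^\dag(\xi_0)}\right)\frac{\dd s}{s-\xi_0}$. This is exactly what lets the circle contribution at $\xi_0$ and its reflected partner at $-\xi_0$ organize into the two frequency-reversed terms for the complex-valued $u$ (carrying the distinct factors $\gamma_1(\xi_0)$ and $\gamma_1(-\xi_0)$), and makes the real-valued $w$ collapse to a single cosine with precisely the argument in the theorem. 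All genuinely hard analysis — the existence, uniqueness and sharp $t^{-1/2}$ accuracy of the local model, and the small-norm solvability of $\hat M$ — has already been discharged in Theorem \ref{th3.1} and Lemma \ref{lem3.4}, so the proof of Theorem \ref{th3.2} amounts to this assembly and simplification.
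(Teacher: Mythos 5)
Your proposal is correct and follows essentially the same route as the paper's own proof in Subsection \ref{sec3.4}: the chain $M\mapsto M^{(1)}\mapsto M^{(2)}\mapsto\hat M$, the small-norm representation \eqref{3.61}--\eqref{3.66}, the decomposition of $\hat\Sigma$ with the error hierarchy of Lemma \ref{lem3.4}, the residue evaluation \eqref{3.46} on the circles together with the $\mathcal{A}(\cdot)^*\mathcal{A}$ reflection at $-\xi_0$, and the extraction of \eqref{3.5} via $\vartheta^2$. Your phase bookkeeping (including the purely imaginary $\chi(\xi_0)$ and the constants $\frac{2}{\sqrt{48}\sqrt{2\pi}}=\frac{1}{\sqrt{24\pi}}$, $\frac{4}{\sqrt{48}\sqrt{2\pi}}=\frac{1}{\sqrt{6\pi}}$) matches the paper's computation \eqref{3.71} exactly.
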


\section{Asymptotic analysis in Painlev\'e sector $\mathcal{P}$}\label{sec4}
In this section, we study the long-time asymptotics of solution to the new two-component Sasa--Satsuma equation \eqref{SS} in Painlev\'e region $\mathcal{P}$ defined by
\be\label{4.1}
\mathcal{P}=\{(x,t)\in\bfR^2|t>1,\ |x|\leq Nt^{1/3}\},\ N \ \text{constant}.
\ee
Let
\berr
\mathcal P_{\geq}\doteq\mathcal P\cap\{x\geq0\},\quad \mathcal P_{\leq}\doteq\mathcal P\cap\{x\leq0\}
\eerr
denote the right and left halves of $\mathcal P$. The long-time asymptotic formula of the solution for the case $(x,t)\in\mathcal P_{\geq}$ will be established, the case when $(x,t)\in\mathcal P_{\leq}$ can be handled in a similar but easy way.

At first, we will prove Theorem \ref{theorem4.1}, which expresses the large $z$ behavior of the solution of a model RH problem in terms of the solution of a new coupled Painlev\'e II equation. This result will be important for analyzing the long-time asymptotics of the solution of system \eqref{SS} in region $\mathcal{P}_\geq$.
\subsection{Another model RH problem}\label{sec4.1}
\begin{figure}[htbp]
\centering
\includegraphics[width=3.5in]{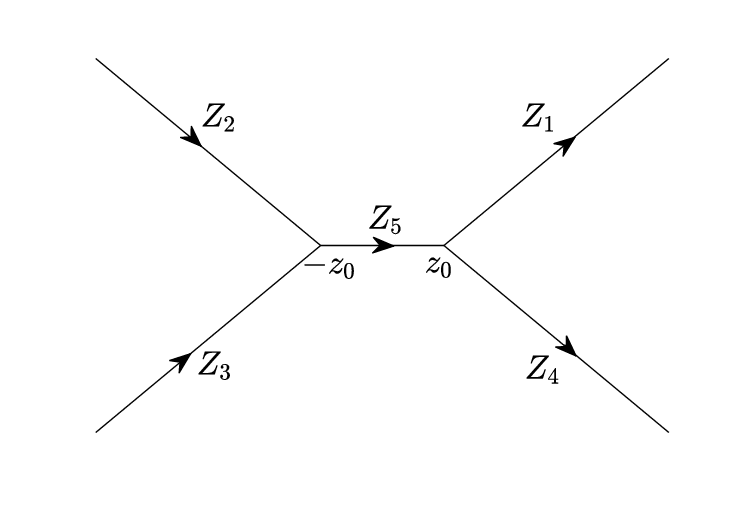}
\caption{The oriented contour $Z$.}\label{fig5}
\end{figure}
Given $z_0\geq0$, let $Z$ denote the contour $Z=\cup_{j=1}^5Z_j$, where the line segments
\be\label{4.2}
\begin{aligned}
Z_1&=\{z_0+\varsigma\e^{\frac{\pi\ii}{6}}|\varsigma\geq0\},\qquad\,\,\
Z_2=\{-z_0+\varsigma\e^{\frac{5\pi\ii}{6}}|\varsigma\geq0\},\\
Z_3&=\{-z_0+\varsigma\e^{-\frac{5\pi\ii}{6}}|\varsigma\geq0\},\quad
Z_4=\{z_0+\varsigma\e^{-\frac{\pi\ii}{6}}|\varsigma\geq0\},\\
Z_5&=\{\varsigma|-z_0\leq \varsigma\leq z_0\},
\end{aligned}
\ee
are oriented as in Figure \ref{fig5}. We consider the following family of RH problems parameterized by $y>0$ and a $1\times3$ complex-valued row vector $p$ with $p=p^*\sigma_1$ for $M^Z$:\\
$\bullet$ $M^Z(y,p;z)$ is analytic in $\bfC\setminus Z$ with continuous boundary values on $Z$;\\
$\bullet$ For $z\in Z$, the boundary values $M^Z_\pm$ satisfy the jump relation
$M^Z_+(y,p;z)=M^Z_-(y,p;z)J^Z(y,p;z);$\\
$\bullet$  $M^Z(y,p;z)\rightarrow \mathbb{I}_{4\times4}$, as $z\rightarrow\infty$;\\
where the jump matrix $J^Z$ is defined by
\be\label{4.3}
J^Z(y,p;z)=\left\{
\begin{aligned}
&\begin{pmatrix}
1 & \textbf{0}_{1\times3}\\
p^\dag\e^{2\ii(\frac{4}{3}z^3-yz)} & \mathbb{I}_{3\times3}
\end{pmatrix},\quad z\in Z_1\cup Z_2,\\
&\begin{pmatrix}
1 & p\e^{-2\ii(\frac{4}{3}z^3-yz)}\\
\textbf{0}_{3\times1} & \mathbb{I}_{3\times3}
\end{pmatrix},\quad z\in Z_3\cup Z_4,\\
&\begin{pmatrix}
1 & p\e^{-2\ii(\frac{4}{3}z^3-yz)}\\
\textbf{0}_{3\times1} & \mathbb{I}_{3\times3}
\end{pmatrix}\begin{pmatrix}
1 & \textbf{0}_{1\times3}\\
p^\dag\e^{2\ii(\frac{4}{3}z^3-yz)} & \mathbb{I}_{3\times3}
\end{pmatrix},\quad z\in Z_5.
\end{aligned}
\right.
\ee
\begin{theorem}\label{theorem4.1}
Define the parameter subset
\be\label{4.4}
\Bbb P=\{(y,t,z_0)\in\bfR^3|0\leq y\leq C_1,\ t\geq2,\ \sqrt{y}/2\leq z_0\leq C_2\},
\ee
where $C_1$, $C_2>0$ are constants. Then the RH problem for $M^Z$ with jump matrix $J^Z$ has a unique solution $M^Z(y,p;z)$ whenever $(y,t,z_0)\in\Bbb P$. There are smooth functions $\{M_j^Z(y)\}$ such that
\be\label{4.5}
M^Z(y,p;z)=\mathbb{I}_{4\times4}+\sum_{j=1}^N\frac{M_j^Z(y)}{z^j}+O(z^{-N-1}), \quad z\rightarrow\infty,
\ee
however, the $(1,2)$ entry and $(1,4)$ entry of leading coefficient $M_1^Z$ are given by
\be\label{4.6}
\left(M_1^Z(y)\right)_{12}=u_p(y),\quad \left(M_1^Z(y)\right)_{14}= \ii w_p(y),
\ee
where $u_p(y)$ and $w_p(y)$ are complex-valued and real-valued functions, respectively, and are the smooth solution of a new coupled Painlev\'e II equation \eqref{B.5}. Moreover, $M^Z(y,p;z)$ is uniformly bounded for $z\in\bfC\setminus Z$, and satisfies the symmetries
\be\label{4.7}
M^Z(y,p;z)=\left[\left(M^Z\right)^\dag(y,p;z^*)\right]^{-1}
=\mathcal{A}\left[M^Z(y,p;-z^*)\right]^*\mathcal{A}.
\ee
\end{theorem}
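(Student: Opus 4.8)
The plan is to analyze the RH problem for $M^Z$ in the spirit of the Painlev\'e parametrices that arise in the Deift--Zhou method: every analytic feature (unique solvability, boundedness, asymptotics, symmetries) is deduced from a vanishing lemma, while the coupled Painlev\'e~II system is extracted from a Lax pair satisfied by $M^Z$. Each triangular factor in \eqref{4.3} is unimodular and the product on $Z_5$ again has unit determinant, so $\det J^Z\equiv1$; hence $\det M^Z$ continues to an entire function tending to $1$ at infinity, and once $M^Z$ is known to be bounded, Liouville's theorem gives $\det M^Z\equiv1$ and uniqueness (the quotient of two solutions is entire, bounded, and tends to $\mathbb{I}_{4\times4}$). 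For existence I would invoke Zhou's vanishing lemma \cite{ZX}. The contour $Z$ is invariant under $z\mapsto z^*$, and one checks directly that $J^Z(y,p;z^*)=[J^Z(y,p;z)]^\dagger$ on the conjugate ray pairs $Z_1\leftrightarrow Z_4$ and $Z_2\leftrightarrow Z_3$, using that $\tfrac43 z^3-yz$ has real coefficients. On the real segment $Z_5$, for real $z$ the lower factor is the Hermitian conjugate of the upper one, so $J^Z=BB^\dagger$ with $B$ unipotent upper triangular; thus $J^Z$ is Hermitian and positive definite there. These are precisely the hypotheses of the vanishing lemma, which excludes nontrivial solutions of the homogeneous problem and yields a solution for every admissible $(y,p)$.

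I would prove the symmetries \eqref{4.7} by uniqueness: the maps $z\mapsto[(M^Z)^\dagger(y,p;z^*)]^{-1}$ and $z\mapsto\mathcal{A}[M^Z(y,p;-z^*)]^*\mathcal{A}$ each solve the same RH problem, because $J^Z$ respects the corresponding involutions---the first is the Hermitian symmetry used above, and the second follows from $p=p^*\sigma_1$ together with $\mathcal{A}^2=\mathbb{I}_{4\times4}$ and the block structure of $\mathcal{A}$. By uniqueness these transformed matrices coincide with $M^Z$. Restricting the symmetries to the coefficient $M_1^Z$ in \eqref{4.5} then forces $(M_1^Z)_{14}$ to be purely imaginary (so that $w_p$ is real in \eqref{4.6}) and pins the phase of $(M_1^Z)_{12}$, which is the source of the complex-valued $u_p$ with constant phase.

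Because the phase $\tfrac43 z^3-yz$ is cubic, the exponentials in \eqref{4.3} decay like $\e^{-c|z|^3}$ along all four rays, so $J^Z-\mathbb{I}_{4\times4}$ is exponentially small at infinity; the standard expansion argument then produces the full series \eqref{4.5} with coefficients $M_j^Z(y)$ that are smooth in $y$, the smoothness following from the smooth and locally uniformly solvable dependence of the RH problem on $y$. To reach \eqref{4.6} I would set $\Psi(z)=M^Z(y,p;z)\e^{\ii(\frac43 z^3-yz)\sigma}$, whose jumps across $Z$ are piecewise constant; hence $\Psi_z\Psi^{-1}$ and $\Psi_y\Psi^{-1}$ are entire, and by \eqref{4.5} they are polynomials in $z$, namely $\Psi_z\Psi^{-1}=\ii(4z^2-y)\sigma+4\ii z[M_1^Z,\sigma]+(\cdots)$ of degree two and $\Psi_y\Psi^{-1}=-\ii z\sigma+(\cdots)$ of degree one, whose remaining coefficients are explicit in $M_1^Z$ and $M_2^Z$. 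Since $\sigma$ is diagonal, the $(1,2)$ and $(1,4)$ entries of $[M_1^Z,\sigma]$ isolate $(M_1^Z)_{12}$ and $(M_1^Z)_{14}$; writing $u_p=(M_1^Z)_{12}$ and $\ii w_p=(M_1^Z)_{14}$, the compatibility $\partial_y(\Psi_z\Psi^{-1})-\partial_z(\Psi_y\Psi^{-1})+[\Psi_z\Psi^{-1},\Psi_y\Psi^{-1}]=0$ collapses to the coupled Painlev\'e~II system \eqref{B.5}; the detailed bookkeeping is carried out in Appendix \ref{secB}.

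The steps above establish everything pointwise in the parameters. The genuinely delicate point, and the crux of the proof, is the \emph{uniform} solvability together with the uniform bound over the whole region $\Bbb P$ in \eqref{4.4}, where the contour itself moves with $z_0$ and the two endpoints $\pm z_0$ may coalesce as $y,z_0\to0$. I would upgrade the pointwise vanishing-lemma solvability to uniform invertibility of the associated Beals--Coifman operator $1-C_{w}$ by a continuity and compactness argument on the compact $(y,z_0)$-set (with $\sqrt{y}/2\le z_0\le C_2$), after a translation and rescaling that freezes the contour; the coalescing-endpoint regime is where the resolvent estimates must be controlled most carefully, and securing those bounds there---uniformly in $p$ over compact subsets---is the main obstacle.
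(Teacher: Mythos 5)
Your ingredients for existence, uniqueness, the symmetries \eqref{4.7}, the expansion \eqref{4.5}, and the Lax-pair derivation of \eqref{B.5} coincide with the paper's (Zhou's vanishing lemma via Hermitian positive-definiteness of $J^Z$ on $Z_5$ and Schwarz symmetry off $\bfR$, then the polynomial entire matrices $\Psi_z\Psi^{-1}$, $\Psi_y\Psi^{-1}$ and their compatibility, carried out in Appendix \ref{secB}). But the step you yourself identify as the crux --- uniform solvability and the uniform bound over $\Bbb P$ for the moving contour with coalescing endpoints $\pm z_0$ --- is exactly the step your proposal leaves open, and your suggested route is shakier than you indicate: a translation cannot freeze a contour with \emph{two} distinguished points, and the rescaling $z=z_0\zeta$ turns the phase into $\frac{8\ii}{3}z_0^3\zeta^3-2\ii yz_0\zeta$, whose coefficients degenerate as $y,z_0\to0$, so the rescaled family is not continuous up to the boundary of the parameter set and the vanishing lemma supplies no quantitative resolvent control there. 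The idea you are missing, which constitutes essentially the whole of the paper's proof of this point, is that no uniform resolvent estimate on $Z$ is needed: every entry of $J^Z$ in \eqref{4.3} is the restriction of an entire function of $z$, and the jump on $Z_5$ is precisely the product of the analytic continuations of the upper-triangular factor carried by $Z_3\cup Z_4$ and the lower-triangular factor carried by $Z_1\cup Z_2$. Hence $Z$ can be collapsed onto the fixed, $z_0$-independent contour $P$ of Appendix \ref{secB} (``a trivial contour deformation''), $M^Z(y,p;\cdot)$ and $M^P(y;\cdot)$ differ only by explicit triangular factors that are bounded on the compact lens regions between $Z$ and $P$ uniformly for $0\leq y\leq C_1$, $\sqrt{y}/2\leq z_0\leq C_2$, and unique solvability plus the uniform bound over $\Bbb P$ are simply inherited from the fixed model $M^P$ of Lemma \ref{lemmaB.1}, whose data depend on the single compact parameter $y$. (Note also that $t$ appears in \eqref{4.4} but not in $J^Z$, so no uniformity in $t$ is actually at stake.)

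A second, smaller error: you attribute the constant phase of $u_p$ to the symmetries. The symmetries \eqref{4.7}, equivalently \eqref{B.16}, only force $\left(M_1^Z(y)\right)_{12}=\begin{pmatrix}u_p(y) & -u_p^*(y) & \ii w_p(y)\end{pmatrix}$, i.e.\ the reality of $w_p$ and the relation between the first two entries; they impose no constraint on $\arg u_p$. In the paper the constant phase is an ODE consequence: writing $u_p=f\e^{\ii\alpha}$ in \eqref{B.5} gives the first integral $f^2\alpha'=C$, and the decay of $u_p$, $w_p$ as $y\to-\infty$ --- which must itself be established by a steepest-descent argument for $M^P$, and is also what fixes the integration constants in \eqref{B.21} when passing from \eqref{B.18}--\eqref{B.20} to \eqref{B.22} --- forces $C=0$. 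Your compatibility computation is otherwise the same as the paper's, but these two points (the deformation to the fixed contour $P$, and the decay-based first-integral argument) are load-bearing and absent from your proposal.
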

\begin{proof}
It is easy to see that
\begin{align}
\text{Re}\left(2\ii\left(\frac{4}{3}z^3-yz\right)\right)
=\varsigma\left(-\frac{8}{3}\varsigma^2-4\sqrt{3}z_0\varsigma-4z_0^2+y\right)
\leq-\frac{8}{3}\varsigma^3-4\sqrt{3}z_0\varsigma^2,\nn
\end{align}
for all $z=z_0+\varsigma\e^{\frac{\pi\ii}{6}}$ and $z=-z_0+\varsigma\e^{\frac{5\pi\ii}{6}}$ with $\varsigma\geq0,z_0\geq0$ and $0<y\leq4z_0^2$. Thus, we have
\berr
|\e^{2\ii(\frac{4}{3}z^3+yz)}|\leq C\e^{-|z\pm z_0|^2(|z\pm z_0|+z_0)},\quad z\in Z_1\cup Z_2.
\eerr
Analogous estimates hold for $z\in Z_3\cup Z_4$.
This shows that $J^Z\to \mathbb{I}_{4\times4}$ exponentially fast as $z\rightarrow\infty$.

Note that, the jump matrix $J^Z$ satisfies the same symmetries \eqref{B.6} and \eqref{B.14} as $J^P$. In other words, $J^Z$ is Hermitian and positive definite on $Z\cap\bfR$ and satisfies $J^Z(y,p;z)=(J^Z)^\dag(y,p;z^*)$ on $Z\setminus\bfR$. This implies that the jump conditions and the jump matrices in the RH problem for $M^Z$ satisfy the hypotheses of Zhou's vanishing lemma \cite{ZX}, that is, the jump contour $Z_1\cup Z_2$ has the necessary invariance under Schwarz reflection with orientation and if $z$ lies in the part of jump contour on the real axis, $J^Z+(J^Z)^\dag$ is positive definite. Thus we deduce the unique existence of the solution $M^Z$. The symmetries of $J^Z$ implies that \eqref{4.7} follows. Moreover, the RH problem for $M^Z(y,p;z)$ can be transformed into the RH problem for $M^P(y;z)$ stated in Appendix \ref{secB} up to a trivial contour deformation. Therefore, we complete the proof of Theorem \ref{theorem4.1}.
\end{proof}

\subsection{Transformations}

Suppose $(x,t)\in\mathcal P_\geq$. Then, as $t\rightarrow\infty$, the critical points $\pm \xi_0$ given by \eqref{3.8} approach 0. In this case, we only need the triangular factorization of the jump matrix in the form as follows:
\be
J(x,t;\xi)=\begin{pmatrix}
1 & \gamma(\xi)\e^{-t\Phi(x,t;\xi)}\\
\textbf{0}_{3\times1} & \mathbb{I}_{3\times3}
\end{pmatrix}\begin{pmatrix}
1 & \textbf{0}_{1\times3} \\
\gamma^\dag(\xi)\e^{t\Phi(x,t;\xi)} & \mathbb{I}_{3\times3}
\end{pmatrix}.
\ee
\begin{figure}[htbp]
\centering
\includegraphics[width=3.5in]{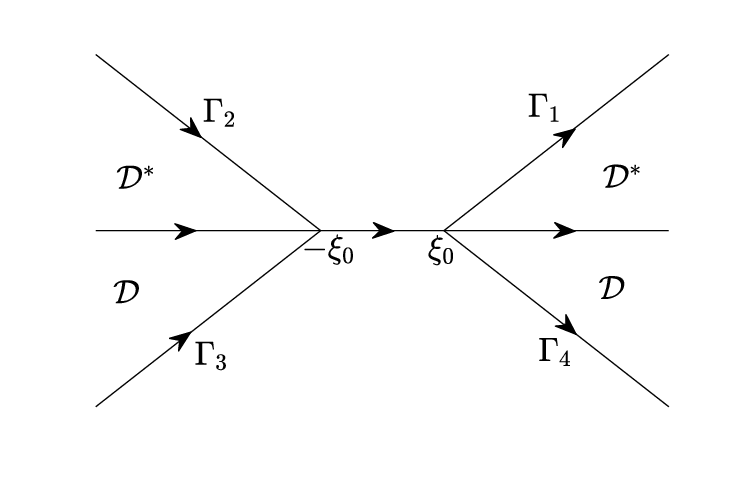}
\caption{The sets $\mathcal D$ and $\mathcal D^*$ and oriented contour $\Gamma$.}\label{fig6}
\end{figure}
Define the contour $\Gamma\subset\bfC$ by $\Gamma=\cup_{j=1}^4\Gamma_j\cup\bfR$, where
\be
\begin{aligned}
\Gamma_1&=\{\xi_0+\varsigma\e^{\frac{\pi\ii}{6}}|\varsigma\geq0\},\quad\quad\,\,\
\Gamma_2=\{-\xi_0+\varsigma\e^{\frac{5\pi\ii}{6}}|\varsigma\geq0\},\\
\Gamma_3&=\{-\xi_0+\varsigma\e^{-\frac{5\pi\ii}{6}}|\varsigma\geq0\},\quad
\Gamma_4=\{\xi_0+\varsigma\e^{-\frac{\pi\ii}{6}}|\varsigma\geq0\}.
\end{aligned}
\ee
The orientation of $\Gamma$ and the triangular domains $\mathcal{D}$, $\mathcal{D}^*$ are shown in Figure \ref{fig6}. Recalling \eqref{3.17}, by Lemma \ref{lem3.1}, we also have the following analytic decomposition lemma for $\gamma_1(\xi)$ and $\gamma_2(\xi)$.
\begin{lemma}
For $j=1,2$, we have
\be
\gamma_j(\xi)=\gamma_{j,a}(x,t;\xi)+\gamma_{j,r}(x,t;\xi), \quad \xi\in(-\infty,-\xi_0)\cup(\xi_0,\infty),
\ee
where the functions $\gamma_{j,a}$ and $\gamma_{j,r}$ satisfy:\\
(i) For $(x,t)\in\mathcal{P}_\geq$, $\gamma_{j,a}(x,t;\xi)$ is defined and continuous for $\xi\in\bar{\mathcal D}$ and analytic for $\xi\in\mathcal D$.\\
(ii) The function $\gamma_{j,a}$ satisfies
\be
|\gamma_{j,a}(x,t;\xi)|\leq \frac{C}{1+|\xi|^2}\e^{\frac{t}{4}|\text{Re}\Phi(x,t;\xi)|},~\xi\in\bar{\mathcal D},
\ee
and
\be
|\gamma_{j,a}(x,t;\xi)-\gamma(\xi_0)|\leq C|\xi-\xi_0|\e^{\frac{t}{4}|\text{Re}\Phi(x,t;\xi)|},~\xi\in\bar{\mathcal D}.
\ee
(iii) The $L^1, L^2$ and $L^\infty$ norms of the function $\gamma_{j,r}(x,t;\cdot)$ on $(-\infty,-\xi_0)\cup(\xi_0,\infty)$ are $O(t^{-3/2})$ as $t\rightarrow\infty$ uniformly with respect to $(x,t)\in\mathcal{P}_\geq$.
\end{lemma}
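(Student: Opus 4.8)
The plan is to establish this lemma as the Painlev\'e-sector analogue of Lemma \ref{lem3.1}: the statement, the bounds in (i)--(iii), and the geometry are identical in form to those in the oscillating sector, the only genuinely new feature being that the critical points $\pm\xi_0$ now coalesce at the origin as $t\to\infty$. Accordingly I would reuse the Deift--Zhou analytic-approximation machinery in the form of Lemma 4.8 of \cite{Lenells-IUMJ}, checking uniformity as $\xi_0\to0$.

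First I would record the regularity input. Since $u_0,w_0\in\mathcal S(\bfR)$, standard Volterra analysis of \eqref{2.11} together with \eqref{2.14} shows that $a(\xi),b(\xi)$, and hence the reflection coefficient $\gamma(\xi)=-b(\xi)a^{-1}(\xi)$ together with its scalar components $\gamma_1,\gamma_2$ of \eqref{3.17}, are smooth and rapidly decreasing on $\bfR$ (of Schwartz class); the no-zeros hypothesis on $\det a$ guarantees invertibility of $a$ on $\bfR$. In the present region only the single upper/lower triangular factorization of $J$ is used, so it suffices to decompose $\gamma_1,\gamma_2$ on $(\xi_0,\infty)$; the result on $(-\infty,-\xi_0)$ then follows from the symmetry $\gamma(\xi)=\gamma^*(-\xi^*)\sigma_1$ recorded in \eqref{2.21}.

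Next I would carry out the construction, which is genuinely $(x,t)$-dependent: $\gamma_{j,a}$ is built to absorb more of $\gamma_j$ as $t\to\infty$. The analytic part combines a finite-order Taylor match of $\gamma_j$ at $\xi_0$, dressed by an analytic rational factor $\bigl(\frac{\xi_0-\ii}{\xi-\ii}\bigr)^{k}$ (or its conjugate, according to whether $\mathcal D$ lies in the lower or upper half-plane) whose single pole is kept off $\bar{\mathcal D}$ and with $k$ large enough to override the polynomial growth and produce the spatial decay $\frac{C}{1+|\xi|^2}$; this yields (i) together with the first bound in (ii), and the matching at $\xi_0$ gives the Lipschitz estimate. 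The exponential weight $\e^{\frac t4|\text{Re}\Phi|}$ in (ii) is exactly what licenses deforming $\gamma_{j,a}$ into $\mathcal D$: since $\text{Re}\Phi$ is sign-definite there by the signature table, the product $\gamma_{j,a}\e^{-t\Phi}$ remains bounded and decaying. For the remainder estimate (iii), I would absorb the low-frequency Fourier content of $\gamma_j$ into $\gamma_{j,a}$ with a $t$-dependent cutoff, so that $\gamma_{j,r}=\gamma_j-\gamma_{j,a}$ is a high-frequency tail; the Schwartz decay of $\widehat{\gamma_j}$ then makes its $L^1$, $L^2$ and $L^\infty$ norms smaller than any prescribed negative power of $t$, in particular $O(t^{-3/2})$.

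The hard part will be uniformity as $\xi_0\to0$. In $\mathcal P_\geq$ one has $\xi_0=\sqrt{x/(12t)}=O(t^{-1/3})$, so the two neighbourhoods that were well separated in Lemma \ref{lem3.1} now merge at the origin, and I must verify that every constant $C$, as well as the implied constant in $O(t^{-3/2})$, is uniform over $(x,t)\in\mathcal P_\geq$. This is the one point where the argument departs from the oscillating sector, and it is settled by the smoothness of $\gamma_j$ at $0$: as $\xi_0\to0$ the Taylor data $\gamma_j^{(n)}(\xi_0)\to\gamma_j^{(n)}(0)$ stay bounded, and the dressing factor $\bigl(\frac{\xi_0-\ii}{\xi-\ii}\bigr)^{k}$ is uniformly controlled for $\xi_0$ in a bounded neighbourhood of $0$. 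Hence all estimates hold uniformly on $\mathcal P_\geq$, and the decomposition follows exactly along the lines of Lemma \ref{lem3.1}.
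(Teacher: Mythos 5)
Your proposal is correct and takes essentially the same route as the paper: the paper offers no separate proof of this lemma, deducing it ``by Lemma \ref{lem3.1}'', whose own proof is simply a citation of Lemma 4.8 in \cite{Lenells-IUMJ} --- exactly the analytic-approximation machinery you invoke, down to the Taylor matching, rational dressing factor, and Fourier-cutoff remainder estimate. Your explicit verification that the constants remain uniform as $\xi_0=\sqrt{x/(12t)}\to0$ on $\mathcal P_{\geq}$ is a real point that the paper leaves implicit (it is addressed in the cited literature, e.g.\ \cite{CL}), and your resolution of it via the smoothness of $\gamma_j$ at the origin is sound.
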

Thus, we have obtained a decomposition $\gamma(\xi)=\gamma_{a}(x,t;\xi)+\gamma_{r}(x,t;\xi)$ by setting
\begin{align*}
\gamma_{a}(x,t;\xi)=&(\gamma_{1,a}(x,t;\xi),\gamma^*_{1,a}(x,t;-\xi^*),\gamma_{2,a}(x,t;\xi)),\quad \xi\in\mathcal D,\\
\gamma_{r}(x,t;\xi)=&(\gamma_{1,r}(x,t;\xi),\gamma^*_{1,r}(x,t;-\xi),\gamma_{2,r}(x,t;\xi)),\quad \xi\in\bfR.
\end{align*}

Now we can deform the contour by introducing the new $4\times4$ matrix-valued function $M^{(1)}(x,t;\xi)$ as follows:
\be\label{4.13}
M^{(1)}(x,t;\xi)=M(x,t;\xi)\times\left\{
\begin{aligned}
&\begin{pmatrix}
1 & \gamma_a(x,t;\xi)\e^{-t\Phi(x,t;\xi)}\\
\textbf{0}_{3\times1} & \mathbb{I}_{3\times3}
\end{pmatrix},\quad\,\,\ \xi\in \mathcal D,\\
&\begin{pmatrix}
1 & \textbf{0}_{1\times3}\\
-\gamma^\dag_a(x,t;\xi^*)\e^{t\Phi(x,t;\xi)} & \mathbb{I}_{3\times3}
\end{pmatrix},\quad \xi\in \mathcal D^*,\\
&\mathbb{I}_{4\times4},\qquad\qquad\qquad\qquad\qquad\qquad\ \ \ \text{elsewhere}.
\end{aligned}
\right.
\ee
Then $M^{(1)}(x,t;\xi)$ satisfies the following RH problem:\\
$\bullet$ $M^{(1)}(x,t;\xi)$ is analytic in $\bfC\setminus\Gamma$ with continuous boundary values on $\Gamma$;\\
$\bullet$ For $\xi\in\Gamma$, the limiting values $M^{(1)}_\pm$ obey the jump condition $M^{(1)}_+(x,t;\xi)=M^{(1)}_-(x,t;\xi)J^{(1)}(x,t;\xi),$
where
\be
J^{(1)}=\left\{
\begin{aligned}
&\begin{pmatrix}
1 & \textbf{0}_{1\times3}\\
\gamma^\dag_a\e^{t\Phi} & \mathbb{I}_{3\times3}
\end{pmatrix},\ \xi\in\Gamma_1\cup\Gamma_2, \,\
\begin{pmatrix}
1 & \gamma_a\e^{-t\Phi}\\
\textbf{0}_{3\times1} & \mathbb{I}_{3\times3}
\end{pmatrix},\ \xi\in\Gamma_3\cup\Gamma_4, \\
&\begin{pmatrix}
1 & \gamma_r\e^{-t\Phi}\\
\textbf{0}_{3\times1} & \mathbb{I}_{3\times3}
\end{pmatrix}\begin{pmatrix}
1 & \textbf{0}_{1\times3}\\
\gamma^\dag_r\e^{t\Phi} & \mathbb{I}_{3\times3}
\end{pmatrix},\ \xi\in(-\infty,-\xi_0)\cup(\xi_0,\infty),\\
&\begin{pmatrix}
1 & \gamma\e^{-t\Phi}\\
\textbf{0}_{3\times1} & \mathbb{I}_{3\times3}
\end{pmatrix}\begin{pmatrix}
1 & \textbf{0}_{1\times3} \\
\gamma^\dag\e^{t\Phi} & \mathbb{I}_{3\times3}
\end{pmatrix},\,\,\ \xi\in(-\xi_0,\xi_0);
\end{aligned}
\right.
\ee
$\bullet$ As $\xi\to\infty$, $M^{(1)}(x,t;\xi)\to\mathbb{I}_{4\times4}$.

\subsection{Local model}
\begin{figure}[htbp]
\centering
\includegraphics[width=3.5in]{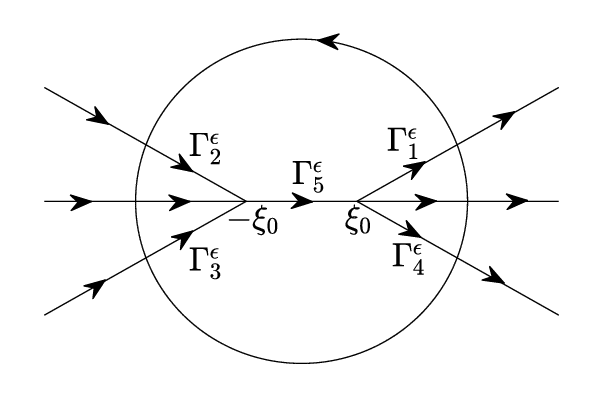}
\caption{The oriented contour $\hat{\Gamma}$ and $\Gamma^\epsilon=\cup_{j=1}^5\Gamma^\epsilon_j$.}\label{fig7}
\end{figure}
Select suitable $\epsilon>0$ and denote $D_\epsilon(0)=\{\xi\in\bfC||\xi|<\epsilon\}$. Define contour
$\Gamma^\epsilon=(\Gamma\cap D_\epsilon(0))\setminus((-\infty,-\xi_0)\cup(\xi_0,\infty))$, see Figure \ref{fig7}.
Let
\be
z\doteq(3t)^{\frac{1}{3}}\xi,\quad y\doteq(3t)^{-\frac{1}{3}}x,
\ee
such that
\be\label{4.16}
t\Phi(x,t;\xi)=2\ii\left(\frac{4}{3}z^3-yz\right).
\ee
Then for fixed $z$ and as $t\to\infty$, the jump matrix $J^{(1)}(x,t;\xi)$ tends to
\be
J^{(1)}(x,t;\xi)\rightarrow\left\{
\begin{aligned}
&\begin{pmatrix}
1 & \textbf{0}_{1\times3} \\
\gamma^\dag(0)\e^{2\ii(\frac{4}{3}z^3-yz)} & \mathbb{I}_{3\times3}
\end{pmatrix}, \quad\quad \xi\in \Gamma^\epsilon_1\cup\Gamma^\epsilon_2,\\
&\begin{pmatrix}
1 & \gamma(0)\e^{-2\ii(\frac{4}{3}z^3-yz)}\\
\textbf{0}_{3\times1} & \mathbb{I}_{3\times3}\\
\end{pmatrix},\qquad \xi\in \Gamma^\epsilon_3\cup\Gamma^\epsilon_4,\\
&\begin{pmatrix}
1 & \gamma(0)\e^{-2\ii(\frac{4}{3}z^3-yz)}\\
\textbf{0}_{3\times1} & \mathbb{I}_{3\times3}\\
\end{pmatrix}\begin{pmatrix}
1 & \textbf{0}_{1\times3} \\
\gamma^\dag(0)\e^{2\ii(\frac{4}{3}z^3-yz)} & \mathbb{I}_{3\times3}
\end{pmatrix},\,\,\xi\in \Gamma^\epsilon_5,
\end{aligned}
\right.
\ee
which is nothing but the jump matrix $J^Z$ defined in \eqref{4.3} with $p=\gamma(0)$. Thus, as $t\rightarrow\infty$, in $D_\epsilon(0)$, the solution $M^{(1)}(x,t;\xi)$ can be approximated by a $4\times4$ matrix-valued function $M^{(0)}(x,t;\xi)$ defined by
\be\label{4.18}
M^{(0)}(x,t;\xi)\doteq M^Z(y,\gamma(0);z),
\ee
where $M^Z(y,\gamma(0);z)$ is the solution of the model RH problem established in Subsection \ref{sec4.1} with $z_0=\sqrt{y}/2$. Moreover, if $(x,t)\in\mathcal{P}_\geq$, then $(y,t,z_0)\in\Bbb P$, where $\Bbb P$ is the parameter subset defined in \eqref{4.4}. Thus, Theorem \ref{theorem4.1} implies that $M^{(0)}(x,t;\xi)$ is well-defined by \eqref{4.18}.

\begin{lemma}
For $(x,t)\in\mathcal{P}_\geq$, the function $M^{(0)}(x,t;\xi)$ is analytic for $\xi\in D_\epsilon(0)\setminus\Gamma^\epsilon$ such that $|M^{(0)}(x,t;\xi)|\leq C$.
Across the contour $\Gamma^\epsilon$, the continuous boundary values $M^{(0)}_\pm$ obey the jump relation $M_+^{(0)}=M_-^{(0)}J^{(0)}$, where jump matrix $J^{(0)}$ satisfies, for $1\leq n\leq\infty$,
\be\label{4.19}
\left\|J^{(1)}-J^{(0)}\right\|_{L^n(\Gamma^\epsilon)}\leq Ct^{-\frac{1}{3}(1+\frac{1}{n})}.
\ee
Moreover, as $t\rightarrow\infty$, we have
\be\label{4.20}
\left\|\left[M^{(0)}(x,t;\xi)\right]^{-1}-\mathbb{I}_{4\times4}\right\|_{L^\infty(\partial D_\epsilon(0))}=O\left(t^{-\frac{1}{3}}\right),
\ee
and
\be\label{4.21}
-\frac{1}{2\pi\ii}\int_{\partial D_\epsilon(0)}\left(\left[M^{(0)}(x,t;\xi)\right]^{-1}-\mathbb{I}_{4\times4}\right)\dd \xi=\frac{ M_1^{(0)}(y)}{(3t)^{\frac{1}{3}}}+O(t^{-\frac{2}{3}}),
\ee
where
\be\label{3.20}
\left(M_1^{(0)}(y)\right)_{12}=u_p(y),\quad \left(M_1^{(0)}(y)\right)_{14}=\ii w_p(y),
\ee
furthermore, the complex-valued function $u_p(y)$ and real-valued function $w_p(y)$ are the smooth solution of the new coupled Painlev\'e II equation \eqref{B.5}.
\end{lemma}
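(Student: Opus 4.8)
The plan is to read off every assertion from the properties of the model solution $M^Z$ in Theorem \ref{theorem4.1}, using that the scaling $z=(3t)^{1/3}\xi$, $y=(3t)^{-1/3}x$ makes $t\Phi$ exactly equal to the model phase $2\ii(\tfrac43 z^3-yz)$ by \eqref{4.16}. Since $M^{(0)}(x,t;\xi)=M^Z(y,\gamma(0);z)$ and the map $\xi\mapsto z$ is affine, the analyticity of $M^Z$ on $\bfC\setminus Z$ transfers to analyticity of $M^{(0)}$ on $D_\epsilon(0)\setminus\Gamma^\epsilon$; here one uses that the scaling carries $\Gamma^\epsilon$ into $Z$, which holds because the vertices $\pm\xi_0$ map to the vertices $\pm z_0$ of $Z$ — a direct consequence of $z_0=\sqrt y/2$ and $\xi_0=\sqrt{x/(12t)}$, giving $z_0=(3t)^{1/3}\xi_0$ — and the ray angles coincide. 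The bound $|M^{(0)}|\leq C$ is the uniform boundedness of $M^Z$, applicable because $(x,t)\in\mathcal P_{\geq}$ forces $(y,t,z_0)\in\Bbb P$. The jump $M^{(0)}_+=M^{(0)}_-J^{(0)}$ is the pullback of $M^Z_+=M^Z_-J^Z$, with $J^{(0)}$ the matrix displayed in \eqref{4.17} carrying $p=\gamma(0)$.

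The central point is \eqref{4.19}. Because the phase matches exactly, $J^{(1)}-J^{(0)}$ is purely off-diagonal, controlled on the rays by $(\gamma_a-\gamma(0))\e^{-t\Phi}$ and its Hermitian conjugate, and on the segment $\Gamma^\epsilon_5$ by $(\gamma-\gamma(0))\e^{\mp t\Phi}$. I would split $\gamma_a-\gamma(0)=(\gamma_a-\gamma(\xi_0))+(\gamma(\xi_0)-\gamma(0))$. On $\mathcal P_{\geq}$ one has $\xi_0=O(t^{-1/3})$, so the smoothness of $\gamma$ gives $|\gamma(\xi_0)-\gamma(0)|\leq C\xi_0=O(t^{-1/3})$, while the decomposition lemma for $\gamma_1,\gamma_2$ gives $|\gamma_a-\gamma(\xi_0)|\leq C|\xi-\xi_0|\e^{\frac t4|\mathrm{Re}\,\Phi|}$ with $|\xi-\xi_0|=(3t)^{-1/3}|z-z_0|$. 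The genuine decay of $\e^{-t\Phi}$ on $\Gamma^\epsilon$, expressed through \eqref{4.16} as $\mathrm{Re}\big(2\ii(\tfrac43 z^3-yz)\big)\leq -c(\varsigma^3+z_0\varsigma^2)$ as in the proof of Theorem \ref{theorem4.1}, dominates the excess factor $\e^{\frac t4|\mathrm{Re}\,\Phi|}$, so each contribution is bounded by $Ct^{-1/3}(1+\varsigma)\e^{-c(\varsigma^3+z_0\varsigma^2)}$. Taking the supremum yields $\|J^{(1)}-J^{(0)}\|_{L^\infty(\Gamma^\epsilon)}\leq Ct^{-1/3}$; integrating with $|d\xi|=(3t)^{-1/3}|dz|$ and the absolute integrability of the Gaussian-type factor yields $\|J^{(1)}-J^{(0)}\|_{L^1(\Gamma^\epsilon)}\leq Ct^{-2/3}$. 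On $\Gamma^\epsilon_5$ the same bounds hold more easily, since $\mathrm{Re}\,\Phi=0$ there and the segment has length $2\xi_0=O(t^{-1/3})$. The interpolation inequality $\|f\|_{L^n}\leq\|f\|_{L^\infty}^{1-1/n}\|f\|_{L^1}^{1/n}$ then gives \eqref{4.19}.

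The remaining estimates follow from the expansion \eqref{4.5}. On $\partial D_\epsilon(0)$ we have $|z|=(3t)^{1/3}\epsilon\to\infty$, whence $\big[M^{(0)}\big]^{-1}-\mathbb{I}_{4\times4}=-M_1^Z(y)/z+O(z^{-2})=-M_1^Z(y)/\big((3t)^{1/3}\xi\big)+O(t^{-2/3})$ uniformly, the coefficient being bounded because $M_1^Z(\cdot)$ is smooth on the compact $y$-range; this is \eqref{4.20}. Inserting this expansion into the contour integral and evaluating the single residue at $\xi=0$, with $\partial D_\epsilon(0)$ counterclockwise so that $\frac1{2\pi\ii}\oint\xi^{-1}\,d\xi=1$, gives \eqref{4.21} with $M_1^{(0)}(y)=M_1^Z(y)$. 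The identification of the $(1,2)$ and $(1,4)$ entries with $u_p(y)$ and $\ii w_p(y)$ is then exactly \eqref{4.6}.

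The main obstacle is the estimate \eqref{4.19}: the residue computations and the analyticity and boundedness statements are immediate corollaries of Theorem \ref{theorem4.1}, whereas \eqref{4.19} requires carefully balancing the excess growth $\e^{\frac t4|\mathrm{Re}\,\Phi|}$ carried by the analytic approximant against the true decay of $\e^{\mp t\Phi}$ on each subcontour, and tracking the two independent $t^{-1/3}$ gains coming from $|\xi-\xi_0|$ and from $\xi_0$.
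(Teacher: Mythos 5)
Your proof is correct and is essentially the paper's own argument: the paper delegates the proof to Lemma 4.1 of \cite{LZG}, whose reasoning is exactly what you carry out — pull back through the scaling $z=(3t)^{1/3}\xi$ (so that by \eqref{4.16} the phases match exactly and $z_0=(3t)^{1/3}\xi_0$, making $\Gamma^\epsilon$ correspond to $Z$), split $\gamma_a-\gamma(0)$ through $\gamma(\xi_0)$ so that the analytic-approximation bound $C|\xi-\xi_0|\e^{\frac{t}{4}|\mathrm{Re}\,\Phi|}$ is dominated by $|\e^{t\Phi}|=\e^{-t|\mathrm{Re}\,\Phi|}$ while $|\gamma(\xi_0)-\gamma(0)|\leq C\xi_0=O(t^{-1/3})$, deduce the $L^\infty$ bound $O(t^{-1/3})$ and $L^1$ bound $O(t^{-2/3})$ and interpolate to obtain \eqref{4.19}, then use the expansion \eqref{4.5} together with Cauchy's formula on the counterclockwise circle $\partial D_\epsilon(0)$ to get \eqref{4.20}--\eqref{4.21}, identifying the $(1,2)$ and $(1,4)$ entries via \eqref{4.6} and Lemma \ref{lemmaB.1}. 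The only slip is cosmetic: on $\Gamma_5^\epsilon$ the difference $J^{(1)}-J^{(0)}$ is not purely off-diagonal, since the factorized jump contributes the $(1,1)$ entry $\gamma(\xi)\gamma^\dag(\xi)-\gamma(0)\gamma^\dag(0)$, but because $\mathrm{Re}\,\Phi=0$ there and $|\gamma(\xi)-\gamma(0)|\leq C\xi_0$ on a segment of length $2\xi_0$, this term satisfies the same $L^\infty=O(t^{-1/3})$ and $L^1=O(t^{-2/3})$ bounds you state, so the conclusion is unaffected.
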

\begin{proof}
The proof follows the similar lines as Lemma 4.1 in \cite{LZG}.
\end{proof}
\subsection{Find asymptotic formula}
Define the contour $\hat{\Gamma}=\Gamma\cup\partial D_\epsilon(0)$ and let the boundary of $D_\epsilon(0)$ is oriented counterclockwise as depicted in Figure \ref{fig7}. We now introduce $\hat{M}(x,t;\xi)$ by
\be\label{4.23}
\hat{M}(x,t;\xi)=\left\{
\begin{aligned}
&M^{(1)}(x,t;\xi)\left[M^{(0)}(x,t;\xi)\right]^{-1},\quad \xi\in D_\epsilon(0),\\
&M^{(1)}(x,t;\xi),\qquad\qquad\qquad\,\,\,\qquad \xi\in\bfC\setminus D_\epsilon(0).
\end{aligned}
\right.
\ee
It then can be shown that $\hat{M}(x,t;\xi)$ satisfies the following RH problem:\\
$\bullet$ $\hat{M}(x,t;\xi)$ is analytic outside the contour $\hat{\Gamma}$ with continuous boundary values on $\hat{\Gamma}$;\\
$\bullet$ For $\xi\in\hat{\Gamma}$, we have the jump relation
$\hat{M}_+(x,t;\xi)=\hat{M}_-(x,t;\xi)\hat{J}(x,t;\xi);$\\
$\bullet$ $\hat{M}(x,t;\xi)\to\mathbb{I}_{4\times4}$, as $\xi\to\infty$;\\
where the jump matrix $\hat{J}(x,t;\xi)$ is described by
\be\label{4.24}
\hat{J}=\left\{
\begin{aligned}
&M^{(0)}_-J^{(1)}\left[M^{(0)}_+\right]^{-1},\,\,\ \xi\in\hat{\Gamma}\cap D_\epsilon(0),\\
&\left[M^{(0)}\right]^{-1},\qquad\qquad\ \xi\in\partial D_\epsilon(0),\\
&J^{(1)},\qquad\qquad\qquad\quad\ \xi\in\hat{\Gamma}\setminus\overline{D_\epsilon(0)}.
\end{aligned}
\right.
\ee
\begin{lemma}\label{lemma4.3}
Let $\hat{\omega}=\hat{J}-\mathbb{I}_{4\times4}$, $\tilde{\Gamma}=\Gamma\setminus(\bfR\cup\overline{D_\epsilon(0)})$. For each $1\leq n\leq\infty$, we have
\begin{align}
&\|\hat{\omega}\|_{L^n(\partial D_\epsilon(0))}\leq Ct^{-\frac{1}{3}},\label{4.25}\\
&\|\hat{\omega}\|_{L^n(\Gamma^\epsilon)}\leq Ct^{-\frac{1}{3}(1+\frac{1}{n})},\label{4.26}\\
&\|\hat{\omega}\|_{L^n(\bfR\setminus[-\xi_0,\xi_0])}\leq Ct^{-\frac{3}{2}},\label{4.27}\\
&\|\hat{\omega}\|_{L^n(\tilde{\Gamma})}\leq C\e^{-ct}.\label{4.28}
\end{align}
\end{lemma}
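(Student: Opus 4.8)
The plan is to estimate $\hat\omega=\hat J-\mathbb I_{4\times4}$ separately on the four disjoint pieces of $\hat\Gamma$ singled out in the definition \eqref{4.24}, namely $\partial D_\epsilon(0)$, $\Gamma^\epsilon=\hat\Gamma\cap D_\epsilon(0)$, the real rays $\bfR\setminus[-\xi_0,\xi_0]$, and $\tilde\Gamma$. On each piece the explicit form of $\hat J$ reduces the problem to a bound already in hand, together with the uniform boundedness $|M^{(0)}(x,t;\xi)|\leq C$ supplied by the local model lemma.

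First I would dispose of the two pieces where the needed inequality is immediate. On $\partial D_\epsilon(0)$ we have $\hat J=[M^{(0)}]^{-1}$, hence $\hat\omega=[M^{(0)}]^{-1}-\mathbb I_{4\times4}$; the $L^\infty$ estimate \eqref{4.20} gives $\|\hat\omega\|_{L^\infty}=O(t^{-1/3})$, and since the circle has finite length every $L^n$ norm is controlled by the $L^\infty$ norm, which yields \eqref{4.25}. On $\Gamma^\epsilon$ we have $\hat J=M^{(0)}_-J^{(1)}[M^{(0)}_+]^{-1}$; using that $M^{(0)}$ itself solves $M^{(0)}_+=M^{(0)}_-J^{(0)}$ we may rewrite
\be
\hat\omega=M^{(0)}_-\left(J^{(1)}-J^{(0)}\right)\left[M^{(0)}_+\right]^{-1},\qquad \xi\in\Gamma^\epsilon,
\ee
so that boundedness of $M^{(0)}$ and $[M^{(0)}]^{-1}$ together with \eqref{4.19} gives \eqref{4.26} at once.

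Next I would treat the two pieces governed by the analytic decomposition of the preceding lemma, where $\hat J$ equals $J^{(1)}$ (conjugated by the bounded factors $M^{(0)}_\pm$ on the portion of the real rays still inside $D_\epsilon(0)$, which does not change the orders). On $\bfR\setminus[-\xi_0,\xi_0]$ the entries of $J^{(1)}-\mathbb I_{4\times4}$ are built from the residual terms $\gamma_r,\gamma_r^\dag$ times $\e^{\pm t\Phi}$; since $\Phi$ is purely imaginary for real $\xi$ these exponentials have modulus one, so $|\hat\omega|\leq C|\gamma_r|$ pointwise, and property (iii) of the decomposition lemma (the $L^1$, $L^2$ and $L^\infty$ norms of $\gamma_{j,r}$ are $O(t^{-3/2})$) yields \eqref{4.27} for every $n$ by interpolation. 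On $\tilde\Gamma$ the off-diagonal entries of $J^{(1)}-\mathbb I_{4\times4}$ are $\gamma_a\e^{-t\Phi}$ on $\Gamma_3\cup\Gamma_4$ and $\gamma_a^\dag\e^{t\Phi}$ on $\Gamma_1\cup\Gamma_2$; combining the analytic bound $|\gamma_{j,a}|\leq C\e^{\frac t4|\text{Re}\,\Phi|}$ from property (ii) with the oscillatory factor and the sign of $\text{Re}\,\Phi$ recorded in the signature table gives the pointwise estimate $|\hat\omega|\leq C\e^{-\frac{3t}4|\text{Re}\,\Phi|}$, and since $|\text{Re}\,\Phi|\geq c>0$ uniformly on $\tilde\Gamma$ (where $|\xi\mp\xi_0|\geq\epsilon$) this produces the exponential decay \eqref{4.28}.

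The hard part will be the last estimate \eqref{4.28}: one must check, using the cubic growth of $\text{Re}\,\Phi$ along the rays $\Gamma_j$ and the signs in the signature table, that the growth margin $\frac t4|\text{Re}\,\Phi|$ permitted inside the analytic approximant $\gamma_a$ is strictly dominated by the decay $t\,\text{Re}\,\Phi$ furnished by $\e^{\mp t\Phi}$, so the net exponent remains negative. The genuinely delicate point is uniformity: in the Painlev\'e regime the two stationary points $\pm\xi_0$ coalesce at the origin as $t\to\infty$, so one must verify that the lower bound $|\text{Re}\,\Phi|\geq c$ on $\tilde\Gamma$, and the implied constant $c$, do not degenerate as $\xi_0\to0$ uniformly over $(x,t)\in\mathcal P_\geq$. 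I expect this to follow from the explicit structure $\Phi=8\ii\xi^3-2\ii\xi x/t$ together with $0\leq x\leq Nt^{1/3}$, but it is the step that requires the most care.
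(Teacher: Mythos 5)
Your proof is correct and is essentially the argument the paper relies on, since the paper's own ``proof'' is just a citation to Lemma 4.3 of \cite{CL}, whose proof proceeds exactly as you do: the conjugation identity $\hat\omega=M^{(0)}_-\bigl(J^{(1)}-J^{(0)}\bigr)\bigl[M^{(0)}_+\bigr]^{-1}$ on $\Gamma^\epsilon$ combined with \eqref{4.19}, the expansion \eqref{4.20} on $\partial D_\epsilon(0)$, the $O(t^{-3/2})$ bounds on $\gamma_r$ with $|\e^{\pm t\Phi}|=1$ on the real axis, and the margin $\e^{\frac{t}{4}|\text{Re}\,\Phi|}\cdot\e^{-t|\text{Re}\,\Phi|}=\e^{-\frac{3t}{4}|\text{Re}\,\Phi|}$ on $\tilde\Gamma$. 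The uniformity issue you flag is settled exactly as you anticipate: writing $t\Phi=2\ii(\tfrac43 z^3-yz)$ and using the inequality from the proof of Theorem \ref{theorem4.1} (valid since $y\leq 4z_0^2$ on $\mathcal P_\geq$), one gets $\text{Re}\,\Phi\leq-8\varsigma^3\leq-\epsilon^3$ on $\tilde\Gamma$ because the cubic term dominates even as $\xi_0\to0$, so the constant $c$ in \eqref{4.28} does not degenerate.
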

\begin{proof}
See the proof of Lemma 4.3 in \cite{CL}.
\end{proof}
As the discussion in Subsection \ref{sec3.4}, the estimates in Lemma \ref{lemma4.3} show that the RH problem for $\hat{M}$ has a unique solution given by
\be\label{4.29}
\hat{M}(x,t;\xi)=\mathbb{I}_{4\times4}+\frac{1}{2\pi\ii}\int_{\hat{\Gamma}}
(\hat{\nu}\hat{\omega})(x,t;s)\frac{\dd s}{s-\xi},
\ee
where $\hat{\nu}=\mathbb{I}_{4\times4}+(1-\hat{C}_{\hat{\omega}})^{-1}
\hat{C}_{\hat{\omega}}\mathbb{I}_{4\times4}$ satisfies the estimate
\be\label{4.30}
\|\hat{\nu}(x,t;\cdot)-\mathbb{I}_{4\times4}\|_{L^2(\hat{\Gamma})}=O(t^{-\frac{1}{3}}),\quad t\rightarrow\infty.
\ee
As $t\rightarrow\infty$, it then follows from \eqref{4.13}, \eqref{4.23} and \eqref{4.29} that
\be\label{4.31}
\begin{aligned}
\lim_{\xi\rightarrow\infty}\xi\left(M(x,t;\xi)-\mathbb{I}_{4\times4}\right)
&=\lim_{\xi\rightarrow\infty}\xi\left(\hat{M}(x,t;\xi)-\mathbb{I}_{4\times4}\right)\\
&=-\frac{1}{2\pi\ii}\int_{\hat{\Gamma}}(\hat{\nu}\hat{\omega})(x,t;\xi)\dd \xi\\
&=-\frac{1}{2\pi\ii}\int_{\partial D_\epsilon(0)}\hat{\omega}(x,t;\xi)\dd \xi+O(t^{-\frac{2}{3}})\\
&=\frac{ M_1^{(0)}(y)}{(3t)^{\frac{1}{3}}}+O(t^{-\frac{2}{3}}).
\end{aligned}
\ee
Using the reconstruction formula \eqref{2.22}, we hence obtain long-time asymptotics of the solution in Painlev\'e sector $\mathcal{P}$.
\begin{theorem}\label{th4.2}
Under the assumptions of Theorem \ref{th3.2}, the solution of the new two-component Sasa--Satsuma equation \eqref{SS} satisfies the following asymptotic formula in Painlev\'e region $\mathcal{P}$ as $t\to\infty$
\begin{align}
u(x,t)=&\frac{2\ii}{(3t)^{\frac{1}{3}}}u_p\left(\frac{x}{(3t)^{\frac{1}{3}}}\right)+O(t^{-\frac{2}{3}}),\\ w(x,t)=&-\frac{2}{(3t)^{\frac{1}{3}}}w_p\left(\frac{x}{(3t)^{\frac{1}{3}}}\right)+O(t^{-\frac{2}{3}}),
\end{align}
where complex-valued function $u_p(y)$ and real-valued function $w_p(y)$ denote the smooth solution of the new coupled Painlev\'e II equation \eqref{B.5} corresponding to $p\doteq\gamma(0)$ according to Lemma \ref{lemmaB.1}. Particularly, the function $u_p(y)$ has constant phase, namely, $\arg u_p$ is independent of $y$.
\end{theorem}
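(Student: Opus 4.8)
The plan is to obtain both asymptotic formulas by transcribing the reconstruction formula \eqref{2.22} through the chain of identities already assembled in this section. All of the genuine analysis—the deformation \eqref{4.13}, the local parametrix \eqref{4.18} built from the model solution of Subsection \ref{sec4.1}, the small-norm estimates of Lemma \ref{lemma4.3} with the resulting bound \eqref{4.30} on $\hat\nu$, and the limit \eqref{4.31}—is in place, so the remaining task is purely algebraic bookkeeping of block entries, with the error term $O(t^{-2/3})$ inherited verbatim from \eqref{4.31}. First I would combine \eqref{2.22} with \eqref{4.31} to write
\begin{equation*}
\begin{pmatrix} u(x,t) & u^*(x,t) & w(x,t)\end{pmatrix}
=2\ii\lim_{\xi\to\infty}(\xi M)_{12}
=\frac{2\ii}{(3t)^{1/3}}\bigl(M_1^{(0)}(y)\bigr)_{12}+O(t^{-2/3}),
\end{equation*}
where $(\cdot)_{12}$ denotes the $1\times3$ upper-right block and $y=x/(3t)^{1/3}$.

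Matching the three scalar components of this row vector against \eqref{3.20} then produces both formulas. The first component gives $u(x,t)=\tfrac{2\ii}{(3t)^{1/3}}u_p(y)+O(t^{-2/3})$ since $(M_1^{(0)})_{12}=u_p(y)$; the third component gives $w(x,t)=2\ii\cdot\tfrac{\ii w_p(y)}{(3t)^{1/3}}+O(t^{-2/3})=-\tfrac{2}{(3t)^{1/3}}w_p(y)+O(t^{-2/3})$, using $(M_1^{(0)})_{14}=\ii w_p(y)$ and $2\ii\cdot\ii=-2$. (As a consistency check, the middle component reproduces $u^*$, since the symmetries force $(M_1^{(0)})_{13}=-u_p^*$, while the reality of $w_p$—hence of $w$—also follows from those symmetries.) No new estimate is needed for this part.

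The only genuinely new assertion is that $\arg u_p$ is independent of $y$, and this is the step I expect to be the main obstacle, since it is exactly the feature distinguishing the Sasa--Satsuma case from the matrix mKdV situation of \cite{LZG}. My plan is to exploit the constraint $p=p^*\sigma_1$ on $p=\gamma(0)$ inherited from \eqref{2.21}, which forces $p=(p_1,p_1^*,p_3)$ with $p_3\in\bfR$, together with a gauge reduction of the model problem of Subsection \ref{sec4.1}. Conjugating $M^Z$ by the constant matrix $G=\text{diag}(1,\e^{-\ii\alpha},\e^{\ii\alpha},1)$ maps the RH problem with data $p$ to the one with data $(\e^{\ii\alpha}p_1,\e^{-\ii\alpha}p_1^*,p_3)$—which still satisfies the reality constraint—while transforming the leading coefficient by $\tilde M_1^Z=GM_1^Z G^{-1}$, so that $u_p\mapsto \e^{\ii\alpha}u_p$ and $w_p\mapsto w_p$. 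Choosing $\alpha=-\arg p_1$ reduces matters to the case of real $p_1$, for which I would invoke an additional Schwarz-type symmetry of the RH problem for $M^Z$ (combining the two symmetries \eqref{4.7} with $z\mapsto-z^*$, available precisely because $p_1,p_3$ are then real) to conclude that $u_p(y)$ is real-valued for all $y$. Undoing the gauge then yields $\arg u_p(y)\equiv\arg p_1$, a constant. This is the point where the special block structure of $J^Z$ must be used, and it is exactly what Lemma \ref{lemmaB.1} and the analysis of \eqref{B.5} in Appendix \ref{secB} encode, so in the write-up I would state the asymptotics directly and cite Appendix \ref{secB} for the constant-phase conclusion.
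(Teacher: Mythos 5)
Your derivation of the two asymptotic formulas is exactly the paper's: the paper likewise combines the reconstruction formula \eqref{2.22} with \eqref{4.31} and reads off the $(1,2)$ and $(1,4)$ entries of $M_1^{(0)}(y)$ via \eqref{4.6}/\eqref{B.4}, and your block bookkeeping (including the consistency check $(M_1^{(0)})_{13}=-u_p^*$, which matches \eqref{B.23}) is correct, with the $O(t^{-2/3})$ error inherited from Lemma \ref{lemma4.3} and \eqref{4.30} just as you say.

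For the constant-phase assertion your sketch genuinely diverges from the paper. The paper proves it in Appendix \ref{secB} by an ODE argument: it derives the Lax pair \eqref{B.12} for $\psi=M^P\e^{\ii(\frac{4}{3}z^3-yz)\sigma}$, obtains the coupled Painlev\'e II system \eqref{B.5}, writes $u_p=f\e^{\ii\alpha}$, deduces the first integral $f^2\alpha'=C$ from \eqref{B.26}, and uses the decay of $M_1^P$ as $y\to-\infty$ to force $C=0$. Your route is instead a gauge-plus-Schwarz-symmetry argument on the RH problem itself: the conjugation by $G=\mathrm{diag}(1,\e^{-\ii\alpha},\e^{\ii\alpha},1)$ does preserve the constraint $p=p^*\sigma_1$ and sends $u_p\mapsto\e^{\ii\alpha}u_p$, $w_p\mapsto w_p$, and for real $p$ the map $N(z)=[M^Z(y,p;-z^*)]^*$ does solve the same RH problem (the phase satisfies $\overline{\theta(-z^*)}=\theta(z)$ and $p^{\dag*}=p^\dag$), so uniqueness applies. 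This avoids the $y\to-\infty$ decay input entirely, which is a real structural advantage; on the other hand the paper's ODE route is needed anyway to produce the Painlev\'e II system \eqref{B.5} that appears in the statement, so it gets constant phase essentially for free. One concrete slip in your sketch: the reflection symmetry gives $M_1^Z=-\left(M_1^Z\right)^*$ (since $N(z)=\mathbb{I}_{4\times4}-\left(M_1^Z\right)^*/z+\cdots$), so in the reduced case $u_p$ comes out \emph{purely imaginary}, not real (consistent with $w_p$ real from $(M_1^Z)_{14}=\ii w_p$, and with the extra factor $\ii$ in $u=\frac{2\ii}{(3t)^{1/3}}u_p$). The conclusion you need survives unchanged --- purely imaginary is still constant phase, and undoing the gauge gives $\arg u_p=\arg p_1\pm\frac{\pi}{2}$ independent of $y$ --- but the intermediate claim should be corrected, and the orientation bookkeeping for the reflected rays $Z_1\leftrightarrow Z_2$, $Z_3\leftrightarrow Z_4$ should be spelled out before the uniqueness step. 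Since you ultimately defer to Appendix \ref{secB} in the write-up, the proof as proposed is sound.
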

\section{Asymptotic analysis in fast decay sector $\mathcal F$}\label{sec5}

Finally, we will study the long-time asymptotic behavior of solution to Equation \eqref{SS} in the fast decay region $\mathcal F$ defined by
\be\label{5.1}
\mathcal F=\{(x,t)\in\bfR^2|t>1,\  Nt<-x,\ t\to\infty\},\ N \ \text{constant}.
\ee
In this region, the signature table for real part of phase function $\Phi(x,t;\xi)$ is shown in Figure \ref{fig8}.
\begin{figure}[htbp]
  \centering
  \includegraphics[width=3.5in]{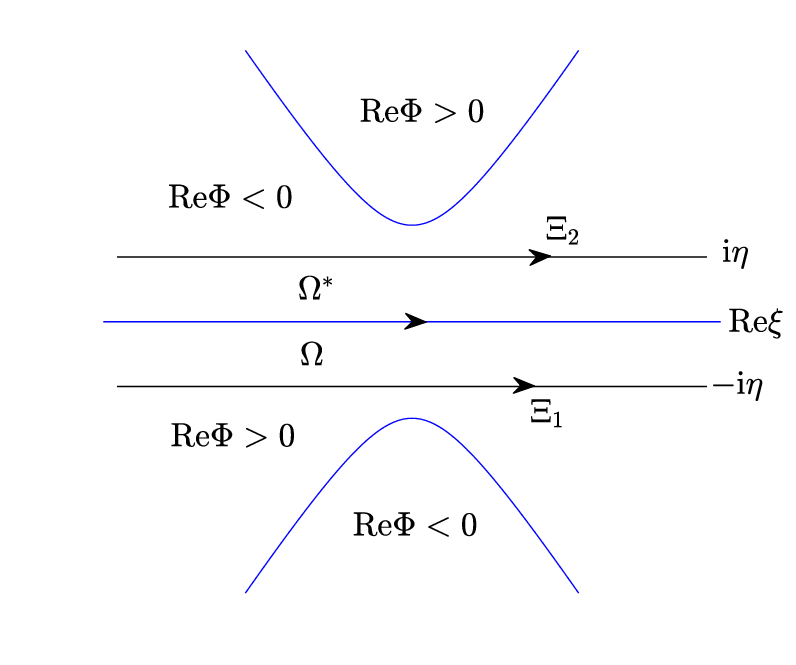}
  \caption{The contour $\Xi_1\cup\Xi_2$ and open sets $\Omega$, $\Omega^*$.}\label{fig8}
\end{figure}
\subsection{Transformations}
Define open sets
\be
\Omega=\{\xi\in\bfC|\text{Im}\xi\in(-\eta,0)\},\quad \Omega^*=\{\xi\in\bfC|\text{Im}\xi\in(0,\eta)\},
\ee
as shown in Figure \ref{fig8}. Then, according to the analysis of previous sections, we can similarly obtain the analytic decomposition of $\gamma(\xi)$: $\gamma(\xi)=\gamma_a(x,t;\xi)+\gamma_r(x,t;\xi)$. Moreover, the function $\gamma_{a}(x,t;\xi)$ is defined and continuous for $\xi\in\bar{\Omega}$ and analytic in $\Omega$, the $L^1, L^2$ and $L^\infty$ norms of the function $\gamma_{r}(x,t;\cdot)$ on $\bfR$ are $O(t^{-3/2})$ as $t\rightarrow\infty$. In order to perform the contour deformation, we would like to define the lines $\Xi_1=\{\xi\in\bfC|\text{Im}\xi=-\eta\}$, $\Xi_2=\{\xi\in\bfC|\text{Im}\xi=\eta\}$, and the orientation is depicted in Figure \ref{fig8}. Then, let us perform the transform
\be
M^{(1)}(x,t;\xi)=M(x,t;\xi)\times\left\{
\begin{aligned}
&\begin{pmatrix}
1 & \gamma_a(x,t;\xi)\e^{-t\Phi(x,t;\xi)}\\
\textbf{0}_{3\times1} & \mathbb{I}_{3\times3}
\end{pmatrix},\quad\ \ \xi\in \Omega,\\
&\begin{pmatrix}
1 & \textbf{0}_{1\times3}\\
-\gamma^\dag_a(x,t;\xi^*)\e^{t\Phi(x,t;\xi)} & \mathbb{I}_{3\times3}
\end{pmatrix},\quad \xi\in \Omega^*,\\
&\mathbb{I}_{4\times4},\qquad\qquad\qquad\qquad\quad\qquad\,\,\ \ \ \text{elsewhere}.
\end{aligned}
\right.
\ee
Hence, the matrix RH problem for $M^{(1)}(x,t;\xi)$ is as follows:\\
$\bullet$ $M^{(1)}(x,t;\xi)$ is analytic for $\xi\in\bfC\setminus(\bfR\cup\Xi_1\cup\Xi_2)$;\\
$\bullet$ The continuous boundary values $M_\pm(x,t;\xi)$ at $\bfR\cup\Xi_1\cup\Xi_2$ satisfy the jump condition
$M^{(1)}_+(x,t;\xi)=M^{(1)}_-(x,t;\xi)J^{(1)}(x,t;\xi)$;\\
$\bullet$ $M^{(1)}(x,t;\xi)=\mathbb{I}_{4\times4}+O(\xi^{-1})$, as $\xi\to\infty$;\\
where the jump matrix
\be
J^{(1)}(x,t;\xi)=\left\{
\begin{aligned}
&\begin{pmatrix}
1 & \gamma_a(x,t;\xi)\e^{-t\Phi(x,t;\xi)}\\
\textbf{0}_{3\times1} & \mathbb{I}_{3\times3}
\end{pmatrix},\quad \xi\in \Xi_1,\\
&\begin{pmatrix}
1 & \textbf{0}_{1\times3}\\
\gamma^\dag_a(x,t;\xi^*)\e^{t\Phi(x,t;\xi)} & \mathbb{I}_{3\times3}
\end{pmatrix},\quad\xi\in \Xi_2,\\
&\begin{pmatrix}
1 & \gamma_r(x,t;\xi)\e^{-t\Phi(x,t;\xi)}\\
\textbf{0}_{3\times1} & \mathbb{I}_{3\times3}
\end{pmatrix}
\begin{pmatrix}
1 & \textbf{0}_{1\times3}\\
\gamma^\dag_r(x,t;\xi^*)\e^{t\Phi(x,t;\xi)} & \mathbb{I}_{3\times3}
\end{pmatrix},\quad \xi\in\bfR.
\end{aligned}
\right.
\ee
\subsection{Find asymptotic formula }
Now $J^{(1)}(x,t;\xi)$ decays exponentially fast to the identity matrix $\mathbb{I}_{4\times4}$ as $t\to\infty$ on the contours $\Xi_1\cup\Xi_2$. Set $\omega=J^{(1)}-\mathbb{I}_{4\times4}$. Therefore, one can find that for $1\leq n\leq\infty$
\begin{align}
&\|\omega\|_{L^n(\bfR)}\leq Ct^{-\frac{3}{2}},\\
&\|\omega\|_{L^n(\Xi_1\cup\Xi_2)}\leq C\e^{-ct},
\end{align}
which immediately yields that
\be
\begin{aligned}
\lim_{\xi\rightarrow\infty}\xi\left(M(x,t;\xi)-\mathbb{I}_{4\times4}\right)
&=\lim_{\xi\rightarrow\infty}\xi\left(M^{(1)}(x,t;\xi)-\mathbb{I}_{4\times4}\right)\\
&=-\frac{1}{2\pi\ii}\int_{\bfR\cup\Xi_1\cup\Xi_2}(\mu\omega)(x,t;\xi)\dd \xi=O\left(t^{-\frac{3}{2}}\right),
\end{aligned}
\ee
where $\mu(x,t;\xi)$ is defined by $\mu=\mathbb{I}_{4\times4}+(1-\hat{C}_{\omega})^{-1}
\hat{C}_{\omega}\mathbb{I}_{4\times4}$ and obeys
\be
\|\mu(x,t;\cdot)-\mathbb{I}_{4\times4}\|_{L^2(\bfR\cup\Xi_1\cup\Xi_2)}=O\left(t^{-\frac{3}{2}}\right),\quad t\rightarrow\infty.
\ee
Hence, by \eqref{2.22}, we get the following theorem.
\begin{theorem}\label{th5.1}
In the fast decay region $\mathcal F$ described in \eqref{5.1} and under the conditions of Theorem \ref{th3.2}, the solution of the new two-component Sasa--Satsuma equation \eqref{SS} has the following long-time asymptotic behavior as $t\to\infty$
\be
u(x,t)=O\left(t^{-\frac{3}{2}}\right), \quad w(x,t)=O\left(t^{-\frac{3}{2}}\right).
\ee
\end{theorem}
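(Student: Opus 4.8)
The plan is to exploit the fact that in the fast decay region $\mathcal F$, where $x/t<-N<0$, the phase $\Phi(x,t;\xi)=8\ii\xi^3-2\ii\xi x/t$ has no real stationary points: the equation $\partial_\xi\Phi=0$ forces $\xi^2=x/(12t)<0$, so the two critical points are purely imaginary. Consequently $\mathrm{Re}\,\Phi$ keeps a fixed sign not only on $\bfR\setminus\{0\}$ but throughout a thin horizontal strip above and below $\bfR$. This is precisely the signature structure of Figure~\ref{fig8}, and it permits a single contour deformation, with no local parametrix, to convert the jump on $\bfR$ into exponentially small jumps on the lines $\Xi_1$ and $\Xi_2$.

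First I would perform the analytic decomposition $\gamma(\xi)=\gamma_a(x,t;\xi)+\gamma_r(x,t;\xi)$ in the style of Lemma~\ref{lem3.1}, with $\gamma_a$ continuous on $\bar\Omega$, analytic in $\Omega$, and satisfying a bound of the form $|\gamma_a(x,t;\xi)|\le C(1+|\xi|^2)^{-1}\e^{\frac t4|\mathrm{Re}\,\Phi(x,t;\xi)|}$, while the $L^1,L^2,L^\infty$ norms of $\gamma_r$ on $\bfR$ are $O(t^{-3/2})$; the Schwartz regularity of $u_0,w_0$ is what guarantees this decomposition in a strip of fixed width $\eta$. Then I would apply the transformation $M\mapsto M^{(1)}$ via the triangular factorization, absorbing $\gamma_a\e^{-t\Phi}$ into $\Omega$ and $\gamma_a^\dag\e^{t\Phi}$ into $\Omega^*$. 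On $\Xi_1$ one has $\mathrm{Re}\,\Phi>0$ and on $\Xi_2$ one has $\mathrm{Re}\,\Phi<0$, so multiplying the analytic bound on $\gamma_a$ by $\e^{\mp t\Phi}$ leaves a net factor decaying like $\e^{-\frac34 t|\mathrm{Re}\,\Phi|}$; since $|\mathrm{Re}\,\Phi|$ is bounded below by a positive constant uniformly along these lines, $J^{(1)}-\mathbb I_{4\times4}$ is $O(\e^{-ct})$ there, whereas on $\bfR$ only the remainder $\gamma_r$ survives, giving $O(t^{-3/2})$.

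With $\omega=J^{(1)}-\mathbb I_{4\times4}$ controlled in every $L^n$ norm by $Ct^{-3/2}$ on $\bfR$ and $C\e^{-ct}$ on $\Xi_1\cup\Xi_2$, the RH problem for $M^{(1)}$ is a small-norm problem: the operator $\hat C_\omega$ has norm $O(t^{-3/2})$, so $1-\hat C_\omega$ is invertible for large $t$ and $\mu=\mathbb I_{4\times4}+(1-\hat C_\omega)^{-1}\hat C_\omega\mathbb I_{4\times4}$ obeys $\|\mu-\mathbb I_{4\times4}\|_{L^2}=O(t^{-3/2})$. Estimating $\lim_{\xi\to\infty}\xi(M-\mathbb I_{4\times4})=-\tfrac1{2\pi\ii}\int(\mu\omega)\,\dd\xi=O(t^{-3/2})$ and then feeding this into the reconstruction formula \eqref{2.22} yields $u(x,t)=O(t^{-3/2})$ and $w(x,t)=O(t^{-3/2})$.

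Structurally this is the easiest of the three sectors precisely because the absence of a stationary phase point removes any need for a parabolic-cylinder or Painlev\'e parametrix, so the whole argument collapses to a single deformation plus small-norm analysis. The only step requiring genuine care is verifying that $\mathrm{Re}\,\Phi$ retains a single sign across the entire strip $\{|\mathrm{Im}\,\xi|\le\eta\}$ and is bounded away from zero there; a short computation shows that for $\xi=a\mp\ii\eta$ both $\mathrm{Re}(8\ii\xi^3)$ and $\mathrm{Re}(-2\ii\xi x/t)$ are favorable once $\eta$ is chosen small relative to $N$, so the sign works out robustly. After this geometric bookkeeping, every remaining estimate is routine.
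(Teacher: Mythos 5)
Your proposal is correct and follows essentially the same route as the paper's Section~\ref{sec5}: the same strips $\Omega$, $\Omega^*$ and lines $\Xi_1$, $\Xi_2$, the same analytic decomposition $\gamma=\gamma_a+\gamma_r$, the same triangular deformation $M\mapsto M^{(1)}$, and the identical small-norm estimates $\|\omega\|_{L^n(\bfR)}\leq Ct^{-3/2}$, $\|\omega\|_{L^n(\Xi_1\cup\Xi_2)}\leq C\e^{-ct}$ feeding into the reconstruction formula \eqref{2.22}. Your explicit verification that $\mathrm{Re}\,\Phi=2b\bigl(-12a^2+4b^2+x/t\bigr)$ keeps a uniform sign on each strip once $\eta<\sqrt{N}/2$ is a welcome detail the paper leaves to the signature table of Figure~\ref{fig8}.
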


\appendix
\section{ Proof of Theorem \ref{th3.1}}\label{secA}
The proof of Theorem \ref{th3.1} relies on deriving an explicit formula for the solution $M^X$ in terms of parabolic cylinder functions. First, note that the jump matrix $J^X$ obeys the symmetry
\be\label{A.1}
J^X(\rho;z)=\left(J^X\right)^\dag(\rho;z^*).
\ee
It then follows that the RH problem for $M^X(\rho;z)$ admits a Zhou's vanishing lemma \cite{ZX}, as a result, there exists a unique solution $M^X(\rho;z)$ which admits an expansion of the form \eqref{3.4} with respect to $z$.

To find the leading-order coefficient of large $z$ asymptotic behavior for the solution $M^X$, we let
\be
\Psi(z)=M^X(z)z^{\ii\nu\sigma}\e^{\frac{\ii z^2}{4}\sigma},
\ee
where we suppress the $\rho$ dependence for clarity. It follows that $\frac{\dd\Psi(z)}{\dd z}\Psi^{-1}(z)$ has no jump discontinuity along any of the rays. On the other hand, one find as $z\to\infty$,
\be
\frac{\dd\Psi(z)}{\dd z}\Psi^{-1}(z)=\frac{\ii }{2}\sigma z-\frac{\ii}{2}[\sigma,M_1^X]+O\left(z^{-1}\right).
\ee
Therefore, Liouville's argument implies that
\be\label{A.4}
\frac{\dd\Psi(z)}{\dd z}-\frac{\ii z}{2}\sigma\Psi(z)=\beta\Psi(z),
\ee
where
\be\label{A.5}
\beta=-\frac{\ii}{2}[\sigma,M_1^X]=\begin{pmatrix}
0~& \beta_{12}\\
\beta_{21}~& \mathbf{0}_{3\times3}\\
\end{pmatrix}.
\ee
Here we write a $4\times4$ matrix $A$ as a block form
\be
A=\begin{pmatrix}
A_{11} & A_{12}\\
A_{21} & A_{22}
\end{pmatrix}
\ee
with $A_{11}$ is scalar.
Particularly, we have
\be
(M_1^X)_{12}=-\ii\beta_{12},\quad (M_1^X)_{21}=\ii\beta_{21}.
\ee
The symmetries \eqref{A.1} of $J^X$ together with the uniqueness of the solution of the RH problem imply the following symmetry for $M^X$:
\be
\left[M^X(z)\right]^{-1}=\left[M^X(z^*)\right]^\dag,
\ee
which further yields that
\be
\beta_{21}=-\beta^\dag_{12}.
\ee

Next, considering \eqref{A.4} and \eqref{A.5}, we can obtain
\begin{align}
&\frac{\dd^2\Psi_{11}(z)}{\dd z^2}+\left(\frac{\ii}{2}+\frac{z^2}{4}-\beta_{12}\beta_{21}\right)\Psi_{11}(z)=0,\label{A.9}\\
&\beta_{12}\Psi_{21}(z)=\frac{\dd\Psi_{11}(z)}{\dd z}+\frac{\ii}{2}z\Psi_{11}(z),\label{A.10}\\
&\frac{\dd^2\beta_{12}\Psi_{22}(z)}{\dd z^2}-\left(\frac{\ii}{2}-\frac{z^2}{4}+\beta_{12}\beta_{21}\right)\beta_{12}\Psi_{22}(z)=0,\label{A.11}\\
&\Psi_{12}(z)=\frac{1}{\beta_{12}\beta_{21}}\left(\frac{\dd\beta_{12}\Psi_{22}(z)}{\dd z}-\frac{\ii}{2}z\beta_{12}\Psi_{22}\right).\label{A.12}
\end{align}
Then, by simple change of variables, it can be shown that Equations \eqref{A.9} and \eqref{A.11} can be transformed into the parabolic cylinder equation
\be\label{A.13}
\frac{\dd^2g(k)}{\dd k^2}+\left(\frac{1}{2}-\frac{k^2}{4}+a\right)g(k)=0.
\ee
However, it is known that the solution of Equation \eqref{A.13} can be expressed as
\be
g(k)=\nu_1D_a(k)+\nu_2D_a(-k),
\ee
where $\nu_1$, $\nu_2$ are constants and $D_a(\cdot)$ is the standard parabolic cylinder function \cite{WW}. Then, denoting $a=\ii\beta_{12}\beta_{21}$, we have
\begin{align}
\Psi_{11}(z)=&\nu_1D_a(\e^{-\frac{3\ii\pi}{4}}z)+\nu_2D_a(\e^{\frac{\pi\ii}{4}}z),\\
\beta_{12}\Psi_{22}(z)=&\nu_3D_{-a}(\e^{\frac{3\ii\pi}{4}}z)+\nu_4D_{-a}(\e^{-\frac{\pi\ii}{4}}z).
\end{align}
On the other hand, it follows from \cite{WW} that as $k\rightarrow\infty$,
\be\label{A.17}
D_a(k)=\left\{\begin{aligned}
&k^a\e^{-\frac{k^2}{4}}(1+O(k^{-2})),\qquad\qquad\qquad\qquad\quad\qquad\qquad\qquad\qquad|\arg k|<\frac{3\pi}{4},\\
&k^a\e^{-\frac{k^2}{4}}(1+O(k^{-2}))-\frac{\sqrt{2\pi}}{\Gamma(-a)}\e^{a\pi\ii+\frac{k^2}{4}}k^{-a-1}
(1+O(k^{-2})),\qquad\frac{\pi}{4}<\arg k<\frac{5\pi}{4},\\
&k^a\e^{-\frac{k^2}{4}}(1+O(k^{-2}))-\frac{\sqrt{2\pi}}{\Gamma(-a)}\e^{-a\pi\ii+\frac{k^2}{4}}k^{-a-1}(1+O(k^{-2})),
\ -\frac{5\pi}{4}<\arg k<-\frac{\pi}{4},
\end{aligned}
\right.
\ee
Hence, as $\arg z\in(-\frac{3\pi}{4},-\frac{\pi}{4})$, we find that
\begin{align}
\Psi_{11}(z)&=\e^{-\frac{\pi\nu}{4}}D_a(\e^{\frac{\pi\ii}{4}}z),~a=-\ii\nu,\label{A.18}\\
\beta_{12}\Psi_{22}(z)&=\beta_{12}\e^{\frac{3\pi\nu}{4}}D_{-a}(\e^{\frac{3\pi\ii}{4}}z),\label{A.19}
\end{align}
because as $z\rightarrow\infty$,
\be
\Psi_{11}\rightarrow z^{-\ii\nu}\e^{-\frac{\ii z^2}{4}},\quad \Psi_{22}\rightarrow z^{\ii\nu}\e^{\frac{\ii z^2}{4}}\mathbb{I}_{3\times3}.
\ee
It then follows from the property of $D_a(\cdot)$
\begin{align}
\frac{\dd D_a(k)}{\dd k}+\frac{k}{2}D_a(k)-aD_{a-1}(k)=0,
\end{align}
and \eqref{A.10}, \eqref{A.12} that
\be
\begin{aligned}
\beta_{12}\Psi_{21}(z)&=a\e^{\frac{\pi(\ii-\nu)}{4}}D_{a-1}(\e^{\frac{\pi\ii}{4}}z),\\
\Psi_{12}(z)&=\beta_{12}\e^{\frac{\pi(\ii+3\nu)}{4}}D_{-a-1}(\e^{\frac{3\pi\ii}{4}}z).
\end{aligned}
\ee
Accordingly, for $\arg z\in(-\frac{\pi}{4},\frac{\pi}{4})$, we can get
\begin{align}
\Psi_{11}(z)&=\e^{-\frac{\pi\nu}{4}}D_a(\e^{\frac{\pi\ii}{4}}z),~a=-\ii\nu,\\
\beta_{12}\Psi_{22}(z)&=\beta_{12}\e^{-\frac{\pi\nu}{4}}D_{-a}(\e^{-\frac{\pi\ii}{4}}z),\\
\beta_{12}\Psi_{21}(z)&=a\e^{\frac{\pi(\ii-\nu)}{4}}D_{a-1}(\e^{\frac{\pi\ii}{4}}z),\\
\Psi_{12}(z)&=\beta_{12}\e^{-\frac{\pi(3\ii+\nu)}{4}}D_{-a-1}(\e^{-\frac{\pi\ii}{4}}z).
\end{align}

Now, since across the ray $\arg z=-\frac{\pi}{4}$,
\be
\begin{aligned}
\Psi_+(z)=&M^X_+(z)z^{\ii\nu\sigma}\e^{\frac{\ii z^2}{4}\sigma}=M^X_-(z)J^X(z)z^{\ii\nu\sigma}\e^{\frac{\ii z^2}{4}\sigma}\\
=&\Psi_-(z)z^{-\ii\nu\sigma}\e^{-\frac{\ii z^2}{4}\sigma}J^X(z)z^{\ii\nu\sigma}\e^{\frac{\ii z^2}{4}\sigma}
=\Psi_-(z)\begin{pmatrix}
1 & \rho\\
\mathbf{0}_{3\times1} & \mathbb{I}_{3\times3}
\end{pmatrix},
\end{aligned}
\ee
thus,
\be
\beta_{12}\e^{-\frac{\pi(3\ii+\nu)}{4}}D_{-a-1}(\e^{-\frac{\pi\ii}{4}}z)
=\e^{-\frac{\pi\nu}{4}}D_a(\e^{\frac{\pi\ii}{4}}z)\rho
+\beta_{12}\e^{\frac{\pi(\ii+3\nu)}{4}}D_{-a-1}(\e^{\frac{3\pi\ii}{4}}z).
\ee
However, note from \cite{WW} that
\be
D_{a}(\e^{\frac{\pi\ii}{4}}z)=\frac{\Gamma(a+1)}{\sqrt{2\pi}}
\left(\e^{\frac{\ii\pi a}{2}}D_{-a-1}(\e^{\frac{3\pi\ii}{4}}z)+\e^{-\frac{\ii\pi a}{2}}D_{-a-1}(\e^{-\frac{\pi\ii}{4}}z)\right),
\ee
therefore, we can find that
\be
\beta_{12}=\frac{\Gamma(-\ii\nu)}{\sqrt{2\pi}}\e^{\frac{\pi\ii}{4}-\frac{\pi\nu}{2}}\nu\rho.
\ee
The estimate \eqref{3.6} is an consequence of the explicit solution $M^X(\rho;z)$ which is expressed in terms of $D_{\tilde{a}}(z)$.

\section{A new coupled Painlev\'e II RH problem}\label{secB}
\begin{figure}[htbp]
\centering
\includegraphics[width=3.5in]{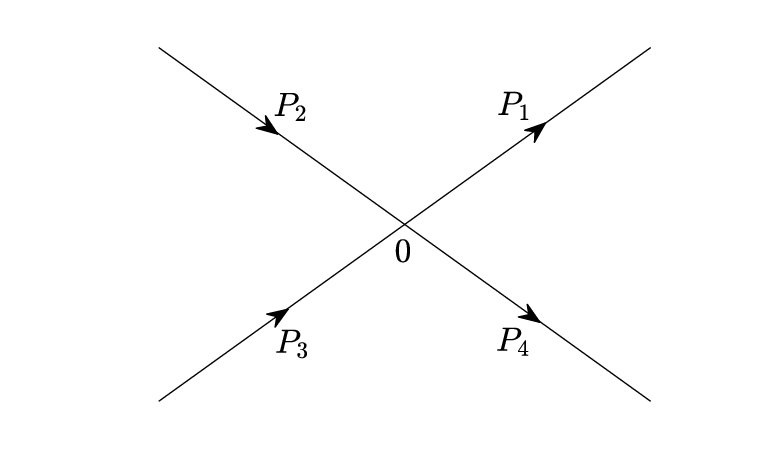}
\caption{The oriented contour $P=\cup_{j=1}^4P_j$.}\label{fig9}
\end{figure}
Let $P$ denote the contour $P=\cup_{j=1}^4P_j$ oriented to the right as in Figure \ref{fig9}, where
\be
\begin{aligned}
P_1=&\{l\e^{\frac{\pi\ii}{6}}|0\leq l<\infty\},\qquad P_2=\{l\e^{\frac{5\pi\ii}{6}}|0\leq l<\infty\},\\
P_3=&\{l\e^{-\frac{5\pi\ii}{6}}|0\leq l<\infty\},\quad P_4=\{l\e^{-\frac{\pi\ii}{6}}|0\leq l<\infty\}.
\end{aligned}
\ee
Let $p$ be a $1\times3$ complex-valued row vector with $p=p^*\sigma_1$ and define the jump matrix by
\be\label{B.2}
J^P(y;z)=\left\{
\begin{aligned}
&\begin{pmatrix}
1 & \textbf{0}_{1\times3}\\
p^\dag\e^{2\ii(\frac{4}{3}z^3-yz)} & \mathbb{I}_{3\times3}
\end{pmatrix},\quad z\in P_1\cup P_2,\\
&\begin{pmatrix}
1 & p\e^{-2\ii(\frac{4}{3}z^3-yz)}\\
\textbf{0}_{3\times1} & \mathbb{I}_{3\times3}
\end{pmatrix},\quad z\in P_3\cup P_4.
\end{aligned}
\right.
\ee
Then we consider the following model RH problem:\\
$\bullet$ $M^P(y;z)$ is analytic in $\bfC\setminus P$ with continuous boundary values on $P$;\\
$\bullet$ $M^P_+(y;z)=M^P_-(y;z)J^P(y;z)$, for $z\in P$;\\
$\bullet$ $M^P(y;z)\rightarrow \mathbb{I}_{4\times4}$, as $z\rightarrow\infty$.
\begin{lemma}\label{lemmaB.1}
The RH problem for $M^P(y;z)$ has a unique solution for each $y\in\bfR$. Moreover, there are smooth functions $\{M_j^P(y)\}$ of $y\in\bfR$ with decay as $y\rightarrow-\infty$ such that
\be\label{B.3}
M^P(y;z)=\mathbb{I}_{4\times4}+\sum_{j=1}^N\frac{M_j^P(y)}{z^j}+O(z^{-N-1}), \quad z\rightarrow\infty,
\ee
and the $(1,2)$, $(1,4)$ entries of leading coefficient $M_1^P$ can be expressed by
\be\label{B.4}
\left(M_1^P(y)\right)_{12}=u_p(y),\quad \left(M_1^P(y)\right)_{14}=\ii w_p(y),
\ee
where $u_p(y)$ and $w_p(y)$ are complex-valued and real-valued functions, respectively, and satisfy a new coupled Painlev\'e II equation
\be\label{B.5}
\begin{aligned}
u_p^{''}(y)+8\left(2|u_p(y)|^2+w_p^{2}(y)\right)u_p(y)+yu_p(y)=0,\\
w_p^{''}(y)+8\left(2|u_p(y)|^2+w_p^{2}(y)\right)w_p(y)+yw_p(y)=0.
\end{aligned}
\ee
Moreover, the function $u_p(y)$ has constant phase, that is, $\arg u_p$ is independent of $y$.
\end{lemma}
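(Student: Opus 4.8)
The plan is to prove Lemma \ref{lemmaB.1} in three stages: well-posedness of the RH problem, an isomonodromy/Lax-pair construction that produces \eqref{B.5}, and a Wronskian argument for the constant phase. For existence and uniqueness I would argue exactly as in the proof of Theorem \ref{theorem4.1}. The contour $P$ is invariant under the Schwarz reflection $z\mapsto z^*$, which interchanges $P_1\leftrightarrow P_4$ and $P_2\leftrightarrow P_3$; writing $\theta=\tfrac43 z^3-yz$ and using $\theta(z^*)=\theta(z)^*$ for $y\in\bfR$, a direct check gives $J^P(y;z)=(J^P)^\dagger(y;z^*)$ off $\bfR$, and since $P$ meets $\bfR$ only at the origin the hypotheses of Zhou's vanishing lemma \cite{ZX} are satisfied, yielding a unique $M^P$. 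Because each jump matrix is block-triangular with unit diagonal, $\det J^P\equiv1$, so $\det M^P$ is entire, bounded and $\to1$, hence $\equiv1$. The exponential decay of $J^P-\mathbb{I}_{4\times4}$ as $z\to\infty$ (the estimate already recorded in the proof of Theorem \ref{theorem4.1}) then delivers the full expansion \eqref{B.3} with coefficients $M_j^P(y)$ depending smoothly on $y$ by smooth parameter-dependence of small-norm RH solutions; the decay of $M_j^P$ as $y\to-\infty$ follows from a small-norm estimate in that regime, where $\tfrac43z^3-yz$ has no real stationary points and $M^P\to\mathbb{I}_{4\times4}$ uniformly.

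The heart of the argument is the Lax-pair construction. I would set $\Psi(y;z)=M^P(y;z)\e^{\ii\theta\sigma}$ with $\theta=\tfrac43 z^3-yz$; this conjugation absorbs the $z$- and $y$-dependence of $J^P$, so $\Psi$ has piecewise-\emph{constant} jumps, and therefore $\Psi_z\Psi^{-1}$ and $\Psi_y\Psi^{-1}$ continue to entire functions of $z$. Inserting $M^P=\mathbb{I}_{4\times4}+M_1^P/z+M_2^P/z^2+O(z^{-3})$ and invoking Liouville's theorem, these must be polynomials in $z$:
\[ \Psi_z=L\Psi,\quad L=4\ii z^2\sigma+zB_1+B_0,\qquad \Psi_y=U\Psi,\quad U=-\ii z\sigma+V_0, \]
with $B_1=4\ii[M_1^P,\sigma]$, $V_0=-\ii[M_1^P,\sigma]$, and $B_0$ determined by $M_1^P$, $M_2^P$ together with the diagonal term $-\ii y\sigma$. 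The cancellation of the negative powers of $z$ in $\Psi_z\Psi^{-1}$ is precisely what pins down these coefficients and couples $M_2^P$ to $M_1^P$, so this cancellation must be verified.

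Next I would impose the zero-curvature condition $L_y-U_z+[L,U]=0$ and match powers of $z$. Using $\sigma=\mathrm{diag}(-1,\mathbb{I}_{3\times3})$ and writing the off-diagonal data of $M_1^P$ as the row $b=((M_1^P)_{12},(M_1^P)_{13},(M_1^P)_{14})$, the symmetries \eqref{4.7} force $M_1^P$ to be anti-Hermitian and $\mathcal{A}$-symmetric, so $b=(u_p,u_p^*,\ii w_p)$ with $w_p$ real and the lower-left block equal to $-b^\dagger$. The $z^2$-identity is then automatic; the $z^1$-identity (using $[B_1,V_0]=0$) fixes the off-diagonal blocks of $B_0$ to be $4b'$ and $-4(b')^\dagger$; and after computing the diagonal blocks of $B_0$ from $M_1^P$, which equal $8\ii(bb^\dagger)+\ii y$ and $-8\ii b^\dagger b-\ii y\mathbb{I}_{3\times3}$, the $(1,2)$-block of the $z^0$-identity collapses to the single vector equation
\[ b''+8(bb^\dagger)b+yb=0. \]
Since $bb^\dagger=2|u_p|^2+w_p^2$, reading off the first and third components reproduces exactly \eqref{B.5}; the remaining scalar and $3\times3$ diagonal relations in the $z^0$-identity are consistency conditions compatible with this.

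Finally, the constant phase of $u_p$ follows from the first equation of \eqref{B.5}, which has the form $u_p''=f(y)u_p$ with $f=-\big(8(2|u_p|^2+w_p^2)+y\big)$ real; hence the Wronskian $W=u_p'\overline{u_p}-u_p\overline{u_p'}$ obeys $W'=0$, and the decay of $u_p$ as $y\to-\infty$ forces $W\equiv0$, i.e. $\mathrm{Im}(u_p'\overline{u_p})=0$ and $\tfrac{\dd}{\dd y}\arg u_p=0$. I expect the main obstacle to be the bookkeeping in the Lax-pair step: verifying the cancellation of negative powers of $z$ (which ties $M_2^P$ to $M_1^P$) and evaluating the diagonal blocks of $B_0$ correctly, so that the $4\times4$ compatibility closes into precisely the scalar–vector system \eqref{B.5} rather than a larger coupled system. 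The reductions coming from \eqref{4.7} are indispensable throughout and must be tracked carefully.
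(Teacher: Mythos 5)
Your proposal is correct and follows essentially the same route as the paper's Appendix \ref{secB}: Zhou's vanishing lemma for existence and uniqueness, the conjugation $\Psi=M^P\e^{\ii(\frac{4}{3}z^3-yz)\sigma}$ plus Liouville's theorem to produce the polynomial Lax pair, the zero-curvature identity to close into the vector equation $b''+8(bb^\dagger)b+yb=0$, and a vanishing first integral for the constant phase (your Wronskian $W=u_p'\overline{u_p}-u_p\overline{u_p'}$ is exactly $2\ii f^2\alpha'$ in the paper's polar variables $u_p=f\e^{\ii\alpha}$, annihilated by the same decay argument). Two harmless deviations: the symmetry \eqref{B.16} actually forces $b=(u_p,\,-u_p^*,\,\ii w_p)$ rather than your $(u_p,\,u_p^*,\,\ii w_p)$, a sign slip that affects neither $bb^\dagger=2|u_p|^2+w_p^2$ nor \eqref{B.5}; and where you read the diagonal blocks of $B_0$ algebraically off the $z^0$ coefficient of $\Psi_z\Psi^{-1}$ (equivalently, off the diagonal part of \eqref{B.10}), the paper instead obtains the same relations \eqref{B.21} by integrating \eqref{B.18} and \eqref{B.20} using decay at $y=-\infty$ --- your variant is equally valid and in fact dispenses with the decay hypothesis at that particular step.
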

\begin{proof}
The symmetry
\be\label{B.6}
J^P(y;z)=\left[J^P(y;z^*)\right]^\dag,
\ee
implies that the jump condition and jump matrix in RH problem for $M^P(y;z)$ satisfy the hypotheses of Zhou's vanishing lemma \cite{ZX}. Therefore, the existence and uniqueness of $M^P(y;z)$ immediately follow, and hence the expansion \eqref{B.3}. On the other hand, by the standard nonlinear steepest descent analysis, the coefficients $M_j^P(y)$ have exponential decay as $y\rightarrow-\infty$.

Let $\psi(y;z)=M^P(y;z)\e^{\ii(\frac{4}{3}z^3-yz)\sigma}$. Then, the function $\mathcal{B}(y;z)=\psi_y(y;z)\psi^{-1}(y;z)$ is an entire function of $z$ by Liouville's theorem. Thus, one can find
\be\label{B.7}
\mathcal{B}=(M_y^P-\ii zM^P\sigma)\left(M^P\right)^{-1}=-\ii\sigma z+\ii[\sigma,M_1^P]\doteq\mathcal{B}_1z+\mathcal{B}_0.
\ee
Similarly, the function $\mathcal{C}(y;z)=\psi_z(y;z)\psi^{-1}(y;z)$ is entire, and hence,
\be\label{B.8}
\begin{aligned}
\mathcal{C}=&\left(M_z^P+\ii(4z^2-y)M^P\sigma\right)\left(M^P\right)^{-1}\\
=&4\ii\sigma z^2-4\ii[\sigma,M_1^P]z-4\ii[\sigma,M_{2}^P]+4\ii[\sigma,M_1^P]M_1^P-\ii y\sigma\\
\doteq&\mathcal{C}_2z^2+\mathcal{C}_1z+\mathcal{C}_0.
\end{aligned}
\ee
On the other hand, by \eqref{B.7}, we also have
\be\label{B.9}
M_y^P-\ii zM^P\sigma=\mathcal{B}M^P,
\ee
and then inserting the expansion \eqref{B.3} into \eqref{B.9}, one can obtain
\be\label{B.10}
\ii[\sigma,M_2^P]=\ii[\sigma,M_1^P]M_1^P-M_{1y}^P,
\ee
that is,
\be
\mathcal{C}_0=4M_{1y}^P-\ii y\sigma.
\ee
The definitions of $\mathcal{B}$ and $\mathcal{C}$ yields that the function $\psi$ admits the Lax equations
\be\label{B.12}
\left\{
\begin{aligned}
\psi_y=&\mathcal{B}\psi,\\
\psi_z=&\mathcal{C}\psi.
\end{aligned}
\right.
\ee
Then the compatibility condition of \eqref{B.12} implies that
\be\label{B.13}
4M^P_{1yy}-4\ii[\sigma,M_1^P]M_{1y}^P+4\ii M_{1y}^P[\sigma,M_1^P]-y[\sigma,M_1^P]\sigma+y\sigma[\sigma,M_1^P]=0.
\ee
However, the symmetric relation \eqref{B.6} and
\be\label{B.14}
J^P(y;z)=\mathcal{A}\left[J^P(y;-z^*)\right]^*\mathcal{A}
\ee
implies that the solution of RH problem for $M^P(y;z)$ satisfies
\be\label{B.15}
M^P(y;z)=\left[\left(M^P\right)^\dag(y;z^*)\right]^{-1}=\mathcal{A}\left[M^P(y;-z^*)\right]^*\mathcal{A}.
\ee
Hence, the leading order coefficient $M_1^P(y)$ obeys
\be\label{B.16}
-\left[M_1^P(y)\right]^\dag=M_1^P(y)=-\mathcal{A}\left[M_1^P(y)\right]^*\mathcal{A}.
\ee
For convenience, we write
\be\label{B.17}
M_1^P(y)=\begin{pmatrix}
\left(M_1^P(y)\right)_{11} & \left(M_1^P(y)\right)_{12}\\[4pt]
\left(M_1^P(y)\right)_{21} & \left(M_1^P(y)\right)_{22}
\end{pmatrix},
\ee
where $\left(M_1^P(y)\right)_{11}$ is scalar.
Now, substituting \eqref{B.17} into \eqref{B.13}, using the first relation in \eqref{B.16}, we have
\begin{align}
&\left(M_1^P(y)\right)''_{11}-2\ii\left(M_1^P(y)\right)_{12}\left[\left(M_1^P(y)\right)'_{12}\right]^\dag
-2\ii\left(M_1^P(y)\right)'_{12}\left(M_1^P(y)\right)^\dag_{12}=0,\label{B.18}\\
&\left(M_1^P(y)\right)''_{12}+2\ii\left(M_1^P(y)\right)_{12}\left(M_1^P(y)\right)'_{22}
-2\ii\left(M_1^P(y)\right)'_{11}\left(M_1^P(y)\right)_{12}+y\left(M_1^P(y)\right)_{12}=0,\label{B.19}\\
&\left(M_1^P(y)\right)''_{22}+2\ii\left(M_1^P(y)\right)_{12}^\dag\left(M_1^P(y)\right)'_{12}
+2\ii\left[\left(M_1^P(y)\right)'_{12}\right]^\dag\left(M_1^P(y)\right)_{12}=0.\label{B.20}
\end{align}
Since $M^P_1(y)$ and its derivatives decay as $y\to-\infty$, it follows from \eqref{B.18} and \eqref{B.20} that
\be\label{B.21}
\begin{aligned}
\left(M_1^P(y)\right)'_{11}=&2\ii\left[\left(M_1^P(y)\right)_{12}\left(M_1^P(y)\right)^\dag_{12}\right],\\ \left(M_1^P(y)\right)'_{22}=&-2\ii\left[\left(M_1^P(y)\right)^\dag_{12}\left(M_1^P(y)\right)_{12}\right].
\end{aligned}
\ee
Inserting \eqref{B.21} into \eqref{B.19}, we get
\be\label{B.22}
\left(M_1^P(y)\right)''_{12}+8\left(M_1^P(y)\right)_{12}\left[\left(M_1^P(y)\right)^\dag_{12}\left(M_1^P(y)\right)_{12}\right]
+y\left(M_1^P(y)\right)_{12}=0.
\ee
Finally, according to the second symmetry in \eqref{B.16}, we find $\left(M_1^P(y)\right)_{12}=-\left[\left(M_1^P(y)\right)_{12}\right]^*\sigma_1$, thus, we can write
\be\label{B.23}
\left(M_1^P(y)\right)_{12}=\begin{pmatrix}
u_p(y) & -u^*_p(y)  & \ii w_p(y)
\end{pmatrix},
\ee
where $u_p(y)$ and $w_p(y)$ are complex-valued and real-valued functions, respectively. Substituting \eqref{B.23} into \eqref{B.22}, one immediately obtain
\be\label{B.24}
\begin{aligned}
u_p^{''}(y)+8\left(2|u_p(y)|^2+w_p^{2}(y)\right)u_p(y)+yu_p(y)=0,\\
w_p^{''}(y)+8\left(2|u_p(y)|^2+w_p^{2}(y)\right)w_p(y)+yw_p(y)=0.
\end{aligned}
\ee
Writing $u_p(y)=f(y)\e^{\ii\alpha(y)}$ with $f(y)$, $\alpha(y)$ being real functions, then the first equation in \eqref{B.24} reduces to the following system
\begin{align}
f''+8(2f^2+w_p^2)f+yf-f(\alpha')^2=&0,\label{B.25}\\
2f'\alpha'+f\alpha''=&0.\label{B.26}
\end{align}
It then follows from \eqref{B.26}  that
\be
f^2\alpha'=C,
\ee
where $C$ is a real constant. Using this relation to eliminate $\alpha'$ from \eqref{B.25}, we find
\begin{align}
f''+8(2f^2+w_p^2)f+yf-C^2f^{-3}=0.
\end{align}
The decay of $u_p$, $w_p$ and their derivatives as $y\to-\infty$ imply that $C=0$. Therefore, $\alpha(y)=\arg u_p(y)$ is independent of $y$.

The proof of lemma is now completed.
\end{proof}

\medskip
\small{

}
\end{document}